\documentclass{article}

\usepackage[preprint]{neurips_2026}

\usepackage[utf8]{inputenc} 
\usepackage[T1]{fontenc}    
\usepackage{hyperref}       

\usepackage{url}            
\usepackage{booktabs}       
\usepackage{amsfonts}       
\usepackage{nicefrac}       
\usepackage{microtype}      
\usepackage{xcolor}         

\usepackage[final]{graphicx}
\usepackage{subcaption}
\usepackage{booktabs} 
\usepackage{comment}

\usepackage{amsmath}
\usepackage{amssymb}
\usepackage{mathtools}
\usepackage{amsthm}
\usepackage{algorithm}
\usepackage{algpseudocode}

\usepackage[capitalize,noabbrev]{cleveref}

\usepackage{framed}

\usepackage{enumitem}

\usepackage{tikz}
\usetikzlibrary{arrows.meta, positioning, calc, shapes.misc, decorations.pathreplacing,fit, backgrounds}
\definecolor{bordercolor}{RGB}{180, 180, 180}

\usepackage{float}
\usepackage{bm}
\usepackage{multirow}
\usepackage{wrapfig}

\usepackage{pifont}

\newcommand{\sn}[1]{\ensuremath{\times 10^{#1}}}

\theoremstyle{plain}
\newtheorem{theorem}{Theorem}[section]
\newtheorem{proposition}[theorem]{Proposition}
\newtheorem{lemma}[theorem]{Lemma}
\newtheorem{corollary}[theorem]{Corollary}
\theoremstyle{definition}
\newtheorem{definition}[theorem]{Definition}
\newtheorem{assumption}[theorem]{Assumption}
\theoremstyle{remark}
\newtheorem{remark}[theorem]{Remark}

\crefname{definition}{Definition}{Definitions}
\Crefname{definition}{Definition}{Definitions}

\usepackage[textsize=tiny]{todonotes}

\tikzset{
    flow/.style={->, >={Stealth[length=1.8mm, width=1.3mm]}, line width=0.6pt, black},
    flowsmall/.style={->, >={Stealth[length=1.5mm, width=1.1mm]}, line width=0.4pt, black},
    brace/.style={decorate, decoration={brace, amplitude=2.5pt, mirror}, line width=0.4pt},
    panel/.style={inner sep=0pt},
    sectitle/.style={font=\fontsize{9}{10}\selectfont\rmfamily},
    ptitle/.style={font=\fontsize{8}{10}\selectfont\rmfamily},
    psub/.style={font=\fontsize{6}{9}\selectfont\rmfamily},
}

\title{HyCOP: Hybrid Composition Operators for Interpretable Learning of PDEs}

\author{%
  Jinpai Zhao\thanks{Equal contribution. Corresponding author: \texttt{nishpan@lanl.gov}. LA-UR-26-23260.} \\
  Oden Institute\\
  University of Texas at Austin\\
  Austin, TX, USA \\
  \texttt{max.zhao@utexas.edu} \\
  \And
  Nishant Panda\footnotemark[1] \\
  Information Sciences, CAI-3\\
  Los Alamos National Laboratory\\
  Los Alamos, NM 87545, USA \\
  \texttt{nishpan@lanl.gov} \\
  \AND
  Yen Ting Lin \\
  Information Sciences, CAI-3\\
  Los Alamos National Laboratory\\
  Los Alamos, NM 87545, USA \\
  \texttt{yentingl@lanl.gov} \\
  \And
  Eirik Valseth \\
  Norwegian University of Life Sciences\\
  \AA{}s, Norway / Simula Research Lab.\\
  Oslo, Norway \\
  \texttt{eirik.valseth@nmbu.no} \\
  \AND
  Diane Oyen \\
  Information Sciences, CAI-3\\
  Los Alamos National Laboratory\\
  Los Alamos, NM 87545, USA \\
  \texttt{doyen@lanl.gov} \\
  \And
  Clint Dawson \\
  Oden Institute\\
  University of Texas at Austin\\
  Austin, TX, USA \\
  \texttt{clint.dawson@austin.utexas.edu} \\
}

\begin{document}

\maketitle

\begin{abstract}
  We introduce HyCOP, a modular framework that learns parametric PDE solution operators by composing simple modules (advection, diffusion, learned closures, boundary handling) in a query-conditioned way. Rather than learning a monolithic map, HyCOP learns a policy over short programs---which module to apply and for how long---conditioned on regime features and state statistics. Modules may be numerical sub-solvers or learned components, enabling hybrid surrogates evaluated at arbitrary query times without autoregressive rollout. Across diverse PDE benchmarks, HyCOP produces interpretable programs, delivers order-of-magnitude OOD improvements over monolithic neural operators, and supports modular transfer through dictionary updates (e.g., boundary swaps, residual enrichment). Our theory characterizes expressivity and gives an error decomposition that separates composition error from module error and doubles as a process-level diagnostic.
\end{abstract}

\section{Introduction}
\label{sec:intro}
 
Scientific machine learning (SciML) is moving beyond monolithic replacement of numerical solvers.
Across fluid dynamics, climate modeling, and multi-physics simulation, practitioners increasingly build \emph{hybrid} pipelines that embed pretrained neural surrogates alongside classical numerical modules within larger workflows~\citep{jakeman2026vvsciml,singh2017augmented}.
Verification and validation frameworks now treat such hybrid models as a first-class category~\citep{jakeman2026vvsciml}, recognizing that trustworthy SciML requires not only accurate components but principled ways to combine them.
A central question therefore remains: \emph{given a heterogeneous collection of numerical and learned components, how should a scientist compose them into a surrogate that is robust, interpretable, and modular?}
 
This paper provides one answer for a broad and practically important class of problems: \emph{spatiotemporal PDEs that admit a meaningful decomposition into constituent physical mechanisms}, for example, advection--diffusion--reaction equation, shallow-water equation, Navier--Stokes equation, and many multi-physics systems with advective, viscous, and forcing splits.
Traditionally, such systems have been modeled incrementally: a modeler starts with known or postulated physical mechanisms to compose processes that explains data; when the composition fails to explain the data, modelers use the \emph{pattern of failure} to identify, hypothesize, or discover the missing mechanism.
Monolithic neural operators discard this workflow: they learn a single black-box map from paired data, and when they fail under distribution shift, they provide little guidance on \emph{why} or \emph{what to fix}.
 
\textbf{The myth of expensive numerical solvers.}
A common motivation for monolithic surrogates is that numerical PDE solvers are slow.
But this conflates the cost of the \emph{full coupled solver} with the cost of its \emph{constituent processes}.
The full solver is expensive because it resolves the non-commutative interaction of multiple mechanisms at fine scales.
By contrast, individual process solvers---a spectral diffusion step, an upwind advection scheme, an explicit reaction update---benefit from decades of algorithmic and hardware optimization, and often run faster than a neural operator forward pass at comparable resolution.
The bottleneck is not the primitives; it is knowing how to compose them.

\subsection{Scientific Machine Learning Regimes}
These observations motivate the central thesis: \emph{encoding a PDE's process decomposition as a structural prior---and learning only the composition policy---yields surrogates that are more robust under shift, more interpretable, and more modular for transfer than monolithic alternatives.}
HyCOP is designed to address two distinct SciML needs.

\begin{wrapfigure}{r}{0.32\columnwidth}
\vspace{-1.2em}
\centering
\begin{tikzpicture}[
  scale=0.82, every node/.style={transform shape},
  box/.style={rounded corners=3pt, draw=black!60, line width=0.8pt,
              minimum width=1.55cm, minimum height=0.6cm,
              font=\fontsize{8}{9.5}\selectfont\sffamily, align=center,
              inner sep=2pt},
  arr/.style={->, >={Stealth[length=1.6mm, width=1.6mm]}, line width=0.9pt,
              black!65},
  annot/.style={font=\fontsize{6}{7}\selectfont\rmfamily, text=black!55,
                align=center, inner sep=1pt},
]
  \node[box, fill=blue!8]    (compose)  at (90:1.6)  {Compose};
  \node[box, fill=orange!12] (diagnose) at (210:1.6) {Diagnose};
  \node[box, fill=green!12]  (enrich)   at (330:1.6) {Enrich};

  \draw[arr] (compose)  to[bend right=22]
    node[annot, pos=0.5, left=1pt]  {error\\localizes}          (diagnose);
  \draw[arr] (diagnose) to[bend right=22]
    node[annot, pos=0.5, below=1pt] {add primitive\\or closure} (enrich);
  \draw[arr] (enrich)   to[bend right=22]
    node[annot, pos=0.5, right=1pt] {relearn\\policy}           (compose);
\end{tikzpicture}
\vspace{-0.3em}
\caption{\textbf{Compose--diagnose--enrich.}
HyCOP's workflow for incomplete or hybrid physics (\S\ref{sec:exp:regime_b}).}
\label{fig:cde_loop}
\vspace{-1.0em}
\end{wrapfigure} 
 
 

\textbf{Regime~A: surrogate fitting (physics is known).}
When the governing equations are fully specified, the scientist wants a fast surrogate for downstream tasks that remains accurate as initial conditions, boundary conditions, parameters, or resolution shift at test time.
HyCOP encodes each known process as a dictionary primitive and learns only the composition policy (${\sim}$50--100 parameters), yielding short programs adapted to each query's physical regime.
Because each primitive generalizes independently, the composition degrades gracefully under shift.
For example, a HyCOP policy trained on smooth shallow-water equations with periodic boundaries transfers zero-shot to dam-break shocks, wall boundaries, discontinuities, and parameter ranges never seen in training, outperforming monolithic baselines by $10{\times}$.
Swapping the boundary primitive for a wall-boundary module, without retraining, yields a further $4{\times}$ improvement (\S\ref{sec:exp:regime_a}).
 
\textbf{Regime~B: adaptation and discovery (physics is incomplete or heterogeneous).}
When the process decomposition is only partially known, or when components come from different sources, HyCOP supports a \emph{compose--diagnose--enrich} loop (Figure~\ref{fig:cde_loop}).
We demonstrate two complementary use cases on the same AD$\to$ADR benchmark:
\begin{enumerate}[leftmargin=1.5em,itemsep=1pt,topsep=2pt,label=(\alph*)]
\item \emph{Unknown missing physics.}
A scientist models advection--diffusion (AD) but encounters ADR data.
Composing the AD dictionary produces structured failure: errors concentrate where reaction dominates.
A residual closure (UNO) trained on the discrepancy is added to the dictionary, and relearning only the policy recovers accuracy while revealing where the closure activates.
\item \emph{Known missing physics, heterogeneous primitives.}
A lab already has a pretrained AD surrogate, and new data reveals missing reaction that is well characterized.
HyCOP composes the pretrained AD surrogate (learned primitive) with a textbook reaction solver (numerical primitive), learning a policy over this \emph{hybrid} dictionary without retraining either component.
\end{enumerate}
Both paths arrive at a hybrid dictionary---numerical and learned components orchestrated by a small policy---but from different scientific starting points. This is why ``Hybrid'' appears in the title.
 
Monolithic neural operators are therefore not replaced by this framework, but absorbed into it: as expensive constituent processes in Regime~A, or as learned primitives modeling missing physics in Regime~B. HyCOP orchestrates heterogeneous building blocks rather than competing with them.

\paragraph{Our contributions (overview in Figure~\ref{fig:pipeline}).}
\begin{enumerate}[leftmargin=*,itemsep=2pt,topsep=2pt]
\item \textbf{Framework.}
We introduce HyCOP, a modular framework that learns PDE solution operators as query-conditioned compositions of reusable primitives---numerical, neural, or learned closures---supporting hybrid dictionaries and multi-time prediction without autoregressive rollout (\S\ref{sec:method}).
 
\item \textbf{Theory.}
We introduce \emph{compositional operator flows} as a new hypothesis class and provide the first learning-theoretic analysis of learned query-conditioned compositions, including an error decomposition that separates \emph{splitting error} from \emph{primitive error} and doubles as a diagnostic (\S\ref{sec:theory}).
 
\item \textbf{Experiments.}
Across five benchmarks including 2D Navier--Stokes, HyCOP delivers order-of-magnitude OOD improvements over state-of-the-art monolithic baselines, with more than $25{\times}$ fewer training forward passes and over $10{\times}$ shorter wall-clock training time (\S\ref{sec:experiments}).
 
\item \textbf{Modularity.}
HyCOP supports zero-shot adaptation through dictionary updates: on the dam-break experiment, the learned policy alone handles OOD shock initial conditions and a $100{\times}$ larger grid, while swapping the boundary primitive resolves the remaining boundary mismatch.
For missing physics (AD$\to$ADR), learned residuals and hybrid numerical--neural dictionaries capture the absent reaction mechanism without retraining existing primitives (\S\ref{sec:experiments}, \S\ref{sec:ablations}).
\end{enumerate}

\begin{figure}[t]
  \centering
  \includegraphics[width=\textwidth]{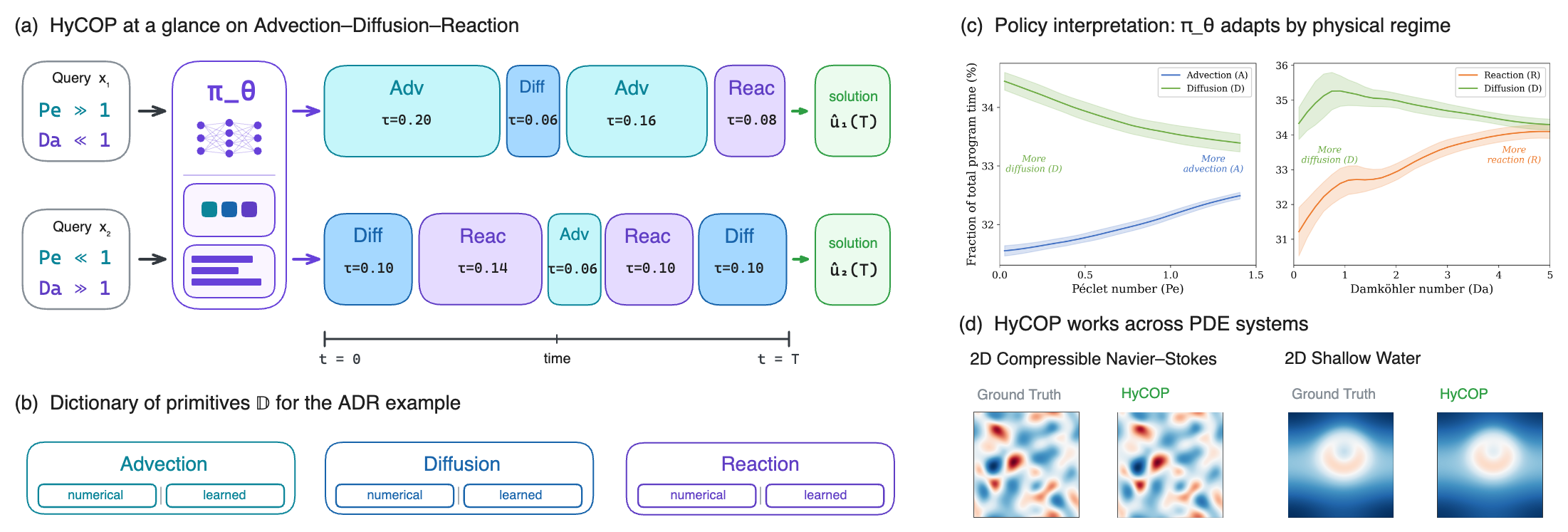}
  \caption{\textbf{HyCOP at a glance.}
  \textbf{(a)}~A small learned policy $\pi_\theta$ (${\sim}$50--100 parameters) maps a query $x=(u_0,\mu,T)$ to a program of (primitive, duration) pairs from dictionary $\mathbb{D}$, composed over $[0,T]$ to produce $\hat{u}(T)$; programs adapt to the regime (walkthrough in \S\ref{sec:method:example}).
  \textbf{(b)}~Each primitive can be numerical, learned, or both, enabling hybrid dictionaries (\S\ref{sec:exp:transfer}).
  \textbf{(c)}~$\pi_\theta$ smoothly reallocates time with regime numbers, consistent with dominant-process physics.
  \textbf{(d)}~The framework applies across PDE systems (coverage: Table~\ref{tab:benchmark_coverage}; results: \S\ref{sec:experiments}, Appendix~\ref{app:exp_configs}).}
  \label{fig:pipeline}
\end{figure}

\section{Background and Positioning}
\label{sec:background}

\textbf{Operator splitting.}
When a PDE generator decomposes as $\mathcal{F}=\sum_{i=1}^n \mathcal{F}_i$, operator splitting approximates the coupled flow $e^{t\mathcal{F}}$ by composing simpler sub-flows.
Lie--Trotter~\citep{trotter1959product} gives first-order accuracy; Strang~\citep{strang1968construction} second-order; higher-order schemes also exist~\citep{suzuki1990fractal,HairerEtAl2006}.
The error is governed by commutators $[\mathcal{F}_i,\mathcal{F}_j]$, but the schedule is fixed \emph{a priori} and cannot adapt to each query.
\textbf{Neural operators.}
Neural operators learn monolithic maps between function spaces, including FNO~\citep{li2020fourier}, DeepONet~\citep{lu2021deeponet}, Loc.\ Int.\ Diff.\ FNO~\citep{liuschiaffini2024localfno}, CNO~\citep{raonic2023cno}, PINO~\citep{li2021pino}, PDE-Refiner~\citep{lippe2023pderefiner}, and Poseidon~\citep{herde2024poseidon}.
These methods are powerful in distribution, but under shift the learned map degrades globally and the architecture does not localize failure to a specific process.
Related approaches such as operator inference~\citep{kramer2024opinf}, PINNs~\citep{raissi2019pinn}, GP/kernel surrogates~\citep{chen2021gppde,batlle2024kernelop}, and Koopman methods~\citep{bevanda2021koopman} address adjacent goals, but none learns an explicit, query-conditioned composition over reusable process primitives.
\textbf{The gap.}
Classical splitting encodes process structure but does not learn from data; neural operators learn from data but discard process structure.
HyCOP bridges this gap by learning a query-conditioned sequential composition---which process to apply and for how long---preserving the physics of each sub-flow rather than using fixed schedules or weighted blends.
\textbf{Concurrent work.}
Recent papers validate compositional structure but keep the schedule fixed:
Serrano et al.~\citep{serrano2026neuralsplitting} compose DISCO~\citep{morel2025disco} operators via fixed Strang splitting;
LegONet~\citep{zhang2026legonet} composes spectral blocks with fixed Strang and a related error decomposition;
Gopakumar et al.~\citep{gopakumar2026learning} learn physical operators under a fixed schedule;
and Koch et al.~\citep{koch2024neuralDAE} apply splitting to neural DAEs with structurally fixed decompositions.
HyCOP differs in that the learned object is the \emph{composition itself}: it learns \emph{what} to compose, \emph{for how long}, and \emph{conditioned on the regime}.\footnote{GEPS~\citep{koupai2024geps}, VENICE~\citep{wilhelm2024venice}, HINTS~\citep{zhang2024hints}, and scale-consistent training~\citep{li2025scaleconsistentlearningpartialdifferential} pursue complementary directions.}
\textbf{Positioning.}
Table~\ref{tab:positioning} contrasts the three paradigms; the capability unique to HyCOP among learned surrogates is \emph{process-level failure diagnosis}.

\section{HyCOP: Hybrid Composition Operators}
\label{sec:method}


We introduce HyCOP through a worked example (\S\ref{sec:method:example}), then present the general framework (\S\ref{sec:method:general}).

\subsection{A Worked Example: Advection--Diffusion--Reaction}
\label{sec:method:example}
Figure~\ref{fig:pipeline}a illustrates HyCOP on the 2D advection--diffusion--reaction equation $\partial_t u + \mathbf{c}\cdot\nabla u = D\,\Delta u + r\,u(1-u)$ with advection velocity $\mathbf{c}$, diffusivity $D$, and reaction rate $r$.
The dictionary contains three primitives---$\mathcal{O}_{\mathrm{adv}}$, $\mathcal{O}_{\mathrm{diff}}$, $\mathcal{O}_{\mathrm{react}}$---each implemented numerically, by a learned surrogate, or a mix of both (Figure~\ref{fig:pipeline}b).
Each can be evaluated for any duration $\tau>0$; none solves the coupled problem alone.
Given $u_0$, $|\mathbf{c}|{=}2.0$, $D{=}0.1$, $r{=}1.0$, and query time $t{=}0.5$, the policy $\pi_\theta$ maps dimensionless features (P\'eclet number, Damk\"ohler number, variance, gradient variance) to
\[
\underbrace{[\mathcal{O}_{\mathrm{adv}},\, \tau{=}0.20]}_{\text{step 1}}
\;\to\;
\underbrace{[\mathcal{O}_{\mathrm{diff}},\, \tau{=}0.06]}_{\text{step 2}}
\;\to\;
\underbrace{[\mathcal{O}_{\mathrm{adv}},\, \tau{=}0.16]}_{\text{step 3}}
\;\to\;
\underbrace{[\mathcal{O}_{\mathrm{react}},\, \tau{=}0.08]}_{\text{step 4}}.
\]
The durations sum to $t{=}0.5$, so the program allocates prediction time across processes rather than advancing with a fixed step.
Execution composes the four sub-flows to produce $u(0.5)$.
When Pe increases at test time, the policy reallocates time toward advection; each primitive remains valid, so the composition adapts where a monolithic network cannot.
If the dictionary is incomplete---e.g., the true system includes a process absent from the dictionary---HyCOP's errors localize where the missing process dominates, enabling the compose--diagnose--enrich loop described in \S\ref{sec:intro} and demonstrated in \S\ref{sec:exp:transfer}.

\subsection{General Framework: Training and Inference}
\label{sec:method:general}

\textbf{Setup and query-conditioned programs.}
We consider $\partial_t u=\mathcal{F}(u,\mu)$, $u(0){=}u_0$, on $\Omega$ with boundary condition $b$.
For a query $x{=}(\mu,u_0,b,\Omega)$ and time $t$, the solution operator is $\mathcal{S}(t;x){=}\Phi_t^{\mathcal{F}}(u_0)$.
When $\mathcal{F}\approx\sum_{j=1}^n \mathcal{F}_j$, HyCOP collects implemented sub-flows into a dictionary $\mathbb{D}{=}\{\widehat{\mathcal{O}}_1,\ldots,\widehat{\mathcal{O}}_n\}$, where each primitive may be a numerical sub-solver, a neural operator, or a learned closure.
The policy $\pi_\theta$ predicts a program $(j_1,\tau_1),\ldots,(j_k,\tau_k)$ and evaluates
\begin{equation}
\widehat{\mathcal{S}}(t;x)
= \widehat{\Psi}_\theta(x,t)
:= \widehat{\Phi}^{(j_k)}_{\tau_k}\circ\cdots\circ\widehat{\Phi}^{(j_1)}_{\tau_1}(u_0).
\label{eq:hycop_comp}
\end{equation}
HyCOP is therefore a \emph{learned integrator family}: a policy-selected composition of flows, not a weighted blend.
Multi-time queries are answered without autoregressive rollout.
Dictionary specifications and policy sizes appear in Table~\ref{tab:dictionary} (Appendix~\ref{app:exp_configs}).

\textbf{Training objective.}
We minimize the expected prediction error
\begin{equation}
\label{eq:objective}
J(\theta):=\mathbb{E}_{(x,t)\sim\rho}\big[\mathcal{L}(x,t;\theta)\big],
\qquad
\mathcal{L}(x,t;\theta)=\big\|u(t;x)-\widehat{\Psi}_{\theta}(x,t)\big\|_{L^2},
\end{equation}
where reference targets $u(t;x)$ come from a trusted solver.
The policy conditions on $(x,t)$ and scale-free features $f(x)$---dimensionless regime numbers (P\'eclet, Damk\"ohler, Froude) and coarse state statistics---rather than raw grid values, promoting resolution-invariant scheduling (\S\ref{sec:ablations}).

\textbf{Policy and optimization.}
At each program position $r\in\{1,\ldots,K_{\max}\}$, the policy outputs
(i) logits $z_r\in\mathbb{R}^n$ selecting a primitive via $\sigma_r=\mathrm{softmax}(z_r)$,
(ii) a positive duration $\tau_r=\mathrm{softplus}(a_r)$,
and (iii) an effective program length $k\in[2,K_{\max}]$.
We train with Evolution Strategies (ES)~\citep{Salimans2017} because $\theta\mapsto\mathcal{L}(x,t;\theta)$ is generally non-differentiable when primitives are legacy solvers or other black-box modules.
ES is practical here because the policy is low-dimensional (${\sim}$50--100 parameters) and population rollouts parallelize trivially.
The same hyperparameters ($M{=}500$, $\sigma{=}0.02$, 200 generations) are used across all benchmarks without per-problem tuning.
Algorithm~\ref{alg:hycop_es} and further details are in Appendix~\ref{app:training}; Figure~\ref{fig:hycop_training_es} illustrates the pipeline.

\begin{algorithm}[t]
\small
\setlength{\baselineskip}{0.95\baselineskip}
\caption{Training HyCOP with Evolution Strategies (ES)}
\label{alg:hycop_es}
\begin{algorithmic}[1]
\Statex \textbf{Inputs:} dictionary $\mathbb{D}{=}\{\mathcal{O}_1,\dots,\mathcal{O}_n\}$;\, query distribution $\rho$ over $(x,t)$;\, policy $\pi_\theta$;\, ES settings $(M,\sigma,\eta,\lambda)$.
\Statex \textbf{Outputs:} trained parameters $\theta^\star$;\, at inference, $\pi_{\theta^\star}(x,t)$ emits per-step primitive logits $z_r$, durations $\tau_r{=}\mathrm{softplus}(a_r)$, and program length $k$ --- together prescribing which $\mathcal{O}_j \in \mathbb{D}$ to apply, for how long, and in what order.
\Statex \hrulefill
\For{generation $=1,2,\dots$}
  \State Sample minibatch $\{(x_b,t_b)\}_{b=1}^B \sim \rho$;\, sample $\epsilon_i \sim \mathcal{N}(0,I)$ for $i{=}1,\dots,M$
  \State $L_i^\pm \gets \tfrac{1}{B}\sum_{b=1}^B \mathcal{L}\bigl(x_b,t_b;\,\theta\pm\sigma\epsilon_i\bigr)$ \quad // $\mathcal{L}$ runs $\widehat{\Psi}_\theta$ from Eq.~\eqref{eq:hycop_comp}
  \State $\{w_i^+,w_i^-\}_{i=1}^M \gets \mathrm{rank\text{-}shape}(\{L_i^+,L_i^-\})$;\, $g \gets \tfrac{1}{2M\sigma}\sum_{i=1}^M (w_i^+ - w_i^-)\,\epsilon_i$
  \State $\theta \gets (1-\lambda)\,(\theta - \eta\, g)$ \quad // ES step with weight decay
\EndFor
\State \Return $\theta^\star \gets \theta$
\end{algorithmic}
\end{algorithm}


\textbf{What HyCOP is not.}
For a query $x$ and dictionary $\mathbb{D}$, HyCOP produces an operator $\widehat{\Psi}_\theta(x,\cdot){:}\,u_0\mapsto\hat{u}(t)$ via Eq.~\eqref{eq:hycop_comp}---\emph{not} a mixture-of-experts: MoE blends outputs ($\hat{u}{=}\sum g_i f_i(u_0)$), HyCOP \emph{composes} flows sequentially, so order matters and durations are physical integration times.
It is \emph{not} autoregressive rollout: programs operate at the operator level over learned durations rather than fixed-$\Delta t$ marching.
And it is \emph{not} fixed splitting: HyCOP learns the schedule from data, conditioned on regime features, whereas concurrent work~\citep{serrano2026neuralsplitting,zhang2026legonet} uses fixed Strang.


\section{A Learning Theory for Compositional Operator Flows}
\label{sec:theory}

Neural operator theory gives universal approximation results for monolithic maps~\citep{kovachki2023neuraloperator,lu2021deeponet}.
Classical splitting theory analyzes fixed schedules~\citep{HairerEtAl2006,strang1968construction}.
Neither addresses a \emph{learned, query-conditioned composition} of approximate sub-flows---a distinct regime where the schedule is not fixed and the hypothesis class is not unstructured.
We develop the first learning-theoretic analysis of \emph{compositional operator flows} as a hypothesis class for PDE surrogates.

\textbf{Composite flows as a hypothesis class}
\label{sec:theory:expressivity}

Consider a stable split system (Definition~\ref{def:stable_split}, Appendix~\ref{sec:proof}): $\partial_t u{=}\mathcal{F}(u,\mu)$ with $\mathcal{F}{=}\sum_{i=1}^n \mathcal{F}_i$, each sub-problem well-posed.
A $k$-step composite flow is $\Psi^{(k)}{=}\Phi^{(j_k)}_{\tau_k}{\circ}\cdots{\circ}\Phi^{(j_1)}_{\tau_1}$; let $\mathcal{C}{=}\bigcup_{k\ge 1}\mathcal{C}_k$ be the class of all finite composite flows, and $\widehat{\mathcal{C}}$ the corresponding class with implemented primitives.

Unlike the hypothesis classes of FNO or DeepONet which are general continuous operators without process structure, HyCOP's hypothesis class $\mathcal{C}$ is restricted to sequential compositions of process-specific sub-flows, making every element interpretable and modular.
$\mathcal{C}$ is dense in the space of PDE solution operators on compact sets (Theorem~\ref{thm:existence}, Appendix~\ref{sec:proof}): no entanglement of processes is needed.
The more consequential results, a structured error decomposition and policy existence, follow. The overall theory roadmap is summarized in Figure~\ref{fig:theory_roadmap}.

\begin{figure}[h]
\centering
\begin{tikzpicture}[
  node distance=0.7cm and 0.4cm,
  box/.style={draw, rounded corners=3pt, thick, align=center, inner sep=3pt,
              minimum width=2.5cm, minimum height=0.6cm, font=\footnotesize},
  arr/.style={->, thick, >=Latex, black!70}
]
\node[box, fill=orange!12] (thm4) {Thm.~\ref{thm:fitting}: Universality\\[-1pt]\scriptsize with structure};
\node[box, fill=gray!12, below left=0.55cm and 1.6cm of thm4] (thm1) {Thm.~\ref{thm:existence}: Expressivity\\[-1pt]\scriptsize Lie algebra / comm.\ bounds};
\node[box, fill=blue!8, below=0.55cm of thm4] (thm2) {Thm.~\ref{thm:error_decomp}: Error decomp.\\[-1pt]\scriptsize splitting + primitive};
\node[box, fill=green!8, below right=0.55cm and 1.6cm of thm4] (thm3) {Thm.~\ref{thm:optimal_policy}: Policy existence\\[-1pt]\scriptsize $\varepsilon$-optimal $\theta^\star$};
\draw[arr] (thm1.north) -- (thm4.south west);
\draw[arr] (thm2.north) -- (thm4.south);
\draw[arr] (thm3.north) -- (thm4.south east);
\end{tikzpicture}
\caption{Expressivity (Thm.~\ref{thm:existence}) and error decomposition (Thm.~\ref{thm:error_decomp}) provide the approximation-theoretic foundation; policy existence (Thm.~\ref{thm:optimal_policy}) ensures learnability; together they yield universality with structure (Thm.~\ref{thm:fitting}) within the compositional class $\widehat{\mathcal{C}}$.}
\label{fig:theory_roadmap}
\end{figure}

\subsection{Structured error decomposition}
\label{sec:theory:decomp}

When exact sub-flows are replaced by implemented primitives, the error separates:
\begin{equation}
\underbrace{\|u(t){-}\widehat{\Psi}_\theta\|}_{\text{total}}
\;\le\;
\underbrace{\|u(t){-}\Psi_\theta\|}_{\substack{\text{splitting err.}\\\text{(policy)}}}
\;+\;
\underbrace{\|\Psi_\theta{-}\widehat{\Psi}_\theta\|}_{\substack{\text{primitive err.}\\\text{(modules)}}},
\label{eq:error_decomp}
\end{equation}
where $\Psi_\theta$ uses exact sub-flows and $\widehat{\Psi}_\theta$ uses implementations.

\begin{theorem}[Error decomposition]
\label{thm:error_decomp}
Under regularity/stability assumptions (Appendix~\ref{sec:proof}), splitting error is $\mathcal{O}(h^p)$ for order-$p$ schedules (constant depends on $[\mathcal{F}_i,\mathcal{F}_j]$); primitive error satisfies $\|\Psi_\theta{-}\widehat{\Psi}_\theta\|_{L^2} \le C_{\mathrm{sol}} e^{\bar{\omega}T}(\sum_j \tau_j^{q+1})\|u_0\|_{H^s}$ with $q$ the sub-solver order.
Both terms admit practical estimators (Appendix~\ref{sec:proof}).
\end{theorem}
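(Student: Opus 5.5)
The plan is to prove the two bounds separately and then combine them through the triangle inequality in \eqref{eq:error_decomp}. Both bounds use the same telescoping (Lady Windermere's fan) argument over the $k$ program steps. The working hypotheses are those of a stable split system (Definition~\ref{def:stable_split}):
\begin{itemize}[leftmargin=1.5em,itemsep=1pt,topsep=2pt]
\item each exact sub-flow is quasi-contractive, $\|\Phi^{(j)}_\tau v-\Phi^{(j)}_\tau w\|_{L^2}\le e^{\omega_j\tau}\|v-w\|_{L^2}$;
\item each implemented primitive is stable in the same sense;
\item solutions starting from $u_0\in H^s$ stay bounded in $H^s$ along every sub-flow, with growth at most $e^{\bar\omega t}$, where $\bar\omega=\max_j\omega_j$.
\end{itemize}

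\textbf{Primitive error.} Set $v_r=\Phi^{(j_r)}_{\tau_r}\circ\cdots\circ\Phi^{(j_1)}_{\tau_1}(u_0)$ (exact intermediate states) and $\hat v_r$ for the implemented analogue. Then
\[
\hat v_r-v_r=\bigl[\widehat\Phi^{(j_r)}_{\tau_r}\hat v_{r-1}-\widehat\Phi^{(j_r)}_{\tau_r}v_{r-1}\bigr]+\bigl[\widehat\Phi^{(j_r)}_{\tau_r}v_{r-1}-\Phi^{(j_r)}_{\tau_r}v_{r-1}\bigr].
\]
The first bracket is controlled by the stability of the implemented primitive. The second is the local error of an order-$q$ sub-solver applied over a single duration $\tau_r$, which I take as the assumed local truncation estimate $\le C_j\tau_r^{q+1}\|v_{r-1}\|_{H^s}$. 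I then use $H^s$-stability to get $\|v_{r-1}\|_{H^s}\le C e^{\bar\omega T}\|u_0\|_{H^s}$. Unrolling the recursion, each local error is amplified by at most $e^{\bar\omega(\tau_{r+1}+\dots+\tau_k)}\le e^{\bar\omega T}$, because $\sum\tau_r=T$. This gives $\|\Psi_\theta-\widehat\Psi_\theta\|_{L^2}\le C_{\mathrm{sol}}e^{\bar\omega T}\sum_j\tau_j^{q+1}\|u_0\|_{H^s}$, with $C_{\mathrm{sol}}=C\max_j C_j$ after absorbing the two exponentials into one redefined $\bar\omega$.

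\textbf{Splitting error.} Take the case where the program realizes a classical order-$p$ schedule with macro-step $h$. I compare one composed macro-step with $\Phi^{\mathcal F}_h$ using the Baker--Campbell--Hausdorff / Lie-derivative expansion, or in the nonlinear case Gröbner--Alekseev variation of constants, as in the standard splitting analysis \citep{HairerEtAl2006}. The local defect is $\mathcal O(h^{p+1})$, with a constant that depends on nested commutators $[\mathcal F_i,\mathcal F_j]$ evaluated on $H^{s}$-bounded data. The same telescoping, now using stability of the exact flow $\Phi^{\mathcal F}$, turns $T/h$ local defects into a global error of $\mathcal O(h^p)$. For a general learned program I would state the result through the local defect of each macro-block, so that the $\mathcal O(h^p)$ claim applies whenever the policy's blocks satisfy the order-$p$ conditions.

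\textbf{Main obstacle.} The hardest part is making the commutator bounds rigorous for unbounded, possibly nonlinear operators. For $\Delta$, $\mathbf c\cdot\nabla$ and polynomial reaction terms, each commutator costs derivatives. So the local defect must be measured in $L^2$ while being bounded by an $H^{s}$ norm with $s$ large enough, roughly $s\ge p+1$ derivatives per order. That forces the regularity hypothesis to be propagated uniformly along every sub-flow, and this is exactly where the stable-split assumption is needed. I would first verify it in the linear semigroup setting using Lie-derivative calculus, then extend to the nonlinear case by working on a bounded $H^s$ ball that all flows leave invariant up to time $T$.

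\textbf{Practical estimators.} The primitive term can be estimated by comparing each module with a refined (halved-step) solve. The splitting term can be estimated by Richardson extrapolation in $h$, or by the residual of $\Psi_\theta$ against the reference solution minus the primitive estimate.
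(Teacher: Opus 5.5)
Your proof is correct under the hypotheses you list. The splitting-error part is the same classical argument the paper relies on: a BCH/Lie-derivative local defect plus Lady Windermere's fan, as in the proof of Theorem~\ref{thm:existence} and the higher-order corollary.

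The genuine difference is the direction of the telescoping for the primitive error. The paper inserts hybrids $w^{(j)}$ that use implemented primitives on the early steps and exact sub-flows afterwards. Each local error $\Phi^{(i_j)}_{\tau_j}(\tilde v_{j-1})-\widehat\Phi^{(i_j)}_{\tau_j}(\tilde v_{j-1})$ is therefore evaluated on the \emph{implemented} trajectory and propagated by the \emph{exact} flows. That route needs an $L^2$-Lipschitz bound for the exact flows, plus $\|\tilde v_{j-1}\|_{H^s}\le e^{\tilde\omega t_{j-1}}\|u_0\|_{H^s}$ from iterating (S2), which tacitly requires $\widehat\Phi^{(i)}_t(0)=0$.

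You instead evaluate the local error on the \emph{exact} trajectory and propagate it with the \emph{implemented} primitives. Your $H^s$ control then comes directly from (R2), but you need the $\widehat\Phi^{(j)}$ to be Lipschitz in $L^2$. The paper's (S2) is an $H^s$ Lipschitz bound, so it would not close your recursion when (S1) is measured in $L^2$.

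Each route thus needs one $L^2$ stability bound that is not literally among (R2)/(S2): yours on the implementations, the paper's on the exact flows. Your route never requires regularity of the outputs of possibly learned primitives, only of the true sub-flows, which well-posedness supplies. The paper's route asks stability only of the exact flows, which is natural for dissipative PDEs, at the price of $H^s$ control of implemented outputs.

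In the splitting part your fan is again the reverse of the paper's. You propagate defects with $\Phi^{\mathcal F}$, whose $L^2$ stability is not on your list of hypotheses. The paper's Step~2 evaluates defects on the exact solution and propagates them with the composed sub-flows, which uses only the sub-flow stability you already assume.

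Two small points:
\begin{itemize}
\item No redefinition of $\bar\omega$ is needed. Bounding $\|v_{r-1}\|_{H^s}\le e^{\bar\omega t_{r-1}}\|u_0\|_{H^s}$ makes the total amplification $e^{\bar\omega(T-\tau_r)}\le e^{\bar\omega T}$, which is exactly the paper's bookkeeping $\omega(T-t_j)+\tilde\omega t_{j-1}\le\bar\omega T$.
\item Your estimators (step halving, Richardson extrapolation) go beyond the paper, which asserts that practical estimators exist but does not construct them.
\end{itemize}
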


Under shift, each sub-flow remains well-posed, so primitive error is controlled independently of the regime; the policy adapts the schedule.
When a mechanism is missing, a \emph{dictionary mismatch} term prepends the bound, yielding a three-term decomposition (mismatch $+$ splitting $+$ primitive) where each term maps to a distinct intervention: enrich the dictionary, refine the policy, or improve a module.
This is the theoretical basis for the compose--diagnose--enrich loop (\S\ref{sec:intro}).

\subsection{Near-optimal policies and universality with structure}
\label{sec:theory:policy}

\begin{theorem}[$\varepsilon$-optimal policy]
\label{thm:optimal_policy}
Let $\Lambda$ be compact, $\Theta$ a compact policy class (programs up to $K_{\max}$ steps), and assume primitive stability.
Then $J(\theta){=}\mathbb{E}[\mathcal{L}(x,t;\theta)]$ attains its infimum on $\Theta$; an $\varepsilon$-optimal $\theta^\star$ exists; and $x{\mapsto}\theta^\star(x)$ can be approximated by a neural network (Appendix~\ref{sec:proof}).
\end{theorem}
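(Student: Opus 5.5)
The plan is a Weierstrass-type compactness argument followed by a measurable-selection and approximation step. First, parametrize a program with at most $K_{\max}$ steps by a discrete part (the length $k$ and the indices $j_1,\dots,j_k$, finitely many choices) and a continuous part (the durations $\tau\in\mathbb{R}_{\ge 0}^{k}$ with $\sum_r\tau_r=t$, a simplex and hence compact). Let $\Theta$ be compact, with the map $\theta\mapsto(k,j,\tau)$ continuous. Softmax selection makes the primitive choice a continuous relaxation. Alternatively, one can treat each discrete pattern separately and take a minimum over finitely many continuous problems.

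The key analytic step is continuity of $\theta\mapsto\mathcal{L}(x,t;\theta)$ for each fixed $(x,t)$. Using primitive stability, each implemented flow $\widehat{\Phi}^{(j)}_\tau$ satisfies $\|\widehat{\Phi}^{(j)}_\tau u-\widehat{\Phi}^{(j)}_\tau v\|\le e^{\omega\tau}\|u-v\|$, and it is continuous in $\tau$ (strong continuity of the sub-flow semigroup). A telescoping argument over the $k\le K_{\max}$ factors then shows that $\widehat{\Psi}_\theta(x,t)$ depends continuously on $(j,\tau)$. The Lipschitz constants multiply to at most $e^{\bar\omega T}$. This is the same telescoping used for the primitive-error bound in Theorem~\ref{thm:error_decomp}.

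Next, for dominated convergence, bound $\mathcal{L}(x,t;\theta)\le\|u(t;x)\|+e^{\bar\omega T}\|u_0\|$ uniformly in $\theta$. This bound is integrable because $\Lambda$ is compact and $\rho$ is supported on $\Lambda\times[0,T]$ with bounded initial data. Then $J$ is continuous on the compact set $\Theta$, so it attains its infimum, and a minimizer is trivially $\varepsilon$-optimal for every $\varepsilon>0$.

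For the last claim, consider the pointwise problem $\min_{\theta}\mathcal{L}(x,t;\theta)$, which has a per-query minimizer set. By the measurable maximum theorem (Aumann/Kuratowski--Ryll-Nardzewski), there is a measurable selector $x\mapsto\theta^\star(x)$ of $\varepsilon$-minimizers. Lusin's theorem gives a continuous $\tilde\theta$ that agrees with $\theta^\star$ outside a set of $\rho$-measure less than $\delta$. Universal approximation then gives a network $\theta_{NN}$ uniformly close to $\tilde\theta$ on compact $\Lambda$. Uniform continuity of $\mathcal{L}$ in $\theta$ on compacts, together with the integrable envelope controlling the small exceptional set, shows that $\mathbb{E}\,\mathcal{L}(x,t;\theta_{NN}(x))$ is within $2\varepsilon$ of the optimum.

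The main obstacle is the discontinuity introduced by the discrete choices: program length and argmax primitive selection. A network output near a decision boundary can switch primitives and make a jump in loss. I would handle this in one of two ways. The first option is to work with the softmax-relaxed (expected-composition) objective, which is continuous. The second is to let the continuous approximant pick constant discrete patterns on the finitely many cells of a partition and to place the switching boundaries inside the small Lusin exceptional set. Because the loss is uniformly bounded, the contribution of those boundaries is then at most a multiple of $\delta$.
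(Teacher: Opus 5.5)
Your outline is correct, and its skeleton (continuity in $\theta$, compactness, a selection theorem, universal approximation) matches the paper's, but you take a genuinely different route at the decisive step.

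\textbf{The paper's route.} It asserts joint continuity of $\mathcal{L}$ on $\Lambda\times\bar\Theta$ (with $\tau_j\in[\tau_{\min},\tau_{\max}]$) and obtains pointwise minimizers. It then applies Berge's maximum theorem to get an upper hemicontinuous argmin correspondence and a Borel $\varepsilon$-optimal selector. Finally it cites Michael's theorem, but in the end simply \emph{assumes} that a continuous $\varepsilon$-optimal selection $\theta^\star_\varepsilon:\Lambda\to\bar\Theta$ exists. Uniform approximation by a network then yields a policy that is $\varepsilon$-optimal at every $x\in\Lambda$.

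\textbf{Your route.} You treat the expectation $J$ explicitly, via an integrable envelope and dominated convergence, which the paper leaves implicit. You take a measurable exact minimizer from the measurable maximum theorem. You then replace the continuous-selection step by Lusin's theorem and pay for the exceptional set with the envelope.

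\textbf{What your route gains.} It buys rigor. Michael-type selections need convex values, and $\varepsilon$-argmin sets of a nonconvex loss over programs with discrete components generally lack them. You also confront the $\arg\max$ discontinuity, which the paper's appeal to the softmax relaxation does not remove, since the executed composite still uses $\arg\max$. Your second fix is the one to keep. On the Lusin compact $K$, the discrete pattern of $\theta^\star$ is locally constant, hence constant on finitely many positively separated compact pieces. Continuous logits with a margin, approximated by a network to within half that margin, therefore reproduce the correct pattern on all of $K$.

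\textbf{What it costs.} You lose uniformity: you get $\varepsilon$-optimality in $\rho$-mean. That matches the $J$ formulation of the statement, but it cannot feed the $\sup_{x\in\Lambda}$ bound of Theorem~\ref{thm:fitting}. For that purpose no selection is needed anyway, because Theorem~\ref{thm:existence} already supplies a single time-rescaled Lie--Trotter schedule valid uniformly on $\Lambda$.

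\textbf{Minor points.}
\begin{itemize}
\item Continuity of the \emph{implemented} primitives $\widehat{\Phi}^{(j)}_\tau$ in $\tau$ is an extra hypothesis beyond (S1)--(S2). Strong continuity is a property of exact semigroup flows, not of an RK4 step or an FNO-FiLM.
\item Your envelope needs $\widehat{\Phi}^{(j)}_\tau(0)$ to be bounded.
\item Since you only need closeness in mean, dominated convergence on $K$ along a sequence of networks suffices. You therefore do not need the uniform continuity in $\theta$ that you invoke, which would require joint continuity in $x$ that you never establish.
\end{itemize}
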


\noindent Combining Theorem~\ref{thm:existence} (Appendix), Theorem~\ref{thm:error_decomp}, and Theorem~\ref{thm:optimal_policy}:

\begin{theorem}[Universality for compositional surrogates]
\label{thm:fitting}
Under the above conditions, for any $\varepsilon{>}0$ and compact $\Lambda$, $\exists$ neural policy $\pi_\phi$ with $\sup_{x\in\Lambda}\|u(t;x){-}\widehat{\Psi}_{\pi_\phi(x)}(x,t)\|_{L^2}{<}\varepsilon$.
\end{theorem}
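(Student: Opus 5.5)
We prove the statement with an $\varepsilon/3$ argument that combines the three earlier results, each absorbing one of the terms in the decomposition \eqref{eq:error_decomp}. Fix $\varepsilon>0$, a compact $\Lambda$, and the query time $t$ (or a compact range of times, treated the same way).

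\textbf{Step 1: exact-flow program for each query.} By Theorem~\ref{thm:existence}, for each $x\in\Lambda$ there is a finite composite flow $\Psi_{p(x)}$ with program $p(x)=((j_1,\tau_1),\dots,(j_k,\tau_k))$ and $k\le K_{\max}$ such that $\|u(t;x)-\Psi_{p(x)}(u_0)\|_{L^2}<\varepsilon/3$. For concreteness, take a Strang or order-$p$ schedule with step $h$ and durations summing to $t$. Theorem~\ref{thm:error_decomp} bounds the splitting error by $C h^p$, with $C$ depending on the commutators $[\mathcal{F}_i,\mathcal{F}_j]$. By the regularity and stability assumptions and compactness of $\Lambda$, $C$ is uniform in $x$. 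A single $h$, and hence a single program length $k$, therefore works for all of $\Lambda$. This uniformity is what allows us to fix $K_{\max}$ in advance.

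\textbf{Step 2: implemented primitives.} Replace the exact sub-flows by their implementations. The primitive-error bound in Theorem~\ref{thm:error_decomp} gives
\[
\|\Psi_{p(x)}-\widehat{\Psi}_{p(x)}\|_{L^2}\le C_{\mathrm{sol}}e^{\bar\omega T}\Big(\sum_j\tau_j^{q+1}\Big)\|u_0\|_{H^s}.
\]
Here $\sum_j\tau_j^{q+1}\le t\,h^{q}$ and $\|u_0\|_{H^s}$ is bounded on $\Lambda$. Shrinking $h$ further, or refining the sub-solver, makes this term smaller than $\varepsilon/3$ uniformly in $x$.

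\textbf{Step 3: neural approximation of the policy.} Theorem~\ref{thm:optimal_policy} provides an ($\varepsilon$-optimal) program map $x\mapsto\theta^\star(x)$ in the compact class $\Theta$. We then need a neural $\pi_\phi$ with $\widehat{\Psi}_{\pi_\phi(x)}$ within $\varepsilon/3$ of $\widehat{\Psi}_{\theta^\star(x)}$. The plan is to keep the discrete part of the program, namely the primitive indices and the length $k$, fixed and independent of $x$ as in Step 1. Only the durations then vary with $x$. We require them to depend continuously on $x$, using the continuity furnished by Theorem~\ref{thm:optimal_policy}, or simply constant durations $\tau_j\propto t$. A continuous map on a compact set is uniformly approximated by a neural network under the standard universal approximation theorem. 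The softplus output keeps the approximated durations positive.

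Primitive stability makes $\widehat{\Psi}$ Lipschitz in the durations, with constant at most $k\,e^{\bar\omega T}$ times a bound on $\|\widehat{\mathcal{F}}_j(u)\|$ along trajectories, which is finite by compactness. Uniform closeness of the durations therefore yields uniform closeness of the outputs. One point needs care: the durations must still sum to $t$. We enforce this by normalizing the network outputs to sum to $t$, and normalization is itself a continuous map.

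\textbf{Main obstacle.} The hard part is Step 3's handling of discrete choices. The softmax/argmax selection and the choice of $k$ are discontinuous in $x$, so a neural network cannot in general approximate a varying discrete program uniformly. Our first line of attack is to avoid the issue entirely: the uniform schedule from Steps 1--2 uses the same index sequence for every $x$. If a genuinely query-dependent index sequence were required, the fallback is to partition $\Lambda$ into finitely many cells on which the discrete choice is constant, using compactness and a finite subcover of neighbourhoods where a fixed program works. One would then use the $\varepsilon/3$ slack to absorb errors on thin transition layers, or use softmax with large temperature on each cell. Combining the three $\varepsilon/3$ bounds with the triangle inequality gives the claimed $\sup_{x\in\Lambda}$ bound.
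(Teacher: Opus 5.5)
Your skeleton matches the paper's proof: a uniform exact-flow program from Theorem~\ref{thm:existence}, the primitive term of Theorem~\ref{thm:error_decomp} driven to zero by refining the steps, and a neural policy that realizes the program, all combined by the triangle inequality. However, the point you single out as needing care is actually an error. This is the requirement that all durations of the program sum to $t$: the ``durations summing to $t$'' in Step~1, the bound $\sum_j\tau_j^{q+1}\le t\,h^q$ in Step~2, and the normalization in Step~3. A splitting schedule approximates $\Phi^{\mathcal F}_t$ only if \emph{each} sub-flow runs for total time $t$. The Lie--Trotter program $(\Phi^{(n)}_{t/N}\circ\cdots\circ\Phi^{(1)}_{t/N})^N$ used to prove Theorem~\ref{thm:existence} has total duration $nt$, and Strang with two operators has total duration $2t$. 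If you force the total to be $t$, the per-process totals $T_i$ satisfy $\sum_i T_i=t$. Refining the steps, with the same allocation in every cycle, then drives the composition toward the time-$t$ flow of the reweighted field $\sum_i (T_i/t)\mathcal F_i$, whose weights sum to one; for equal allocation this is $\Phi^{\mathcal F}_{t/n}$, not $u(t;x)$. So with your normalization in place, Step~1's $\varepsilon/3$ claim fails in general for $n\ge 2$. The fix is to drop the constraint. The strategy class of Definition~\ref{def:strategy} does not impose it; the main text's ``durations sum to $t$'' describes the implementation, not the hypothesis class used in the theory. Step~2 then reads $\sum_j\tau_j^{q+1}=nt\,h^{q}$, which still vanishes as $h\to0$ for $q\ge1$.

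Once repaired, your Step~3 is cleaner than the paper's. The paper obtains the neural policy from Theorem~\ref{thm:optimal_policy}, whose proof \emph{assumes} a continuous $\varepsilon$-optimal selection and then applies universal approximation. You observe instead that the Lie--Trotter program is the same for every $x\in\Lambda$, since it depends only on $t$ and $N$. A constant network therefore suffices: constant logits fix the index sequence, and the softplus bias outputs $t/N$ exactly. This route needs no selection theorem and no continuity of $\widehat\Phi^{(i)}_\tau$ in $\tau$. That matters, because your Lipschitz-in-duration estimate does not follow from (S1)--(S2): (S2) is Lipschitz in the state only, and an implemented primitive need not be the flow of any field $\widehat{\mathcal F}_j$. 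Likewise, the cell-partition fallback in your last paragraph would not give a $\sup$ bound, since errors on thin transition layers cannot be absorbed in a uniform norm; the constant program makes it unnecessary. As in the paper, $K_{\max}$ must be allowed to depend on $\varepsilon$, since $k=nN$ grows as $h\to 0$.
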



\textbf{Universality with structure.}
This is \emph{not} a corollary of standard universality, which guarantees only that \emph{some} black-box operator approximates the map.
Theorem~\ref{thm:fitting} says universality holds within $\widehat{\mathcal{C}}$, where every step has physical meaning and errors decompose as in~\eqref{eq:error_decomp}---universality with built-in modularity, interpretability, and process-level diagnostics.
LegONet~\citep{zhang2026legonet} has a related decomposition for fixed Strang; our results are the first for non-fixed, learned, query-conditioned policies.


\section{Experiments}
\label{sec:experiments}

The theory predicts graceful OOD degradation, successful transfer via dictionary updates, and interpretable programs.
We test these predictions across two regimes mirroring the scientific workflows introduced in \S\ref{sec:intro}.

\textbf{Setup.}
We benchmark on 2D compressible Navier--Stokes (NS, PDEBench~\citep{PDEBench2022}), 2D shallow-water equations (SWE), and 2D advection--diffusion--reaction (ADR, Fisher--KPP).
Baselines include FNO~\citep{li2020fourier}, DeepONet~\citep{lu2021deeponet}, Loc.\ Int.\ Diff.\ FNO~\citep{liuschiaffini2024localfno}, U-Net~\citep{ronneberger2015unet}, PINO~\citep{li2021pino}, Poseidon~\citep{herde2024poseidon}, and classical Strang splitting on the same primitives as HyCOP.
We report HyCOP in two dictionary configurations:
HyCOP, with an all-numerical primitive dictionary, and HyCOP-Hyb, in which one or more primitives are pretrained FNO surrogates with FiLM~\citep{perez2018film} conditioning on query time and the remaining primitives are numerical.
HyCOP-Hyb appears on the NS benchmark (Table~\ref{tab:regime_a}a) and in the Regime~B adaptation experiments (\S\ref{sec:exp:regime_b}); all other results use the all-numerical HyCOP.
All methods train on the same data; HyCOP uses identical ES hyperparameters ($M{=}500$, $\sigma{=}0.02$, 200 generations) for all from-scratch training across benchmarks, without per-problem tuning. Adaptation experiments (\S\ref{sec:exp:regime_b})
use shorter ES runs from a warm-started policy; details in Appendix~\ref{app:ad_adr}.
Full PDE specifications, parameter ranges, and OOD definitions are in Appendix~\ref{app:exp_configs}; additional 1D benchmark results are in Appendix~\ref{app:1d_systems}.
\textbf{Fairness.}
All methods train on the same data distribution and sample count.
HyCOP is trained with variable query time $T$; fixed-time baselines are evaluated at the matching target time; autoregressive baselines use free rollout.
Primitives have zero learnable parameters (numerical variant); HyCOP-Hyb's FNO-FiLM primitives are pretrained on single-process data and frozen.

\subsection{Regime A: Surrogate Fitting}
\label{sec:exp:regime_a}

When the physics is fully known, HyCOP learns compositional programs that generalize across parameter regimes, initial conditions, boundary conditions, and resolutions.

 
\begin{table}[t]
\caption{\textbf{Regime~A: Surrogate fitting (known physics).}
Relative $L^2$ error ($\downarrow$) unless noted.
$^\dagger$PDEBench-reported numbers.
HyCOP uses an all-numerical primitive dictionary throughout; HyCOP-Hyb (panel a) uses a mixed dictionary of numerical and pretrained learned primitives (NS setup: Appendix~\ref{app:ns}). Baseline training details: Appendix~\ref{app:common_setup}.
All models trained on the same data; HyCOP uses identical ES hyperparameters ($M{=}500$, $\sigma{=}0.02$, 200 generations) across all benchmarks.}
\label{tab:regime_a}
\centering
 
 
\vspace{0.3em}
\noindent
\begin{minipage}[b]{0.34\textwidth}\centering
{\footnotesize\textbf{(a) 2D Compressible NS}}\\[-1pt]
{\scriptsize (PDEBench, $T{=}0.05$; nRMSE\,/\,cRMSE)}
\end{minipage}%
\hfill
\begin{minipage}[b]{0.60\textwidth}\centering
{\footnotesize\textbf{(b) Fixed-time: 2D SWE and 2D ADR}}\\[-1pt]
\mbox{}
\end{minipage}
 
\vspace{4pt}
\hrule height \heavyrulewidth
\vspace{3pt}
 
\noindent
{\setlength{\tabcolsep}{0pt}%
\begin{tabular}{@{}p{0.34\textwidth}@{\hspace{5pt}}|@{\hspace{5pt}}p{0.59\textwidth}@{}}
 
\centering
\scriptsize
\setlength{\tabcolsep}{3pt}%
\begin{tabular}{@{}lcc@{}}
Method & nRMSE & cRMSE \\
\midrule
U-Net$^\dagger$    & 5.10\sn{0}  & 3.20\sn{-2} \\
PINO               & 6.34\sn{-1} & 1.37\sn{0}  \\
FNO$^\dagger$      & 3.60\sn{-1} & 3.20\sn{-3} \\
DeepONet           & 7.91\sn{-2} & 4.76\sn{-2} \\
Strang             & 7.56\sn{-2} & 1.59\sn{-8} \\
HyCOP-Hyb          & 7.97\sn{-2} & 1.59\sn{-8} \\
\textbf{HyCOP} & \textbf{4.04\sn{-2}} & \textbf{1.59\sn{-8}} \\
\end{tabular}
 
&
 
\centering
\scriptsize
\setlength{\tabcolsep}{3pt}%
\begin{tabular}{@{}l cc cc@{}}
& \multicolumn{2}{c}{2D SWE} & \multicolumn{2}{c}{2D ADR} \\
\cmidrule(lr){2-3}\cmidrule(lr){4-5}
Method & ID & OOD & ID & OOD \\
\midrule
DeepONet              & 3.89\sn{-1} & 5.61\sn{-1} & 1.58\sn{-1} & 2.82\sn{-1} \\
FNO                   & 1.19\sn{-1} & 3.80\sn{-1} & 8.42\sn{-2} & 2.60\sn{-1} \\
PINO                  & 1.17\sn{-1} & 3.83\sn{-1} & 8.42\sn{-2} & 2.36\sn{-1} \\
Loc.\ Int.\ Diff.\ FNO & 7.49\sn{-2} & 3.54\sn{-1} & 3.15\sn{-2} & 1.89\sn{-1} \\
\textbf{HyCOP}    & \textbf{2.40\sn{-2}} & \textbf{5.00\sn{-2}} & \textbf{2.10\sn{-2}} & \textbf{2.87\sn{-2}} \\
\end{tabular}
 
\tabularnewline
\end{tabular}}
 
\vspace{3pt}
\hrule height \heavyrulewidth
 
\vspace{1.0em}
 
 
\noindent
\begin{minipage}[b]{0.46\textwidth}\centering
{\footnotesize\textbf{(c) Long-horizon: 2D SWE}}
\end{minipage}%
\hfill
\begin{minipage}[b]{0.46\textwidth}\centering
{\footnotesize\textbf{(d) Long-horizon: 2D ADR}}
\end{minipage}
 
\vspace{4pt}
\hrule height \heavyrulewidth
\vspace{3pt}
 
\noindent
{\setlength{\tabcolsep}{0pt}%
\begin{tabular}{@{}p{0.47\textwidth}@{\hspace{7pt}}|@{\hspace{7pt}}p{0.43\textwidth}@{}}
 
\centering
\tiny
\setlength{\tabcolsep}{2pt}%
\begin{tabular}{@{}l cc cc@{}}
& \multicolumn{2}{c}{5-step} & \multicolumn{2}{c}{20-step} \\
\cmidrule(lr){2-3}\cmidrule(lr){4-5}
Method & ID & OOD & ID & OOD \\
\midrule
U-Net               & 1.04\sn{0}  & 5.92\sn{-1} & 4.20\sn{0}  & 1.20\sn{0}  \\
LIDFNO              & 4.95\sn{-1} & 6.93\sn{-1} & 5.52\sn{0}  & 1.44\sn{0}  \\
Poseidon            & 3.47\sn{-1} & 6.54\sn{-1} & 8.42\sn{-1} & 9.35\sn{-1} \\
AR-LIDFNO           & 1.72\sn{-1} & 5.33\sn{-1} & 4.86\sn{-1} & 7.55\sn{-1} \\
\textbf{HyCOP}  & \textbf{1.91\sn{-2}} & \textbf{4.54\sn{-2}} & \textbf{6.94\sn{-2}} & \textbf{1.21\sn{-1}} \\
\end{tabular}
 
&
 
\centering
\tiny
\setlength{\tabcolsep}{2pt}%
\begin{tabular}{@{}l cc cc@{}}
& \multicolumn{2}{c}{5-step} & \multicolumn{2}{c}{20-step} \\
\cmidrule(lr){2-3}\cmidrule(lr){4-5}
Method & ID & OOD & ID & OOD \\
\midrule
U-Net               & 1.70\sn{-1} & 5.23\sn{-1} & 2.30\sn{-1} & 1.07\sn{0}  \\
LIDFNO              & 7.78\sn{-2} & 5.65\sn{-1} & 1.51\sn{-1} & 1.30\sn{0}  \\
AR-LIDFNO           & 8.24\sn{-2} & 3.97\sn{-1} & 1.02\sn{-1} & 5.72\sn{-1} \\
\textbf{HyCOP}  & \textbf{1.68\sn{-2}} & \textbf{2.12\sn{-2}} & \textbf{1.96\sn{-2}} & \textbf{3.78\sn{-2}} \\[4pt]
\end{tabular}
 
\tabularnewline
\end{tabular}}
 
\vspace{3pt}
\hrule height \heavyrulewidth
 
\end{table}

\subsubsection{2D Compressible Navier--Stokes (PDEBench)}
\label{sec:exp:ns}

We evaluate single-step prediction on the PDEBench 2D compressible NS benchmark ($M{=}0.1$, $\eta{=}\zeta{=}0.1$, $T{=}0.05$), predicting density, velocity $(V_x, V_y)$, and pressure from initial conditions.
HyCOP's dictionary consists of an RK4 Euler advection primitive and a spectral viscous diffusion primitive---both textbook routines with zero learnable parameters.

Table~\ref{tab:regime_a} reports results.
HyCOP achieves nRMSE $4.04 \times 10^{-2}$, outperforming Strang splitting by 47\%, DeepONet by 49\%, FNO by 89\%, and PINO by 94\%.
The improvement over Strang is the cleanest test of our central claim: same primitives, same dictionary, different schedule---the learned composition policy accounts for the entire gap.
Conservation error (cRMSE) is at machine precision ($1.59 \times 10^{-8}$) for both HyCOP and Strang, inherited directly from the numerical primitives.
No monolithic baseline provides this guarantee; DeepONet's cRMSE is six orders of magnitude worse.

\subsubsection{2D Shallow-Water Equations: OOD and Transfer}
\label{sec:exp:swe}
PDEBench does not support controlled OOD evaluation (parameter extrapolation, long-horizon rollout, boundary shift).
We design a 2D SWE benchmark with explicit OOD splits in physical parameters and initial conditions (details in Appendix~\ref{app:exp_configs}).

\textbf{Fixed-time evaluation.}
Table~\ref{tab:regime_a} (panel b) reports relative $L^2$ error at a single query time.
HyCOP achieves order-of-magnitude OOD improvements: $5.00 \times 10^{-2}$ versus $3.54$--$3.80 \times 10^{-1}$ for the best monolithic baseline on SWE.
Full metric breakdown (fRMSE bands, RMSE, MaxErr, bRMSE, cRMSE) and qualitative comparisons appear in Appendix~\ref{app:2d_swe}.

\textbf{Long-horizon rollout.}
Table~\ref{tab:regime_a} (panel c) shows error at 5-step and 20-step horizons on SWE.
Autoregressive baselines accumulate error rapidly, with most exceeding $\mathrm{Rel.}\ L^2 > 1.0$ at 20 steps OOD---worse than a constant predictor.
HyCOP remains stable because its programs compose at the operator level rather than marching one small step at a time.
Per-horizon trajectory metrics and qualitative rollouts at $1$/$5$/$10$/$20$ steps are in Appendix~\ref{app:2d_swe_trajectory}.

\textbf{Dam-break transfer.}
The source task is 2D SWE with periodic boundaries and smooth ICs; the target is a dam-break with solid-wall boundaries and shock discontinuities---never seen in training.
Table~\ref{tab:dambreak} and Figure~\ref{fig:combined} reveal a progressive story.
HyCOP with periodic primitives already outperforms all baselines zero-shot ($10{\times}$); the policy alone handles the OOD initial conditions and a $100{\times}$ larger grid.
Swapping the boundary primitive for a wall-boundary module---without any dam-break training data---drops error by another $4{\times}$, demonstrating that boundary physics is the dominant residual shift and that modular transfer resolves it.

\begin{figure}[t]
  \centering
  \begin{tikzpicture}[
    flowsmall/.style={-{Latex[length=2mm]}, line width=0.8pt}
  ]
    \node[inner sep=0pt] (img1) {
      \includegraphics[width=0.4\textwidth]{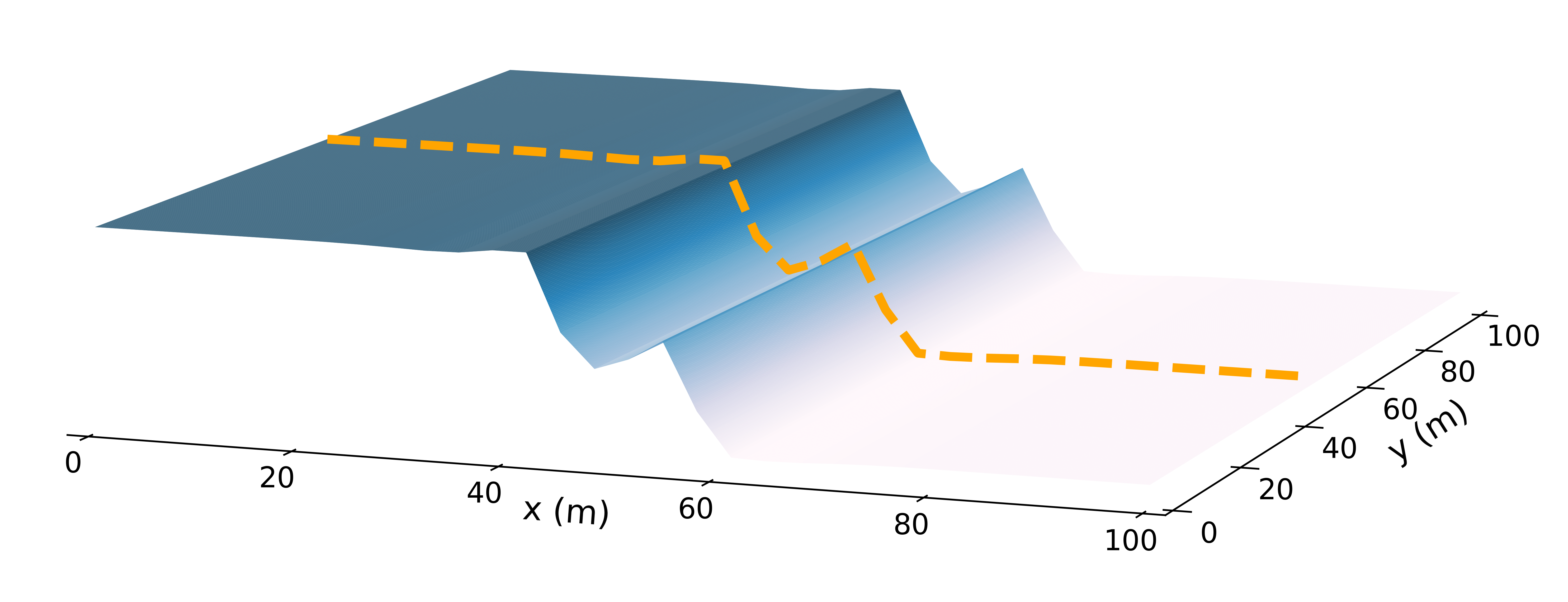}
    };
    \node[inner sep=0pt, right=3pt of img1] (img2) {
      \includegraphics[width=0.55\textwidth]{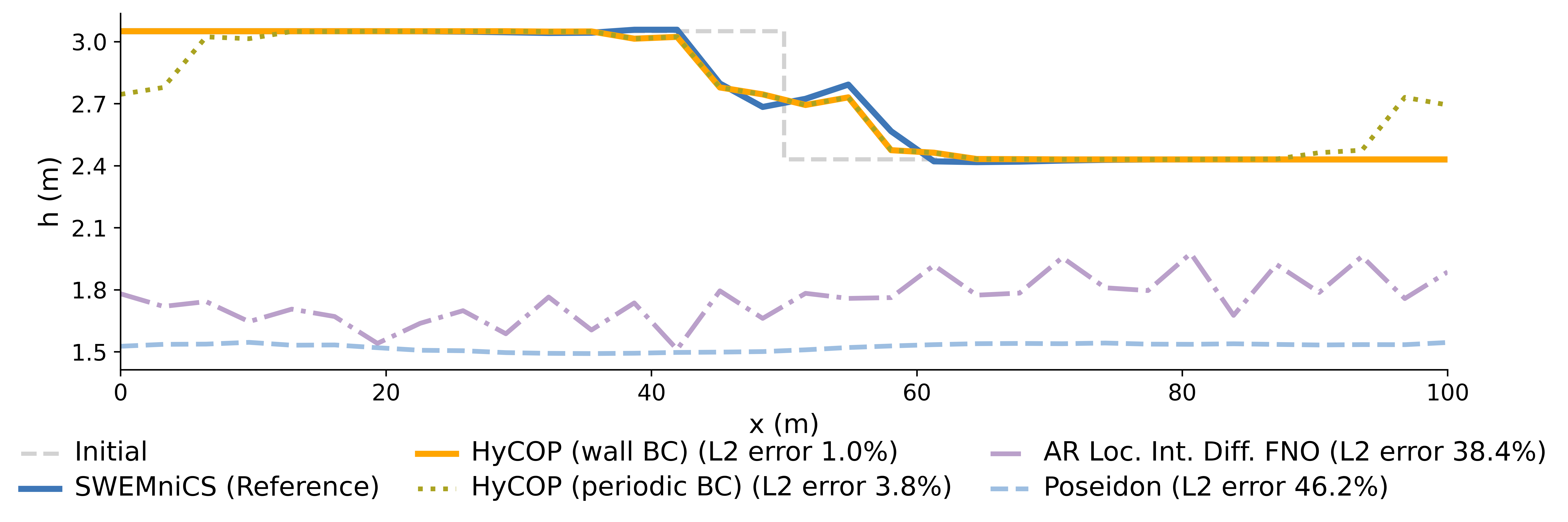}
    };
    \definecolor{goldarrow}{HTML}{feb431}
    \draw[flowsmall, rounded corners=8pt, goldarrow] (2,-0.3) |- (3.0,0.45);
  \end{tikzpicture}
  \caption{\textbf{Dam-break transfer (zero-shot).}
Left: reference SWE height surface with a highlighted cross-section.
Right: 1D slice at $y=50$m comparing baselines vs.\ HyCOP.
Swapping only the boundary primitive (periodic $\rightarrow$ wall) yields a sharp improvement and resolves the shock region (gray box), indicating boundary physics as the dominant shift.}
  \label{fig:combined}
\end{figure}

\begin{table}[t]
\caption{\textbf{SWE $\to$ dam-break transfer} (zero-shot).
Relative $L^2$ error ($\downarrow$) for height $h$, with inference time per sample.
All models trained only on smooth SWE with periodic boundaries.
Swapping the boundary primitive (periodic $\to$ wall) requires no retraining.}
\label{tab:dambreak}
\centering
\footnotesize
\setlength{\tabcolsep}{5pt}
\begin{tabular}{lcccc}
\toprule
& AR-Loc.\ Int.\ Diff.\ FNO & Poseidon & HyCOP (periodic) & \textbf{HyCOP (wall swap)} \\
\midrule
Rel.\ $L^2$ ($h$) & 3.84\sn{-1} & 4.62\sn{-1} & 3.82\sn{-2} & \textbf{9.92\sn{-3}} \\
Time (s)         & 0.345        & 3.391        & 1.397             & 1.360 \\
\bottomrule
\end{tabular}
\end{table}

\textbf{Computational cost.}
HyCOP achieves these results with $5 \times 10^5$ training forward passes---$25{\times}$ fewer than U-Net/AR baselines---and over $10{\times}$ shorter wall-clock training time.
At inference, HyCOP is $2{\times}$ faster than Poseidon at 20-step horizons (Table~\ref{tab:ablations}).

\subsubsection{2D Advection--Diffusion--Reaction}
\label{sec:exp:adr}
We run the same fixed-time and long-horizon experiments on 2D advection--diffusion--reaction (Fisher--KPP); HyCOP shows the same OOD-robustness pattern with comparable order-of-magnitude improvements (Table~\ref{tab:regime_a}, panels b and d). Full setup, metric breakdown, qualitative comparisons, and trajectory rollouts are in Appendices~\ref{app:2d_adr} and~\ref{app:2d_adr_trajectory}.

\subsection{Regime B: Adaptation and Discovery}
\label{sec:exp:regime_b}

When the physics is incomplete or components come from different sources, HyCOP's compositional structure enables systematic model refinement.
We demonstrate both paths introduced in \S\ref{sec:intro} on the AD$\to$ADR benchmark.


\begin{table}[H]
\caption{\textbf{Regime~B: Adaptation and discovery on 2D ADR.}
Relative $L^2$ error ($\downarrow$). Setup details for all configurations in Appendix~\ref{app:ad_adr}.
\emph{Top}: monolithic baselines trained end-to-end on full ADR data, and HyCOP with the complete numerical dictionary (Regime~A reference).
\emph{Middle}: two adaptation paths from incomplete or heterogeneous dictionaries---neither retrains any existing primitive; only the small policy (${\sim}$50--100 parameters) is (re)learned.
\emph{Bottom}: ablation replacing all numerical primitives with per-process FNOs.}
\label{tab:regime_b}
\centering
\footnotesize
\setlength{\tabcolsep}{4pt}
\begin{tabular}{@{} c @{\hspace{3pt}} l l cc @{}}
\toprule
& Method & Dictionary & ID & OOD \\

\midrule
& \multicolumn{4}{l}{\emph{Baselines trained on full ADR data}} \\
& DeepONet                & ---                              & 1.58\sn{-1} & 2.82\sn{-1} \\
& FNO                     & ---                              & 8.42\sn{-2} & 2.60\sn{-1} \\
& PINO                    & ---                              & 8.42\sn{-2} & 2.36\sn{-1} \\
& Loc.\ Int.\ Diff.\ FNO & ---                              & 3.15\sn{-2} & 1.89\sn{-1} \\
& \textbf{HyCOP}      & \{Adv, Diff, React\}             & \textbf{2.10\sn{-2}} & \textbf{2.87\sn{-2}} \\

\midrule
\multirow{6}{*}{\rotatebox[origin=c]{90}{\scriptsize\textbf{Missing physics}}}
& \multicolumn{4}{l}{\emph{Path~(a): compose--diagnose--enrich (unknown missing physics)}} \\
& \quad HyCOP (AD pretrain)     & \{Adv, Diff\}            & \multicolumn{2}{c}{1.81\sn{-1}} \\
& \quad\textbf{HyCOP + residual} & \{Adv, Diff, Resid.\}   & \multicolumn{2}{c}{\textbf{4.42\sn{-2}}} \\
\cmidrule(l){2-5}
& \multicolumn{4}{l}{\emph{Path~(b): hybrid dictionary (known missing physics)}} \\
& \quad HyCOP-Hyb               & \{FNO-AD, React\} & 3.77\sn{-2} & 1.36\sn{-1} \\

\midrule
& \multicolumn{4}{l}{\emph{Ablation: fully learned primitives (per-process FNOs)}} \\
& \quad HyCOP-Learned           & \{FNO-Adv, FNO-Diff, FNO-React\} & 2.38\sn{-2} & 3.90\sn{-1} \\

\bottomrule
\end{tabular}
\end{table}

\subsubsection{Two Paths to a Hybrid Dictionary}
\label{sec:exp:transfer}

The source task is 2D advection--diffusion (AD); the target is ADR with Fisher--KPP reaction.
We compare two scientific workflows arriving at hybrid dictionaries:

\textbf{Path (a): Unknown missing physics (compose--diagnose--enrich).}
The scientist starts with an AD dictionary $\{\mathcal{O}_{\mathrm{adv}}, \mathcal{O}_{\mathrm{diff}}\}$ and observes structured failure on ADR data: errors concentrate where reaction dominates, while advection/diffusion regions remain accurate.
A residual closure (UNO, trained on 120 ADR trajectories) is added as $\mathcal{O}_{\mathrm{res}}$, and only the policy is relearned over the enriched dictionary.
Error drops from $0.181$ to $0.044$ (Table~\ref{tab:regime_b}), and the resulting programs invoke the residual precisely in high-reaction regions (Figure~\ref{fig:adr_residual}).

\textbf{Path (b): Known missing physics, heterogeneous primitives.}
A lab already has a pretrained FNO-FiLM surrogate for AD dynamics (trained on AD data, frozen).
The missing reaction mechanism is well-characterized, so a textbook numerical reaction solver is added.
HyCOP learns a two-operator policy over the hybrid dictionary $\{\text{FNO-AD}, \mathcal{O}_{\mathrm{react}}^{\mathrm{num}}\}$---no retraining of either component.

Table~\ref{tab:regime_b} reveals a diagnostic gradient.
HyCOP (all numerical, Rel.\ $L^2$ OOD $= 2.87 \times 10^{-2}$) barely degrades because every primitive generalizes exactly; the only OOD vulnerability is the policy schedule.
HyCOP-Hyb (Rel.\ $L^2$ OOD $= 1.36 \times 10^{-1}$) degrades more because the FNO-FiLM AD primitive does not generalize as robustly---precisely the primitive-error term in the decomposition~\eqref{eq:error_decomp}.
Yet HyCOP-Hyb still outperforms Loc.\ Int.\ Diff.\ FNO ($1.89 \times 10^{-1}$), a monolithic model trained end-to-end on the full ADR system.
The error decomposition serves as a diagnostic: the scientist can identify the FNO-FiLM as the limiting component and decide whether to replace it with a numerical primitive (recovering HyCOP's robustness) or invest in a more robust learned surrogate.

Both paths arrive at hybrid dictionaries---numerical and learned components orchestrated by a small policy---from different scientific starting points.
Neither requires retraining any existing primitive; only the small policy (${\sim}$50--100 parameters) is (re)learned.

\section{Ablations and Analysis}
\label{sec:ablations}
We systematically test each component of HyCOP (Table~\ref{tab:ablations}).
\textbf{Dictionary robustness.} Adding a redundant reaction primitive to the SWE dictionary does not degrade accuracy: the policy suppresses it to $3.5\%$ time allocation ($\mathrm{Rel.}\ L^2 = 2.30 \times 10^{-2}$ vs.\ $2.40 \times 10^{-2}$ with the correct dictionary). When reaction is missing from the AD dictionary and the target is ADR, zero-shot error is $0.181$; dictionary enrichment recovers to $0.044$ (\S\ref{sec:exp:transfer}).
\textbf{ES sensitivity.} A grid sweep over population size $M \in \{100,250,500\}$ and noise $\sigma \in \{0.01,0.02,0.05\}$ on 2D SWE shows a broad plateau; even $M{=}100$ outperforms all monolithic baselines OOD ($3.5 \times 10^{-2}$ vs.\ $> 3.5 \times 10^{-1}$).
\textbf{Resolution transfer.} A policy trained at $32{\times}32$ transfers zero-shot to $128{\times}128$ on 2D SWE with modest degradation ($2.40 \to 3.37 \times 10^{-2}$); the policy conditions on dimensionless features, so only primitive cost scales with resolution.
\textbf{Feature ablation.} Conditioning on dimensionless regime features (P\'eclet, variance, gradient variance) reduces OOD error by $32\%$ relative to raw IC inputs, confirming their role in regime-aware scheduling.
\textbf{Fully Learned primitives.} Replacing all numerical primitives with per-process FNOs on 2D ADR isolates the primitive-error term of the error decomposition (\S\ref{sec:theory}): OOD error tracks primitive quality as the theory predicts (Table~\ref{tab:regime_b}), validating the decomposition as a diagnostic.
\textbf{Chaotic validation.} On the chaotic Kuramoto--Sivashinsky equation---where our stable-split-system assumptions may not hold---HyCOP maintains low spectral error and KL divergence of the invariant measure across ID and OOD settings (Table~\ref{tab:ablations}, panel e; Appendix~\ref{app:1d_chaos}).
\textbf{Policy interpretation.} On 2D ADR, reaction-primitive allocation increases monotonically with Damk\"ohler number while advection allocation tracks P\'eclet, consistent with dominant-process physics (Figure~\ref{fig:pipeline}c).



\section{Discussion}
\label{sec:discussion}

\textbf{When to use HyCOP.}
HyCOP is designed for structured PDEs where a natural process decomposition exists---advection--diffusion--reaction, shallow-water, Navier--Stokes, and many multi-physics systems.
For PDEs with no natural decomposition, monolithic surrogates remain the right tool.
When partial knowledge exists, hybrid dictionaries allow the framework to integrate whatever is available---numerical solvers for well-characterized processes, pretrained surrogates for expensive or poorly resolved ones, learned closures for unknown mechanisms.
\textbf{Limitations.}
The primitive dictionary requires domain knowledge to specify, though dictionary robustness (\S\ref{sec:ablations}) shows the framework tolerates redundant or missing entries.
Current 2D experiments use resolutions up to $128{\times}128$ (zero-shot from $32{\times}32$); scalability to production resolutions is an engineering question---the policy is resolution-agnostic and only primitive costs scale with grid size.
HyCOP-Hyb's OOD degradation (\S\ref{sec:exp:transfer}) shows that learned primitives are the framework's weakest link under shift; improving their robustness is an active research direction.
\textbf{Broader significance.}
Scientific computing has always been compositional---finite-element assembly, climate-model coupling, multi-physics splitting---and HyCOP provides a learning-theoretic foundation for this practice with formal guarantees on the resulting error.
Recent V\&V frameworks~\citep{jakeman2026vvsciml} recognize hybrid SciML models as a first-class category; HyCOP provides one concrete instantiation with built-in diagnostics via the error decomposition.
\textbf{Future work.}
Three directions: (1)~fully learned primitive dictionaries removing the need for numerical solvers, with FiLM-conditioned or DeepONet-factored architectures for time-queried primitives; (2)~application to climate/weather foundation models as compositional backbones, where each component (radiation, convection, dynamics) is a dictionary entry; (3)~connections to program synthesis---HyCOP's programs are discrete compositions amenable to symbolic search, potentially enabling automated discovery of splitting schemes.

\bibliography{related_works, proof}
\bibliographystyle{plainnat}

\newpage
\appendix
\onecolumn


\begin{table}[t]
\caption{\textbf{Positioning.} $^\dagger$Depends on whether the fixed schedule suits the target regime.}
\label{tab:positioning}
\centering
\footnotesize
\setlength{\tabcolsep}{3pt}
\begin{tabular}{lccc}
\toprule
 & Classical splitting & Neural operators & \textbf{HyCOP} \\
\midrule
Composition & Fixed schedule & None (monolithic) & \textbf{Learned policy} \\
Regime adaptivity & None & Learned weights & \textbf{Conditioned $\pi_\theta$} \\
Dictionary type & Numerical only & Learned only & \textbf{Hybrid (any)} \\
Process interpretability & Full & None & \textbf{Full (programs)} \\
Failure diagnosis & Manual & None & \textbf{Process-level} \\
Modular transfer & Manual & Retrain & \textbf{Dict.\ swap/enrich} \\
OOD robustness & Regime-dep.$^\dagger$ & Limited & \textbf{Strong} \\
\bottomrule
\end{tabular}
\end{table}

\section{Theory \& Proofs}
\label{sec:proof}

In this appendix we state assumptions and prove the guarantees used in the main text.
Under standard well-posedness and splitting regularity conditions, (i) finite compositions of (sub)flows are expressive enough to approximate the solution operator on compact query sets, and (ii) HyCOP error decomposes into composition (splitting) error plus primitive-implementation error.
We do not develop new splitting theory; we adapt classical tools to our hypothesis class.

\subsection{Function spaces}

We work on the $d$-dimensional torus $\Omega = \mathbb{T}^d = (\mathbb{R}/\mathbb{Z})^d$ with periodic boundary conditions. The main function spaces are:

\begin{itemize}[leftmargin=2em]
    \item $L^2(\Omega)$: square-integrable functions with norm:
    \[
        \|u\|_{L^2} = \left(\int_{\Omega} |u(x)|^2 \, dx\right)^{1/2}.
    \]
    \item $H^s(\Omega)$: Sobolev space of order $s \geq 0$ with Fourier norm:
    \[
        \|u\|_{H^s} = \left(\sum_{k \in \mathbb{Z}^d} (1 + |k|^2)^s |\hat{u}_k|^2\right)^{1/2},
    \]
    where $\hat{u}_k$ denotes the Fourier coefficient of $u$ at mode $k$.
    \item $C([0,T]; X)$: continuous functions from $[0,T]$ to a Banach space $X$.
\end{itemize}

For $s > s'$, we have the continuous embedding $H^s(\Omega) \hookrightarrow H^{s'}(\Omega)$~\citep{Adams2003}. Intuitively, higher $s$ corresponds to more spatial regularity of the solution.

\subsection{Assumptions}

We first formalize the class of PDEs and splits we study.

\begin{definition}[Stable Split System]\label{def:stable_split}
Consider the parameterized PDE:
\[
    \frac{\partial u}{\partial t} = \mathcal{F}(u,\mu),
\]
defined on $\mathbb{T}^d$ with initial condition $u(0,\cdot) = u_0(\cdot)$. Suppose that the differential operator $\mathcal{F}$ decomposes into atomic operators:
\[
    \mathcal{F} = \sum_{i=1}^n \mathcal{F}_i,
\]
and that for each $t \in [0,T]$ and parameter $\mu$, the solution $u(t,\mu) \in H^s(\Omega)$ varies continuously with respect to $t$ and $\mu$. We say this is a \emph{stable split system} if:

\begin{enumerate}[label=(A\arabic*)]
    \item (Well-posedness) For $u_0 \in H^s(\Omega)$ with $s \geq s_0$, the full equation admits a unique solution:
    \[
        u \in C([0,T]; H^s(\Omega)).
    \]
    \item (Sub-problem well-posedness) Each sub-problem:
    \[
        \frac{\partial v}{\partial t} = \mathcal{F}_i(v,\mu),
    \]
    is well-posed on $[0,T]$ for every $1 \leq i \leq n$.
    \item (Regularity preservation) If $v_0 \in H^s(\Omega)$, then $\Phi^{(i)}_\tau(v_0) \in H^s(\Omega)$ for any $\tau \in (0,T)$, where $\Phi^{(i)}_\tau$ is the flow map generated by $\mathcal{F}_i$.
\end{enumerate}
\end{definition}

Assumptions (A1)--(A3) say that the full PDE and each of its atomic components are classically well-posed and do not destroy Sobolev regularity over the time horizon of interest.

We next collect regularity assumptions needed to control splitting errors.

\begin{assumption}[Regularity for error analysis]\label{ass:regularity}
We assume:

\begin{enumerate}[label=(R\arabic*)]
    \item (Commutator bounds) For $s \geq s_1$,
    \[
        \|[\mathcal{F}_i, \mathcal{F}_j](u)\|_{L^2} \leq C_{ij} \|u\|_{H^s},
    \]
    for all $i,j$ and all $u \in H^s$, with constants $C_{ij}$ independent of $u$.
    \item (Stability of exact flows) There exists $\omega > 0$ such that for all $|t| \leq T$,
    \[
        \|\Phi^{(i)}_t(u)\|_{H^s} \leq e^{\omega|t|} \|u\|_{H^s},
    \]
    for every $i$.
    \item (Higher commutators) Nested commutators up to depth $p$ map $H^{s_p} \to L^2$ boundedly.
\end{enumerate}

Here the commutator is:
\[
    [\mathcal{F}_i, \mathcal{F}_j](u) = D\mathcal{F}_j(u)[\mathcal{F}_i(u)] - D\mathcal{F}_i(u)[\mathcal{F}_j(u)].
\]
These assumptions are standard in the analysis of splitting methods~\citep{HairerEtAl2006, Jahnke2000}.
\end{assumption}

\begin{remark}[Regularity versus order]
The Sobolev index $s_p$ required for order-$p$ splitting increases with $p$. Typical values are $s_2 \approx s_0 + 2$ for Strang splitting and $s_4 \approx s_0 + 4$ for fourth-order methods~\citep{Jahnke2000, Thalhammer2008}. Intuitively, higher-order schemes require more derivatives to be controlled in the commutator expansions.
\end{remark}

We also need assumptions on the approximate sub-flows used as primitives.

\begin{assumption}[Primitive-implementation requirements]\label{ass:approx}
Let $\widehat{\Phi}^{(i)}_t$ be an implemented primitive approximating the exact subflow $\Phi^{(i)}_t$.
Assume:
\begin{enumerate}[label=(S\arabic*)]
\item (Local accuracy) $\|\widehat{\Phi}^{(i)}_h(v)-\Phi^{(i)}_h(v)\| \le \delta_i h^{q+1}\|v\|_{H^s}$ for all $v\in H^s$.
\item (Lipschitz stability) $\|\widehat{\Phi}^{(i)}_t(u)-\widehat{\Phi}^{(i)}_t(v)\|_{H^s}\le e^{\tilde\omega |t|}\|u-v\|_{H^s}$ for $|t|\le T$.
\end{enumerate}
\end{assumption}

\begin{remark}
Assumption~\ref{ass:approx} (S2) is automatically satisfied by classical stable time-stepping schemes (e.g., A-stable Runge--Kutta, spectral methods) under appropriate CFL conditions. When neural operators serve as sub-solvers, stability must be enforced by architectural and training choices (e.g., spectral normalization, contractive updates).
\end{remark}

\subsection{Existence of a finite split sequence}
As a first step, we establish an existence result that justifies restricting attention to finite-step compositions and allows us to define a concrete search space and hypothesis class.

Suppose we are given primitive operators $\mathcal{O}_1,\dots,\mathcal{O}_n$ acting on a suitable Sobolev space $\mathbb{H}^s$ (see Appendix for details). Let $\mathcal{F}_i$ and $\Phi^{(i)}_{\tau}$ denote the corresponding differential operator and flow map of the $i$th primitive.

\begin{definition}[Finite composite flows]
For $k\ge 1$, a $k$-step composite flow is
\[
\Psi^{(k)}=\Phi^{(j_k)}_{\tau_k}\circ\cdots\circ\Phi^{(j_1)}_{\tau_1},
\qquad j_r\in\{1,\dots,n\},\ \tau_r\in\mathbb{R}.
\]
Let $\mathcal{C}_k$ be the set of all such maps and $\mathcal{C}=\bigcup_{k\ge 0}\mathcal{C}_k$.
Replacing $\Phi^{(i)}$ by implemented primitives $\widehat{\Phi}^{(i)}$ defines $\widehat{\mathcal{C}}$.
\end{definition}

Our first result shows that the hypothesis class
of HyCOP, which consists of finite composite flows built
from primitives $\mathcal{O}_i$, is dense (in $L^2$) in the space of solution
operators for stable split systems (Definition~\ref{def:stable_split}) on compact input sets.

\begin{theorem}[Expressivity of finite composite flows]
\label{thm:existence}
    Let $\Lambda$ denote a compact subset of the parameter--initial--boundary space, and let $t \in [0,T]$. Suppose that for each $x \in \Lambda$ the PDE:
    \[
        \frac{\partial u}{\partial t} = \mathcal{F}(u,\mu),
    \]
    with input $x = (\mu,u_0,b,\Omega)$ defines a stable split system in the sense of Definition~\ref{def:stable_split}, and that the differential operator admits a decomposition
   $ 
        \mathcal{F} = \mathcal{F}_1 + \cdots + \mathcal{F}_n.
   $
    Then, for any $\varepsilon > 0$ there exists a finite composite flow operator $\Psi \in \mathcal{C}$ such that:
    \[
        \sup_{x \in \Lambda} \big\| u(t;x) - \Psi(x,t) \big\|_{L^2} < \varepsilon,
    \]
    where $u(t;x)$ denotes the exact solution at time $t$ for input $x$.
\end{theorem}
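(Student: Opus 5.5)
The plan is to use a Lie--Trotter construction with a uniform step count. Fix $t\in[0,T]$ and $N\in\mathbb{N}$, set $h=t/N$, and define the composite flow
\[
\Psi_N(x,t) := \big(\Phi^{(n)}_{h}\circ\cdots\circ\Phi^{(1)}_{h}\big)^{N}(u_0),
\]
where the sub-flows are those of the decomposition for the parameter $\mu$ in $x$. This is an element of $\mathcal{C}_{nN}$, and by (A2)--(A3) each factor is well defined on $H^s$. The task is then to show $\sup_{x\in\Lambda}\|u(t;x)-\Psi_N(x,t)\|_{L^2}\to 0$ as $N\to\infty$. It suffices to take any $N$ past the threshold, since the schedule $(j_r,\tau_r)$ does not depend on $x$.

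First, bound the local error of one Lie step from $v\in H^s$. Expand $\Phi^{\mathcal{F}}_h(v)$ and the product of sub-flows via Taylor/Gröbner--Alekseev (variation of constants) formulas. The first-order terms agree because $\mathcal{F}=\sum_i\mathcal{F}_i$. The remainder is a double time integral of commutators $[\mathcal{F}_i,\mathcal{F}_j]$ evaluated along intermediate flows. Combining (R1) with the flow stability (R2) gives
\[
\big\|\Phi^{\mathcal{F}}_h(v)-\Phi^{(n)}_h\circ\cdots\circ\Phi^{(1)}_h(v)\big\|_{L^2}\le C h^2 e^{n\omega h}\|v\|_{H^s}, \qquad C=\tfrac12\textstyle\sum_{i<j}C_{ij}.
\]

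Second, globalize with a Lady Windermere's fan (telescoping) argument:
\[
u(t)-\Psi_N=\sum_{m=0}^{N-1}\Big(\Phi^{\mathcal{F}}_{(N-m-1)h}\circ\Phi^{\mathcal{F}}_h-\Phi^{\mathcal{F}}_{(N-m-1)h}\circ L_h\Big)(L_h^{m}u_0),
\]
where $L_h$ is the Lie step. Two ingredients control the terms. The intermediate iterates satisfy $\|L_h^m u_0\|_{H^s}\le e^{n\omega t}\|u_0\|_{H^s}$ by (R2). Propagation of the local errors requires Lipschitz stability of the exact flow $\Phi^{\mathcal{F}}$ in $L^2$ on bounded $H^s$ sets. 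I would extract that stability from well-posedness (A1) together with continuous dependence, possibly strengthening it to a Lipschitz bound as is standard in \citep{HairerEtAl2006,Jahnke2000}. The sum then gives
\[
\|u(t;x)-\Psi_N(x,t)\|_{L^2}\le C' \,t\,h\, e^{Kt}\|u_0\|_{H^s}.
\]

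Third, make the bound uniform over $\Lambda$. Compactness of $\Lambda$ and continuity of $x\mapsto u_0$ in $H^s$ give $\sup_\Lambda\|u_0\|_{H^s}<\infty$. The constants $C_{ij},\omega$, and the Lipschitz constant are either independent of $\mu$ by assumption, or, if they depend continuously on $\mu$, bounded on the compact parameter projection. The resulting $O(1/N)$ bound is therefore uniform, and choosing $N$ large finishes the proof.

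The main obstacle is the global stability step. The assumptions give an $H^s$-growth bound (R2) for the sub-flows but no explicit $L^2$-Lipschitz stability of the full nonlinear flow $\Phi^{\mathcal{F}}$. The commutator bound (R1) also costs derivatives, mapping $H^s\to L^2$. My first attempt is to propagate errors only in $L^2$ while keeping all iterates in a fixed $H^s$ ball, using local Lipschitz continuity of the solution map on bounded sets. If that is unavailable, the fallback is a weaker route: on compact subsets of $H^s$, uniform continuity of the solution map (from (A1) and continuous dependence) combined with consistency $L_h(v)=\Phi^{\mathcal{F}}_h(v)+o(h)$ uniformly yields convergence by a Chernoff/Trotter-type argument. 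That route gives density without a rate, which is all the theorem requires.
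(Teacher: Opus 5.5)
Your proposal follows the same route as the paper's proof: an $N$-fold Lie--Trotter composition, an $O(h^2)$ local error from the BCH/commutator expansion controlled by (R1), globalization by Lady Windermere's fan, and uniformity over $\Lambda$ from compactness. The $L^2$-Lipschitz stability you flag as missing is also assumed tacitly by the paper, which cites (R2) for the propagation step even though (R2) is only a growth bound, so your main argument matches the paper's under that same unstated strengthening; your Chernoff-type fallback, however, would need quasi-contractivity rather than mere uniform continuity to sum the $N$ local errors, so it does not remove the gap.
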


\subsection{Definitions}

We recall the parameterization of splitting strategies and the loss functional.

\begin{definition}[Splitting strategy]\label{def:strategy}
A splitting strategy of maximum depth $K_{\max}$ is:
\[
    \theta = (k, (i_1,\dots,i_k), (\tau_1,\dots,\tau_k)) \in \Theta,
\]
specifying the composition length $k \leq K_{\max}$, operator indices $i_j \in \{1,\ldots,n\}$, and time durations $\tau_j > 0$. The space $\Theta$ embeds into $\mathbb{R}^{K_{\max}(n+1)+1}$ via the continuous relaxation:
\[
    \iota: \theta \mapsto \left(k, \{\sigma_j\}_{j=1}^{K_{\max}}, \{\tau_j\}_{j=1}^{K_{\max}}\right),
\]
where each $\sigma_j \in \Delta^{n-1} \subset \mathbb{R}^n$ is a probability distribution over operators (e.g., via a softmax $\sigma_{jl} = e^{z_{jl}}/\sum_{\ell} e^{z_{j\ell}}$). Discrete operator selection can be recovered by $i_j = \arg\max_l \sigma_{jl}$.
\end{definition}
We allow $\tau\in\mathbb{R}$ in $\mathcal{C}$ for algebraic convenience; HyCOP strategies restrict to $\tau_j>0$.

\begin{remark}[Operator algebra viewpoint]
The space $\mathcal{C}$ of finite composite flows generated by the primitives forms a monoid under composition. A neural network outputting elements of $\Theta$ therefore acts as a \emph{policy} that selects an element of this operator algebra, synthesizing algorithms from primitive sub-flows $\{\tilde{\Phi}^{(i)}_\tau\}$.
\end{remark}

\begin{definition}[Loss]\label{def:loss}
For a query $(x,t)$ with $x=(\mu,u_0,b,\Omega)$ and a program $\theta$,
\[
\mathcal{L}(x,t;\theta)=\|\mathcal{S}(t;x)-\widehat{\Psi}_\theta(x,t)\|_{L^2},
\qquad
\widehat{\Psi}_\theta(x,t)=\widehat{\Phi}^{(i_k)}_{\tau_k}\circ\cdots\circ\widehat{\Phi}^{(i_1)}_{\tau_1}(u_0).
\]
\end{definition}

\subsection{Lie derivative formalism}

We briefly recall the Lie-derivative formalism that linearizes the flow at the level of observables. This is classical in geometric numerical integration~\citep{HairerEtAl2006, BlanesEtAl2024}.

\begin{definition}[Lie derivative]
Let $\mathcal{F}$ be a (possibly nonlinear) vector field on a Banach space $X$, with flow $\Phi^{(\mathcal{F})}_t$. For a smooth observable $g: X \to \mathbb{R}$, the Lie derivative $\mathcal{L}_{\mathcal{F}}$ acts by:
\[
    (\mathcal{L}_{\mathcal{F}} g)(u)
    = Dg(u)[\mathcal{F}(u)]
    = \left.\frac{d}{dt}\right|_{t=0} g(\Phi^{(\mathcal{F})}_t(u)).
\]
\end{definition}

\begin{proposition}[Lie transformation]\label{prop:lie}
For smooth observables $g$ and regular initial data $u$,
\[
    g(\Phi^{(\mathcal{F})}_t(u)) = \left(e^{t \mathcal{L}_{\mathcal{F}}} g\right)(u),
\]
where the exponential is the strongly convergent series,
\[
    e^{t\mathcal{L}_{\mathcal{F}}}
    = \sum_{k=0}^\infty \frac{t^k}{k!} \mathcal{L}_{\mathcal{F}}^k.
\]
\end{proposition}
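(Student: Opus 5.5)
We prove Proposition~\ref{prop:lie} by reducing it to a Taylor expansion of $t\mapsto g(\Phi^{(\mathcal{F})}_t(u))$ with a remainder we can control. Set $\gamma(t):=g(\Phi^{(\mathcal{F})}_t(u))$.

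\textbf{Step 1: derivatives of $\gamma$.} We show by induction on $k$ that
\[
\gamma^{(k)}(t)=(\mathcal{L}_{\mathcal{F}}^k g)(\Phi^{(\mathcal{F})}_t(u)).
\]
The case $k=1$ combines the group property $\Phi_{t+h}=\Phi_h\circ\Phi_t$ with the defining identity of the Lie derivative at the point $\Phi_t(u)$:
\[
\gamma'(t)=\left.\tfrac{d}{dh}\right|_{h=0} g(\Phi_h(\Phi_t(u)))=(\mathcal{L}_{\mathcal{F}}g)(\Phi_t(u)).
\]
For the inductive step we apply the same argument to the observable $\mathcal{L}_{\mathcal{F}}^{k}g$ in place of $g$. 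This requires each iterate $\mathcal{L}_{\mathcal{F}}^k g$ to again be a smooth observable. We take this as part of the meaning of ``smooth observables $g$ and regular initial data $u$'': $u$ lies in a Sobolev space $H^{s}$ with $s$ large enough, kept invariant by the flow via (A1)/(A3), so that $\mathcal{F}$ and its derivatives act boundedly along the trajectory. This is the same regularity device as (R3).

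\textbf{Step 2: Taylor's formula with integral remainder.} Taylor's theorem gives
\[
\gamma(t)=\sum_{k=0}^{N-1}\frac{t^k}{k!}(\mathcal{L}_{\mathcal{F}}^kg)(u)+\frac{1}{(N-1)!}\int_0^t (t-s)^{N-1}(\mathcal{L}_{\mathcal{F}}^Ng)(\Phi_s(u))\,ds .
\]
The identity in Proposition~\ref{prop:lie} follows if the remainder tends to $0$ as $N\to\infty$.

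\textbf{Step 3: the remainder, which is the main obstacle.} For unbounded $\mathcal{F}$, such as diffusion or advection, $\mathcal{L}_{\mathcal{F}}^N g$ generally loses $N$ derivatives. Without an analyticity hypothesis the series need not converge. We therefore read the proposition, as in \citep{HairerEtAl2006,BlanesEtAl2024}, under the following standing assumption, and make it explicit: there are constants $C,M$ such that
\[
\sup_{s\in[0,t]}|(\mathcal{L}_{\mathcal{F}}^Ng)(\Phi_s(u))|\le C\,N!\,M^N .
\]
This holds, for example, when $g$ is analytic and $\mathcal{F}$ is bounded on a neighbourhood of the compact trajectory, via Cauchy estimates. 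Under this bound the remainder is at most $C(M|t|)^N$, which tends to $0$ for $|t|<1/M$. So the series converges there and equals $\gamma(t)$.

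\textbf{Step 4: extension to larger $t$.} We extend to arbitrary $t$ by subdividing $[0,t]$ into steps shorter than $1/M$ and using the semigroup property $e^{(t_1+t_2)\mathcal{L}}=e^{t_1\mathcal{L}}e^{t_2\mathcal{L}}$. This is legitimate because the bound in Step 3 is uniform along the compact trajectory.

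For the splitting error analysis that follows, only the truncated expansion of Step 2 with its remainder estimate is needed. That version holds under the finite regularity (R3) without any analyticity, and we will rely on it in that form.
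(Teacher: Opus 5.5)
Your route differs from the paper's, and Steps 1--3 are the more careful of the two. The paper differentiates $\psi(t)=g(\Phi^{(\mathcal{F})}_t(u))$ once, calls the result a ``linear ODE'' $\dot\psi=\mathcal{L}_{\mathcal{F}}\psi$, and appeals to the exponential representation of linear flows. That appeal silently assumes the Lie series converges, which fails for unbounded generators and merely smooth data. Your induction on $\gamma^{(k)}$, Taylor's formula with integral remainder, and the factorial bound giving a remainder of at most $C(M|t|)^N$ are correct, and they make the needed hypothesis explicit. One small correction: the Cauchy-estimate example needs $\mathcal{F}$ and $g$ holomorphic and bounded on a complex neighbourhood of the trajectory, not merely $\mathcal{F}$ bounded.

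The genuine gap is Step 4. Applying Step 3 at the points $\Phi^{(\mathcal{F})}_{t_j}(u)$ shows only that $\gamma$ is real-analytic along $[0,t]$ with local radius at least $1/M$. It cannot enlarge the radius of convergence of the single series $\sum_k \frac{t^k}{k!}(\mathcal{L}_{\mathcal{F}}^k g)(u)$, which is fixed by the growth of $(\mathcal{L}_{\mathcal{F}}^k g)(u)$ at the base point. Using $e^{(t_1+t_2)\mathcal{L}_{\mathcal{F}}}=e^{t_1\mathcal{L}_{\mathcal{F}}}e^{t_2\mathcal{L}_{\mathcal{F}}}$ for the series means rearranging a double series. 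That rearrangement is justified by absolute convergence of $\sum_n \frac{(|t_1|+|t_2|)^n}{n!}|(\mathcal{L}_{\mathcal{F}}^n g)(u)|$, which is essentially the conclusion you want.

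In fact the extension is false under your hypothesis. Take $\dot x=1$ on $\mathbb{R}$, $g(x)=1/(1+x^2)$ and $u=0$. Then $\mathcal{L}_{\mathcal{F}}=d/dx$ and $|g^{(N)}(x)|\le N!(1+x^2)^{-(N+1)/2}\le N!$ everywhere, so your bound holds uniformly with $C=M=1$. Also $g(\Phi^{(\mathcal{F})}_t(0))=1/(1+t^2)$ for all $t$. Yet the Lie series is $\sum_j(-1)^j t^{2j}$, which diverges for $|t|\ge 1$.

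So what you have proved is the identity for $|t|<1/M$, which is enough for the small-step BCH use in the splitting analysis. To reach a prescribed $t$, you must assume the Step-3 bound with $M|t|<1$. For example, $\sup_{s}|(\mathcal{L}_{\mathcal{F}}^N g)(\Phi^{(\mathcal{F})}_s(u))|\le C_\varepsilon\,\varepsilon^N N!$ for every $\varepsilon>0$ would give every $t$. This also shows the proposition as literally stated, for all $t$, needs such a restriction, and the paper's one-line argument glosses over exactly this point.
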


\begin{proof}
Fix $u$ and define $\psi(t) = g(\Phi^{(\mathcal{F})}_t(u))$. By the chain rule,
\[
    \frac{d\psi}{dt}
    = Dg(\Phi^{(\mathcal{F})}_t(u))[\mathcal{F}(\Phi^{(\mathcal{F})}_t(u))]
    = (\mathcal{L}_{\mathcal{F}} g)(\Phi^{(\mathcal{F})}_t(u)).
\]
Viewed as a function of $t$, this is the linear ODE:
\[
    \frac{d}{dt} \psi(t) = (\mathcal{L}_{\mathcal{F}} \psi)(t), \qquad \psi(0) = g(u),
\]
whose solution is $\psi(t) = e^{t\mathcal{L}_{\mathcal{F}}} g(u)$ by the usual exponential representation of linear flows.
\end{proof}

\begin{remark}[Linearization principle]
Even when $\mathcal{F}$ is nonlinear on state space $X$, the Lie derivative $\mathcal{L}_{\mathcal{F}}$ is a \emph{linear} operator on observables. This is the key reason why Baker--Campbell--Hausdorff (BCH) analysis of splitting methods for nonlinear PDEs mirrors the linear case~\citep{BlanesEtAl2024, McLachlanQuispel2002}.
\end{remark}

\begin{definition}[Lie bracket]
The Lie bracket $[\mathcal{F}, \mathcal{G}]$ of vector fields satisfies:
\[
    \mathcal{L}_{[\mathcal{F},\mathcal{G}]} = [\mathcal{L}_{\mathcal{F}}, \mathcal{L}_{\mathcal{G}}]
    = \mathcal{L}_{\mathcal{F}}\mathcal{L}_{\mathcal{G}} - \mathcal{L}_{\mathcal{G}}\mathcal{L}_{\mathcal{F}}.
\]
Explicitly~\citep{HairerEtAl2006},
\[
    [\mathcal{F}, \mathcal{G}](u)
    = D\mathcal{G}(u)[\mathcal{F}(u)] - D\mathcal{F}(u)[\mathcal{G}(u)].
\]
\end{definition}

\subsection{Proof of Theorem~\ref{thm:existence} (Expressivity of finite composite flows)}

We sketch the proof of the existence theorem using the Lie--Trotter product formula. Recall that $\mathcal{F} = \sum_{i=1}^n \mathcal{F}_i$ and denote by $\Phi^{(\mathcal{F})}_t$ the exact flow of the full PDE.

For $N \in \mathbb{N}$, define the $N$-fold Lie--Trotter composition:
\[
    \Psi_N
    = \left(\Phi^{(n)}_{T/N} \circ \Phi^{(n-1)}_{T/N} \circ \cdots \circ \Phi^{(1)}_{T/N}\right)^N
    \in \mathcal{C}_{Nn}.
\]
The theorem asserts that $\Psi_N$ converges to the exact flow as $N \to \infty$, uniformly over inputs in a compact set.

\paragraph{Step 1: Local error.}
For a small step $h = T/N$, the BCH formula~\citep{HairerEtAl2006} applied at the level of Lie derivatives gives:
\[
    e^{h\mathcal{L}_{\mathcal{F}_n}} \cdots e^{h\mathcal{L}_{\mathcal{F}_1}}
    = \exp\left(
        h\sum_{i=1}^n \mathcal{L}_{\mathcal{F}_i}
        + \frac{h^2}{2}\sum_{i < j} [\mathcal{L}_{\mathcal{F}_i}, \mathcal{L}_{\mathcal{F}_j}]
        + \mathcal{O}(h^3)
    \right).
\]
By Assumption~\ref{ass:regularity} (R1), the commutator term is bounded, and standard splitting analysis~\citep{BlanesEtAl2024} yields the one-step local error bound:
\[
    \|\Phi^{(\mathcal{F})}_h(v) - \Phi^{(n)}_h \circ \cdots \circ \Phi^{(1)}_h(v)\|_{L^2}
    \leq C h^2 \|v\|_{H^{s_1}},
\]
for a constant $C$ independent of $v$.

\paragraph{Step 2: Global error.}
Using the telescoping argument known as \emph{Lady Windermere's fan}~\citep{HairerEtAl2006} together with stability (R2), we obtain:
\[
    \|\Phi^{(\mathcal{F})}_T(u_0) - \Psi_N(u_0)\|_{L^2}
    \leq N \cdot e^{\omega T} \cdot C h^2 \cdot \sup_{t \in [0,T]} \|u(t)\|_{H^{s_1}}
    = \mathcal{O}(h) = \mathcal{O}\big(1/N\big).
\]
In other words, the Lie--Trotter scheme is first-order accurate globally.

\paragraph{Step 3: Uniformity and conclusion.}
Given $\varepsilon > 0$, choose $N$ large enough that the global error is less than $\varepsilon$ for all admissible initial data $u_0$ and parameters $\mu$ in a compact set $\Lambda$; this is possible because the constants above can be chosen uniformly on compact parameter sets for a stable split system. Then $\Psi_N \in \mathcal{C}$ satisfies the desired approximation property, which completes the proof. \qed

\begin{corollary}[Error rate for higher-order splittings]
Order-$p$ splitting schemes (e.g., Strang, Yoshida) achieve global error $\mathcal{O}(h^p)$ with $h = T/N$~\citep{HairerEtAl2006, McLachlanQuispel2002}. Classical examples include Strang splitting ($p=2$) and Yoshida's fourth-order composition ($p=4$).
\end{corollary}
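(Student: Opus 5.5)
The corollary follows the three-step template used for Theorem~\ref{thm:existence}, with the Lie--Trotter composition replaced by a symmetric or higher-order composition. The regularity index is raised from $s_1$ to $s_p$, so that Assumption~\ref{ass:regularity} (R3) controls the nested commutators that appear.

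\emph{Local error from BCH.} For a one-step composition $\Psi_h=\prod_{r}\Phi^{(j_r)}_{a_r h}$ with coefficients $a_r$ satisfying the order-$p$ conditions, I will use the Lie transformation (Proposition~\ref{prop:lie}) to write $\Psi_h$ at the level of observables as a product of exponentials $e^{a_r h\mathcal{L}_{\mathcal{F}_{j_r}}}$. Examples are Strang's palindromic $\tfrac12,1,\tfrac12$ pattern for $p=2$, and Yoshida's triple jump applied to Strang for $p=4$. Expanding this product by BCH gives
\[
\exp\Bigl(h\mathcal{L}_{\mathcal{F}}+\sum_{m=2}^{p}h^m E_m+\mathcal{O}(h^{p+1})\Bigr).
\]
Here each $E_m$ is a linear combination of depth-$(m-1)$ nested commutators of the $\mathcal{L}_{\mathcal{F}_i}$. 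The order conditions are exactly the statement that $E_2=\cdots=E_p=0$. For Strang, $E_2$ vanishes by symmetry, since a palindromic composition has only odd powers of $h$ in its BCH exponent. Yoshida's coefficients are chosen to kill the $h^3$ term, and symmetry again kills $h^4$. I will cite \citep{HairerEtAl2006,McLachlanQuispel2002} for the algebraic verification rather than redo it.

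\emph{From formal expansion to a bound.} The remainder at order $h^{p+1}$ is a finite combination of depth-$p$ nested commutators evaluated along intermediate flows. I will write it in integral (Taylor-with-remainder) form, as in \citep{Jahnke2000,Thalhammer2008}, to avoid convergence issues with BCH for unbounded operators. By (R3), each such commutator maps $H^{s_p}\to L^2$ boundedly. By (R2), the intermediate flows stay bounded in $H^{s_p}$ by $e^{\omega h}$. Together these give the local error bound
\[
\|\Phi^{(\mathcal{F})}_h(v)-\Psi_h(v)\|_{L^2}\le C h^{p+1}\|v\|_{H^{s_p}}.
\]

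\emph{Global error.} Next I apply Lady Windermere's fan exactly as in Step~2 of Theorem~\ref{thm:existence}. The global error is a telescoping sum of $N$ local errors, each propagated by $L^2$-stable exact flows. This gives a bound $N\,e^{\omega T}Ch^{p+1}\sup_t\|u(t)\|_{H^{s_p}}$, which is $\mathcal{O}(h^p)$. Uniformity on compact $\Lambda$ follows as in Step~3.

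\emph{Main obstacle.} The hardest step is rigorously converting the formal BCH identity into the $H^{s_p}\to L^2$ remainder estimate. Two gaps need care. First, BCH need not converge for unbounded generators. Second, the global step also needs $L^2$ stability (or Lipschitz continuity) of the flows, not only the $H^s$ bounds of (R2). For the first gap, I would use the Jahnke--Lubich/Thalhammer approach of expanding via variation-of-constants and quadrature error formulas, which needs only finitely many commutators. For the second, I would add or assume Lipschitz stability of the exact sub-flows in $L^2$, analogous to (S2).
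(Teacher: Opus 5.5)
Your proposal is correct and follows the paper's route. The paper gives no separate argument for this corollary; it cites the classical literature right after running your three-step template (BCH local error, Lady Windermere's fan, uniformity on $\Lambda$) for Lie--Trotter in the proof of Theorem~\ref{thm:existence}, and you extend that template to order $p$ by raising the index to $s_p$ and invoking (R3). The two caveats you flag are genuine and apply equally to the paper's own Step~2: BCH need not converge for unbounded generators, and the fan argument needs $L^2$ Lipschitz stability of the propagating flows, which the norm bound in (R2) does not supply. So your added Lipschitz assumption is a fix the paper would also need.
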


\subsection{Error decomposition for approximate sub-flows}

We now incorporate the fact that HyCOP uses approximate sub-flows $\widehat{\Phi}^{(i)}$. We suppress the fixed time $t$ to make the notations more compact.

\begin{theorem}[Error decomposition]\label{thm:error_decomp_appendix}
Let Assumptions~\ref{ass:regularity}--\ref{ass:approx} hold. For a strategy $\theta$ with $k$ compositions and step sizes $\{\tau_j\}$, the total error decomposes as:
\[
    \|\mathcal{S}(x) - \widehat{\Psi}_\theta(x)\|_{L^2}
    \leq \underbrace{\|\mathcal{S}(x) - \Psi_\theta(x)\|_{L^2}}_{\text{splitting error}}
       + \underbrace{\|\Psi_\theta(x) - \widehat{\Psi}_\theta(x)\|_{L^2}}_{\text{sub-solver error}},
\]
where $\Psi_\theta = \Phi^{(i_k)}_{\tau_k} \circ \cdots \circ \Phi^{(i_1)}_{\tau_1}$ is the exact composite flow. Moreover, the sub-solver error satisfies:
\[
    \|\Psi_\theta(x) - \widehat{\Psi}_\theta(x)\|_{L^2}
    \leq C_{\emph{sol}} \, e^{\bar{\omega} T}
        \left(\sum_{j=1}^k \tau_j^{q+1}\right) \|u_0\|_{H^s},
\]
where $\bar{\omega} = \max(\omega, \tilde{\omega})$ and $C_{\emph{sol}} = \max_i \delta_i$.
\end{theorem}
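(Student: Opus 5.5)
The first inequality is the triangle inequality in $L^2$: insert the exact composite flow $\Psi_\theta(x)$ between $\mathcal{S}(x)$ and $\widehat{\Psi}_\theta(x)$. No assumptions are needed for this part. The content of the theorem is the bound on the sub-solver error. I will prove it with a telescoping argument of Lady Windermere's fan type, the same device used in Step~2 of the proof of Theorem~\ref{thm:existence}. The local defects will come from (S1), propagation of earlier errors from (S2), and control of intermediate states from (R2).

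Write $v_0=u_0$, $v_r=\Phi^{(i_r)}_{\tau_r}(v_{r-1})$ for the exact intermediate states, and $\widehat{\Phi}_{r\to k}=\widehat{\Phi}^{(i_k)}_{\tau_k}\circ\cdots\circ\widehat{\Phi}^{(i_{r+1})}_{\tau_{r+1}}$ for the implemented tail (the empty composition is the identity). Then
\[
\widehat{\Psi}_\theta(x)-\Psi_\theta(x)=\sum_{r=1}^k\Big(\widehat{\Phi}_{r\to k}\big(\widehat{\Phi}^{(i_r)}_{\tau_r}(v_{r-1})\big)-\widehat{\Phi}_{r\to k}\big(\Phi^{(i_r)}_{\tau_r}(v_{r-1})\big)\Big).
\]
Each summand compares the implemented tail applied to two inputs that differ only by the one-step defect at step $r$. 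Iterating (S2) over the tail bounds summand $r$ by $e^{\tilde\omega(\tau_{r+1}+\cdots+\tau_k)}\|\widehat{\Phi}^{(i_r)}_{\tau_r}(v_{r-1})-v_r\|$. By (S1), the defect is at most $\delta_{i_r}\tau_r^{q+1}\|v_{r-1}\|_{H^s}$. By (R2) applied $r-1$ times, $\|v_{r-1}\|_{H^s}\le e^{\omega(\tau_1+\cdots+\tau_{r-1})}\|u_0\|_{H^s}$. Multiplying, the exponents combine to at most $e^{\bar\omega\sum_j\tau_j}=e^{\bar\omega T}$, since the durations sum to the query time $T$; if they do not, I will assume $\sum_j\tau_j\le T$. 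Bounding $\delta_{i_r}\le C_{\mathrm{sol}}$ and summing over $r$ gives the claimed estimate.

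The main obstacle is the mismatch of norms. (S2) is stated as Lipschitz in $H^s$, while the defect from (S1) and the target are measured in $L^2$ (the norm in (S1) is unspecified). I will first read the (S1) norm as $H^s$. Then the whole chain runs in $H^s$, and the continuous embedding $H^s\hookrightarrow L^2$ (norm constant $1$ with the Fourier norm above) gives the $L^2$ bound at the end. If (S1) only holds in $L^2$, the fallback is to require $L^2$-Lipschitz stability of the primitives; that would need to be added as a remark, since (S2) as written does not provide it. I also need the intermediate exact states to stay in $H^s$ so that (S1) applies at each step, which (A3) and (R2) guarantee.
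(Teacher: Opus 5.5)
Your proof is correct under your stated reading of (S1) in $H^s$. It uses the same Lady Windermere device as the paper, but with the roles of the two flow families swapped.

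\textbf{What you do differently.} You telescope with the \emph{implemented} flows as the tail and the \emph{exact} flows as the prefix. The paper does the mirror image: it applies the exact tail $\Phi^{(i_k)}_{\tau_k}\circ\cdots\circ\Phi^{(i_{j+1})}_{\tau_{j+1}}$ to the local defect evaluated at the approximate state $\tilde v_{j-1}$.

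\textbf{Why the swap matters.} It decides which assumption is used in which form.
\begin{itemize}
\item In your version, both stability assumptions are used exactly as stated. (S2) is a Lipschitz bound, which is what the implemented tail needs. (R2) is a growth bound, which is what the exact intermediate states $v_{r-1}$ need.
\item The paper's version needs a Lipschitz bound for the exact tail in $L^2$. It attributes this to (R2), but (R2) is only a growth bound in $H^s$; the paper also writes the summand as the tail applied to a difference, which tacitly treats the exact flows as linear.
\item The paper also needs a growth bound $\|\tilde v_{j-1}\|_{H^s}\le e^{\tilde\omega t_{j-1}}\|u_0\|_{H^s}$, obtained by ``iterating (S2)''. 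That follows from a Lipschitz condition only under an extra hypothesis such as $\widehat\Phi^{(i)}_t(0)=0$.
\end{itemize}

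\textbf{What each route buys.} The paper's route only needs (S1) in $L^2$, which is the natural norm for a local-error estimate with $\|v\|_{H^s}$ on the right. Your route closes using only (S1)--(S2), (R2) and the embedding $H^s\hookrightarrow L^2$. The price is the stronger $H^s$ reading of (S1), or, as you note, an added $L^2$-Lipschitz assumption on the primitives.

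Your caveat that the exponent bound needs $\sum_j\tau_j\le T$ is also tacitly assumed in the paper's step $\omega(T-t_j)+\tilde\omega t_{j-1}\le\bar\omega T$.
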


\begin{proof}
The decomposition itself is just the triangle inequality. To bound the sub-solver term, we compare the exact and approximate compositions step by step.

Define intermediate compositions in which we gradually replace approximate sub-flows with exact ones:
\[
    w^{(j)}
    = \Phi^{(i_k)}_{\tau_k} \circ \cdots \circ \Phi^{(i_{j+1})}_{\tau_{j+1}}
      \circ \Phi^{(i_j)}_{\tau_j}
      \circ \widehat{\Phi}^{(i_{j-1})}_{\tau_{j-1}} \circ \cdots \circ \widehat{\Phi}^{(i_1)}_{\tau_1}(u_0),
\]
so that $w^{(0)} = \widehat{\Psi}_\theta(u_0)$ (all approximate) and $w^{(k)} = \Psi_\theta(u_0)$ (all exact). Then,
\[
    \Psi_\theta(u_0) - \widehat{\Psi}_\theta(u_0)
    = \sum_{j=1}^k \big( w^{(j)} - w^{(j-1)} \big).
\]

Let $t_j = \sum_{l=1}^{j} \tau_l$ be the cumulative time after $j$ steps, and define the intermediate state:
\[
    \tilde{v}_{j-1}
    = \widehat{\Phi}^{(i_{j-1})}_{\tau_{j-1}} \circ \cdots \circ \widehat{\Phi}^{(i_1)}_{\tau_1}(u_0).
\]
Each difference term can be written as:
\[
    w^{(j)} - w^{(j-1)}
    = \Phi^{(i_k)}_{\tau_k} \circ \cdots \circ \Phi^{(i_{j+1})}_{\tau_{j+1}}
      \Big( \Phi^{(i_j)}_{\tau_j}(\tilde{v}_{j-1})
           - \widehat{\Phi}^{(i_j)}_{\tau_j}(\tilde{v}_{j-1}) \Big).
\]

We now apply three estimates:

\emph{(i) Stability of exact flows (R2).} The outer composition of exact flows from step $j+1$ to $k$ satisfies:
\[
    \|\Phi^{(i_k)}_{\tau_k} \circ \cdots \circ \Phi^{(i_{j+1})}_{\tau_{j+1}}(a)
     - \Phi^{(i_k)}_{\tau_k} \circ \cdots \circ \Phi^{(i_{j+1})}_{\tau_{j+1}}(b)\|_{L^2}
    \leq e^{\omega(T-t_j)} \|a - b\|_{L^2}.
\]

\emph{(ii) Accuracy of approximate sub-flows (S1).} The local solver error at step $j$ satisfies:
\[
    \|\Phi^{(i_j)}_{\tau_j}(\tilde{v}_{j-1})
      - \widehat{\Phi}^{(i_j)}_{\tau_j}(\tilde{v}_{j-1})\|_{L^2}
    \leq \delta_{i_j} \tau_j^{q+1} \|\tilde{v}_{j-1}\|_{H^s}.
\]

\emph{(iii) Stability of approximate sub-flows (S2).} Iterating Assumption~\ref{ass:approx}~(S2) yields:
\[
    \|\tilde{v}_{j-1}\|_{H^s}
    \leq e^{\tilde{\omega} t_{j-1}} \|u_0\|_{H^s}.
\]

Combining (i)--(iii), each term in the telescoping sum is bounded by:
\[
    \|w^{(j)} - w^{(j-1)}\|_{L^2}
    \leq \delta_{i_j} \tau_j^{q+1}
         e^{\omega(T-t_j) + \tilde{\omega} t_{j-1}} \|u_0\|_{H^s}.
\]
Since $\omega(T - t_j) + \tilde{\omega} t_{j-1} \leq \bar{\omega} T$ with $\bar{\omega} = \max(\omega, \tilde{\omega})$ and $\delta_{i_j} \leq C_{\text{sol}} = \max_i \delta_i$, we obtain:
\[
    \|\Psi_\theta(u_0) - \widehat{\Psi}_\theta(u_0)\|_{L^2}
    \leq \sum_{j=1}^k \|w^{(j)} - w^{(j-1)}\|_{L^2}
    \leq C_{\text{sol}} e^{\bar{\omega} T}
        \left(\sum_{j=1}^k \tau_j^{q+1}\right) \|u_0\|_{H^s},
\]
which is the desired bound.
\end{proof}

\begin{corollary}[Total error rate]\label{cor:total_error}
For an order-$p$ splitting scheme with uniform step size $h = T/N$ and order-$q$ sub-solvers,
\[
    \|\mathcal{S}(x) - \widehat{\Psi}_N(x)\|_{L^2}
    \leq C_{\emph{split}} h^p + C_{\emph{sol}} h^q.
\]
When $q \geq p$, the splitting error dominates and the total error is $\mathcal{O}(h^p)$.
\end{corollary}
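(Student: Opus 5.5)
The corollary follows by combining the two halves of Theorem~\ref{thm:error_decomp_appendix}, specialized to the uniform schedule $\theta_N$. This schedule is the order-$p$ composition (Lie--Trotter, Strang, or Yoshida) repeated $N$ times with macro-step $h=T/N$. First apply the triangle-inequality decomposition
\[
\|\mathcal{S}(x)-\widehat{\Psi}_N(x)\|_{L^2}\le \|\mathcal{S}(x)-\Psi_N(x)\|_{L^2}+\|\Psi_N(x)-\widehat{\Psi}_N(x)\|_{L^2},
\]
and then bound the two terms separately.

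\textbf{Splitting term.} Here I invoke the corollary on higher-order splittings following Theorem~\ref{thm:existence}. The argument repeats the proof of Theorem~\ref{thm:existence} with the first-order BCH truncation replaced by the order-$p$ one. (R3) controls the nested commutators up to depth $p$, which gives a local error $C h^{p+1}\|v\|_{H^{s_p}}$. Lady Windermere's fan with (R2) then sums $N$ such local errors into $N e^{\omega T}Ch^{p+1}\sup_t\|u(t)\|_{H^{s_p}}=C_{\mathrm{split}}h^p$. The constant $C_{\mathrm{split}}$ absorbs $e^{\omega T}$ and the uniform-in-$\Lambda$ bound on $\sup_t\|u(t)\|_{H^{s_p}}$, which (A1) and compactness provide.

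\textbf{Primitive term.} Apply the explicit bound of Theorem~\ref{thm:error_decomp_appendix}. The schedule has $k=mN$ sub-steps, where $m$ is the fixed number of stages per macro-step. Each sub-step has duration $\tau_j=c_j h$, with stage coefficients $c_j$ bounded independently of $N$. Hence $\sum_j\tau_j^{q+1}\le m N\,(\max|c_j|)^{q+1}h^{q+1}=m(\max|c_j|)^{q+1}T\,h^{q}$. Substituting gives $C_{\mathrm{sol}}'h^q$, where $C_{\mathrm{sol}}'=C_{\mathrm{sol}}e^{\bar\omega T}mT(\max|c_j|)^{q+1}\sup_{\Lambda}\|u_0\|_{H^s}$. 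The statement's $C_{\mathrm{sol}}$ should be read as this rescaled constant.

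\textbf{Conclusion and main obstacle.} Adding the two bounds gives $C_{\mathrm{split}}h^p+C_{\mathrm{sol}}h^q$. When $q\ge p$ and $h\le 1$, we have $h^q\le h^p$, so the total is $\mathcal{O}(h^p)$ and splitting dominates. The main obstacle is negative coefficients. Yoshida-type schemes of order $p\ge3$ necessarily contain some $c_j<0$, which means backward sub-flows. The primitive bound must then use $|\tau_j|$, and (R2) and (S2) must hold for negative times. Both are stated for $|t|\le T$, so this is consistent with $\mathcal{C}$ allowing $\tau\in\mathbb{R}$. For genuinely irreversible parts such as diffusion, however, the backward sub-flow is ill-posed. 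For those parts I would either restrict to $p\le2$ or use complex-coefficient splittings, and I would note this caveat explicitly.
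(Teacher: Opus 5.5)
Your proposal is correct and follows the same route the paper implicitly takes. The corollary is the triangle-inequality decomposition of Theorem~\ref{thm:error_decomp_appendix}, with the splitting term bounded by the higher-order splitting corollary and the primitive term obtained by substituting uniform steps into $\sum_j\tau_j^{q+1}\propto N h^{q+1}=T h^q$. Your two side remarks are accurate refinements the paper leaves unstated: the corollary's $C_{\mathrm{sol}}$ is a rescaled constant absorbing $e^{\bar\omega T}$, $T$ and $\|u_0\|_{H^s}$, and real-coefficient schemes of order $p\ge 3$ require backward sub-flows, which the paper's restriction $\tau_j>0$ (and the ill-posedness of backward diffusion) excludes.
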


\begin{remark}[Interpretation]
Theorem~\ref{thm:existence} guarantees that the \emph{ideal} splitting error (first term) can be made arbitrarily small by choosing appropriate compositions of \emph{exact} sub-flows. Theorem~\ref{thm:error_decomp_appendix} shows how this ideal error is perturbed when sub-flows are replaced by approximate primitives. Together, they justify using finite composite flows of primitives as a universal function class for surrogate solution operators.
\end{remark}

\subsection{Proof of Theorem~\ref{thm:optimal_policy} (Existence of $\varepsilon$-optimal policy)}

We now justify the existence of an $\varepsilon$-optimal splitting policy and its approximation by a neural network.

\paragraph{Part 1: Continuity of $\mathcal{L}$.}
Under the assumptions above, the mapping
\[
    (x,\theta) \mapsto \widehat{\Psi}_\theta(x)
\]
is continuous in both arguments: continuity in $\{\tau_j\}$ and operator weights follows from continuous dependence of flows on time parameters and the softmax relaxation in Definition~\ref{def:strategy}; continuity in $x \in \Lambda$ follows from uniform well-posedness and stability of the flows. Since the $L^2$ norm is continuous, $\mathcal{L}(x,\theta)$ is continuous on $\Lambda \times \Theta$.

\paragraph{Part 2: Existence of an $\varepsilon$-optimal selection.}
We restrict attention to a compact subset $\bar{\Theta} \subset \Theta$ where $|\tau_j| \in [\tau_{\min}, \tau_{\max}]$ for fixed $0 < \tau_{\min} < \tau_{\max} < \infty$ and $k \leq K_{\max}$. Intuitively, extremely small or large time steps are either redundant or unstable and can be excluded without loss of optimality.

For each $x \in \Lambda$, consider the $\varepsilon$-sublevel set:
\[
    \Theta^*_\varepsilon(x)
    = \Big\{\theta \in \bar{\Theta}
        : \mathcal{L}(x, \theta)
          \leq \inf_{\theta' \in \bar{\Theta}} \mathcal{L}(x, \theta')
             + \varepsilon
      \Big\}.
\]
By compactness of $\bar{\Theta}$ and continuity of $\mathcal{L}$, the infimum is attained and each $\Theta^*_\varepsilon(x)$ is non-empty and closed. Berge's maximum theorem~\citep{Berge1963} implies that the argmin correspondence $x \mapsto \arg\min_{\theta} \mathcal{L}(x,\theta)$ is upper hemicontinuous with compact values, and standard perturbation arguments show that the $\varepsilon$-argmin correspondence $x \mapsto \Theta^*_\varepsilon(x)$ inherits these properties. In particular, for fixed $\varepsilon > 0$, there exists a selection $\theta^*_\varepsilon(x)$ that is Borel-measurable and $\varepsilon$-optimal for each $x$.

When we additionally view $\bar{\Theta}$ through the continuous relaxation $\iota(\bar{\Theta}) \subset \mathbb{R}^{K_{\max}(n+1)+1}$, we can apply approximate selection results (see, e.g.,~\citep{Michael1956}) to obtain a \emph{continuous} selection that is $\varepsilon$-optimal up to an arbitrarily small slack. For the purposes of this paper, we assume the existence of such a continuous $\varepsilon$-optimal policy $\theta^*_\varepsilon: \Lambda \to \bar{\Theta}$.
\smallskip

\paragraph{Part 3: Neural representation.}
Since $\Lambda$ is compact and $\theta^*_\varepsilon$ is continuous, it is uniformly continuous. By the Universal Approximation Theorem~\citep{Cybenko1989, Hornik1989}, there exists a feedforward neural network $\pi_\phi: \Lambda \to \bar{\Theta}$ such that:
\[
    \sup_{x \in \Lambda} \|\pi_\phi(x) - \theta^*_\varepsilon(x)\| < \delta
\]
for any prescribed $\delta > 0$. By continuity of $\mathcal{L}(x,\theta)$ in $\theta$, choosing $\delta$ sufficiently small ensures that $\pi_\phi$ is also $\varepsilon$-optimal (up to an arbitrarily small slack). This yields the desired learnable policy family.
\qed

\begin{remark}[Why $\varepsilon$-optimality?]
We work with near-optimal rather than exactly optimal strategies because exact minimizers may jump discontinuously as $x$ varies (e.g., when multiple compositions tie). Allowing an $\varepsilon$ margin ensures that we can select policies that vary smoothly with $x$, which is crucial for approximation by neural networks.
\end{remark}

\begin{remark}[Training via Evolution Strategies]
In practice, we do not solve the selection problem analytically. Instead, we parameterize $\pi_\phi$ as a neural network and minimize $\mathbb{E}_{x \sim \rho}[\mathcal{L}(x, \pi_\phi(x))]$ via Evolution Strategies~\citep{Salimans2017}. Since $\widehat{\Psi}_\theta$ may involve non-differentiable black-box solvers, ES provides gradient estimates of the form:
\[
    \nabla_\phi \mathbb{E}[\mathcal{L}]
    \approx \frac{1}{\sigma} \mathbb{E}_{\epsilon \sim \mathcal{N}(0,I)}
        \big[\epsilon \cdot \mathcal{L}(x, \pi_{\phi+\sigma\epsilon}(x))\big],
\]
which are compatible with our compositional setting.
\end{remark}

\subsection{Proof of Theorem~\ref{thm:fitting} (Universal approximation for fitting)}

Finally, we combine the previous results to obtain the universal approximation theorem for the surrogate fitting regime.

By Theorem~\ref{thm:existence}, for any $\varepsilon_1 > 0$, there exists an exact composite flow $\Psi \in \mathcal{C}$ such that the splitting error satisfies:
\[
    \sup_{x \in \Lambda} \|\mathcal{S}(x) - \Psi(x)\|_{L^2} < \varepsilon_1.
\]
By Theorem~\ref{thm:error_decomp_appendix}, if approximate sub-flows satisfy Assumption~\ref{ass:approx} with order $q$ and step sizes $\{\tau_j\}$ small enough, the sub-solver error satisfies:
\[
    \sup_{x \in \Lambda} \|\Psi(x) - \widehat{\Psi}(x)\|_{L^2} < \varepsilon_2.
\]

By the triangle inequality, the total error satisfies:
\[
    \sup_{x \in \Lambda} \|\mathcal{S}(x) - \widehat{\Psi}(x)\|_{L^2}
    \leq \varepsilon_1 + \varepsilon_2.
\]
Given any target $\varepsilon > 0$, choose $\varepsilon_1 = \varepsilon_2 = \varepsilon/2$ and appropriate compositions and step sizes to achieve these bounds.

By Theorem~\ref{thm:optimal_policy}, there exists a learnable policy $\theta^*$ corresponding to such a composition, and this policy can be approximated to arbitrary precision by a feedforward neural network. This yields the universal approximation statement for the surrogate fitting regime. \qed

\begin{remark}[Practical implications]
For Evolution Strategies with classical numerical sub-solvers (e.g., RK4, spectral methods), Assumption~\ref{ass:approx} is typically satisfied with high order $q$, so the dominant limitation is the splitting error. When neural operators serve as sub-solvers, the stability condition (S2) becomes the main design constraint.
\end{remark}

\begin{remark}[Structure preservation]
If a structural property $\mathcal{P}$ (e.g., positivity, conservation of mass) is preserved by each sub-flow $\Phi^{(i)}_t$ and under composition, then any $\Psi \in \mathcal{C}$ exactly preserves $\mathcal{P}$. With numerical sub-solvers, $\mathcal{P}$ is preserved up to the solver accuracy. HyCOP therefore inherits any invariants preserved at the level of primitives.
\end{remark}

\subsection{Generalization under distributional shift}
\label{sec:gen_shift}

The preceding results establish that finite composite flows of primitives can approximate the solution operator uniformly on compact sets, and that the practical error decomposes into a splitting error plus a sub-solver error. We now show how these properties can be leveraged to obtain simple generalization guarantees under \emph{covariate shift}, i.e., when the distribution of inputs (initial conditions, parameters, boundary conditions) changes between training and test.

\paragraph{Setup.}
Recall that inputs take the form:
\[
    x = (\mu, u_0, b, \Omega)  \in \Lambda,\, t\in \mathcal{T}
\]
and that we assume $\Lambda$ is compact (e.g., parameters, initial conditions, and boundary data lie in bounded subsets of suitable Banach spaces). We equip $\Lambda$ with a metric $d_\Lambda$ that reflects the natural distances between inputs (e.g., an $L^2$ or $H^s$ metric on $u_0$ and $b$, and Euclidean distance on $\mu$). Let $\rho$ denote the training distribution on $\Lambda$, and $\rho'$ a test distribution that may differ from $\rho$ (e.g., due to shifted initial conditions or parameter ranges).

We make explicit the Lipschitz continuity of the solution operator and the HyCOP surrogate with respect to inputs.
To make notations compact we suppress the time $t$ in the following.

\begin{assumption}[Lipschitz dependence on inputs]\label{ass:lipschitz_inputs}
Under the stable split system assumptions, there exist constants $L_{\mathcal{S}}, L_{\text{HyCOP}} > 0$ such that, for all $x,x' \in \Lambda$,
\begin{align*}
    \|\mathcal{S}(x) - \mathcal{S}(x')\|_{L^2}
    &\leq L_{\mathcal{S}} \, d_\Lambda(x,x'), \\
    \|\widehat{\Psi}_\theta(x) - \widehat{\Psi}_\theta(x')\|_{L^2}
    &\leq L_{\text{HyCOP}} \, d_\Lambda(x,x'),
\end{align*}
for any fixed strategy $\theta$.
\end{assumption}

\begin{remark}
Assumption~\ref{ass:lipschitz_inputs} is a standard consequence of well-posedness and stability for the full PDE and for the approximate flows. In particular, the stability bounds (R2) and (S2) together imply that perturbing $(\mu,u_0,b,\Omega)$ at $t=0$ produces changes in $u(t)$ and $\widehat{\Psi}_\theta(x)$ that are at most exponentially amplified in time, uniformly over $t \in [0,T]$. On a compact parameter set $\Lambda$, this yields finite Lipschitz constants $L_{\mathcal{S}}, L_{\text{HyCOP}}$.
\end{remark}

Under Assumption~\ref{ass:lipschitz_inputs}, the loss $\mathcal{L}(x,\theta)$ is Lipschitz in $x$ as well.

\begin{lemma}[Lipschitz loss]\label{lem:lipschitz_loss}
For any fixed $\theta$, the loss:
\[
    \mathcal{L}(x,\theta) = \|\mathcal{S}(x) - \widehat{\Psi}_\theta(x)\|_{L^2},
\]
satisfies:
\[
    |\mathcal{L}(x,\theta) - \mathcal{L}(x',\theta)|
    \leq L_{\text{tot}} \, d_\Lambda(x,x'),
\]
for all $x,x' \in \Lambda$, where $L_{\text{tot}} = L_{\mathcal{S}} + L_{\text{HyCOP}}$.
\end{lemma}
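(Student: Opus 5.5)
The plan is to derive the bound directly from Assumption~\ref{ass:lipschitz_inputs} using only the reverse triangle inequality in $L^2$ and the triangle inequality. Fix $\theta$ and $x,x'\in\Lambda$. Write $a=\mathcal{S}(x)-\widehat{\Psi}_\theta(x)$ and $a'=\mathcal{S}(x')-\widehat{\Psi}_\theta(x')$. Then $\mathcal{L}(x,\theta)=\|a\|_{L^2}$ and $\mathcal{L}(x',\theta)=\|a'\|_{L^2}$.

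First, the reverse triangle inequality gives $\bigl|\|a\|_{L^2}-\|a'\|_{L^2}\bigr|\le\|a-a'\|_{L^2}$. Next, regroup the difference as
\[
a-a'=\bigl(\mathcal{S}(x)-\mathcal{S}(x')\bigr)-\bigl(\widehat{\Psi}_\theta(x)-\widehat{\Psi}_\theta(x')\bigr),
\]
and apply the ordinary triangle inequality. This bounds $\|a-a'\|_{L^2}$ by $\|\mathcal{S}(x)-\mathcal{S}(x')\|_{L^2}+\|\widehat{\Psi}_\theta(x)-\widehat{\Psi}_\theta(x')\|_{L^2}$. Finally, invoke the two Lipschitz bounds of Assumption~\ref{ass:lipschitz_inputs}, with constants $L_{\mathcal{S}}$ and $L_{\text{HyCOP}}$, to get $(L_{\mathcal{S}}+L_{\text{HyCOP}})\,d_\Lambda(x,x')=L_{\text{tot}}\,d_\Lambda(x,x')$.

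There is no real obstacle, because all the analytic content sits in Assumption~\ref{ass:lipschitz_inputs}. The one point needing care is that $L_{\text{HyCOP}}$ is uniform in the fixed strategy $\theta$, and the assumption states it exactly that way. It is also worth noting that both terms are evaluated at the same query time $t$, which the paper suppresses. So the constant $L_{\text{tot}}$ is independent of $\theta$, and the bound can later be used uniformly over the policy class in the covariate-shift argument.
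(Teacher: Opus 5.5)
Your proof is correct and matches the paper's argument step for step: the reverse triangle inequality, then regrouping with the triangle inequality, then the two Lipschitz bounds of Assumption~\ref{ass:lipschitz_inputs}. Your remark that $L_{\text{HyCOP}}$ is uniform in $\theta$ goes beyond what the lemma needs, since the lemma only requires the bound for a fixed $\theta$.
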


\begin{proof}
By the reverse triangle inequality,
\begin{align*}
    |\mathcal{L}(x,\theta) - \mathcal{L}(x',\theta)|
    &= \big|\|\mathcal{S}(x) - \widehat{\Psi}_\theta(x)\|_{L^2}
          - \|\mathcal{S}(x') - \widehat{\Psi}_\theta(x')\|_{L^2}\big| \\
    &\leq \|\big(\mathcal{S}(x) - \widehat{\Psi}_\theta(x)\big)
           - \big(\mathcal{S}(x') - \widehat{\Psi}_\theta(x')\big)\|_{L^2} \\
    &\leq \|\mathcal{S}(x) - \mathcal{S}(x')\|_{L^2}
        + \|\widehat{\Psi}_\theta(x) - \widehat{\Psi}_\theta(x')\|_{L^2} \\
    &\leq \big(L_{\mathcal{S}} + L_{\text{HyCOP}}\big)\, d_\Lambda(x,x'),
\end{align*}
which gives the desired bound.
\end{proof}

We can now relate the expected loss under two different input distributions via a Wasserstein distance.

\begin{definition}[Wasserstein-1 distance]
Let $\mathcal{P}(\Lambda)$ denote the set of probability measures on $(\Lambda, d_\Lambda)$. The Wasserstein-1 distance between $\rho, \rho' \in \mathcal{P}(\Lambda)$ is:
\[
    W_1(\rho,\rho')
    = \inf_{\gamma \in \Pi(\rho,\rho')}
        \int_{\Lambda \times \Lambda} d_\Lambda(x,x') \, d\gamma(x,x'),
\]
where $\Pi(\rho,\rho')$ is the set of couplings of $\rho$ and $\rho'$.
\end{definition}

\begin{theorem}[Generalization under covariate shift]\label{thm:shift_generalization}
Let Assumption~\ref{ass:lipschitz_inputs} hold, and fix a strategy $\theta$. Then, for any two distributions $\rho, \rho'$ on $\Lambda$,
\[
    \left|\mathbb{E}_{x \sim \rho'}[\mathcal{L}(x,\theta)]
          - \mathbb{E}_{x \sim \rho}[\mathcal{L}(x,\theta)]\right|
    \leq L_{\text{tot}} \, W_1(\rho,\rho'),
\]
where $L_{\text{tot}} = L_{\mathcal{S}} + L_{\text{HyCOP}}$.
\end{theorem}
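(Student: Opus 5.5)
The plan is the Kantorovich--Rubinstein duality argument, phrased directly through couplings so that no duality theorem is needed. Fix the strategy $\theta$ and write $g(x) := \mathcal{L}(x,\theta)$. By Lemma~\ref{lem:lipschitz_loss}, $g$ is $L_{\text{tot}}$-Lipschitz on $(\Lambda,d_\Lambda)$. It is also bounded: $\Lambda$ is compact, and $g$ is continuous because it is Lipschitz, so both expectations are finite and well defined. Measurability follows from continuity as well.

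The key step uses an arbitrary coupling $\gamma\in\Pi(\rho,\rho')$. Since the marginals of $\gamma$ are $\rho$ and $\rho'$, we have
\[
\mathbb{E}_{x\sim\rho'}[g(x)]-\mathbb{E}_{x\sim\rho}[g(x)] = \int_{\Lambda\times\Lambda}\big(g(x')-g(x)\big)\,d\gamma(x,x').
\]
Taking absolute values, moving the absolute value inside the integral, and applying the Lipschitz bound pointwise gives
\[
\Big|\mathbb{E}_{\rho'}[g]-\mathbb{E}_{\rho}[g]\Big| \le L_{\text{tot}}\int_{\Lambda\times\Lambda} d_\Lambda(x,x')\,d\gamma(x,x').
\]
The left-hand side does not depend on $\gamma$. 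Taking the infimum over $\gamma\in\Pi(\rho,\rho')$ therefore yields $L_{\text{tot}}\,W_1(\rho,\rho')$, which is the claim.

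No step is a real obstacle; this is the trivial direction of Kantorovich--Rubinstein, so only two points need checking. First, $\Pi(\rho,\rho')$ must be nonempty; the product coupling $\rho\otimes\rho'$ always belongs to it. Second, the integrals must be finite. This holds because $d_\Lambda$ is bounded on the compact set $\Lambda$, so $W_1(\rho,\rho')<\infty$. Note also that the time variable $t$ is suppressed as in the statement; if $t$ is random, the same argument applies with $\Lambda\times\mathcal{T}$ and a product metric.
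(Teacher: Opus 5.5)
Your proof is correct and follows the paper's argument: you invoke Lemma~\ref{lem:lipschitz_loss} to get that $\mathcal{L}(\cdot,\theta)$ is $L_{\text{tot}}$-Lipschitz, then bound the gap in expectations by $L_{\text{tot}}W_1(\rho,\rho')$. The only difference is that the paper cites Kantorovich--Rubinstein duality for the last step, while you prove the needed easy direction directly from an arbitrary coupling, which makes your version self-contained.
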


\begin{proof}
By Lemma~\ref{lem:lipschitz_loss}, the loss $\mathcal{L}(\cdot,\theta)$ is $L_{\text{tot}}$-Lipschitz on $(\Lambda, d_\Lambda)$. By the Kantorovich--Rubinstein duality for $W_1$, for any $L$-Lipschitz function $f$ we have:
\[
    \left|\mathbb{E}_{\rho'}[f] - \mathbb{E}_{\rho}[f]\right|
    \leq L \, W_1(\rho,\rho').
\]
Applying this with $f(x) = \mathcal{L}(x,\theta)$ and $L = L_{\text{tot}}$ gives the result.
\end{proof}

\begin{remark}[Interpretation for PDE inputs]
If $\rho$ represents a distribution over initial conditions, parameters, and boundary conditions used during training, and $\rho'$ is a test distribution that shifts these (e.g., new initial condition statistics, modified parameter ranges), then Theorem~\ref{thm:shift_generalization} quantifies how much the \emph{expected} surrogate error can deteriorate under this shift. If the shift is small in $W_1$ (for example, small perturbations in initial conditions in $H^s$), the increase in expected error is at most linear in the size of the shift.
\end{remark}

\begin{remark}[Combining with uniform approximation]
In the ideal surrogate fitting regime of Theorem~\ref{thm:fitting}, HyCOP achieves a uniform bound:
\[
    \sup_{x \in \Lambda} \mathcal{L}(x,\theta^*) \leq \varepsilon.
\]
In that case, \emph{any} distributional shift supported in $\Lambda$ satisfies:
\[
    \mathbb{E}_{x \sim \rho'}[\mathcal{L}(x,\theta^*)]
    \leq \varepsilon,
\]
so generalization is automatic and independent of $W_1(\rho,\rho')$. The more interesting regime is when we only know that the expected training error $\mathbb{E}_{x \sim \rho}[\mathcal{L}(x,\theta)]$ is small; Theorem~\ref{thm:shift_generalization} then quantifies how far we can move in input space before the expected test error deteriorates.
\end{remark}

\begin{remark}[From population to finite-sample generalization]
Theorem~\ref{thm:shift_generalization} is a \emph{distributional} statement: it bounds the gap between population risks under $\rho$ and $\rho'$. Standard statistical learning tools (e.g., Rademacher complexity or covering number bounds for the family $\{\mathcal{L}(\cdot,\pi_\phi(\cdot))\}$) can be combined with this result to control three gaps:
\[
    \underbrace{\mathbb{E}_{\rho'}[\mathcal{L}] - \mathbb{E}_{\rho}[\mathcal{L}]}_{\text{distribution shift}}
    \;+\;
    \underbrace{\mathbb{E}_{\rho}[\mathcal{L}] - \mathbb{E}_{\hat{\rho}_N}[\mathcal{L}]}_{\text{finite-sample generalization}}
    \;+\;
    \underbrace{\mathbb{E}_{\hat{\rho}_N}[\mathcal{L}] - \mathbb{E}_{\hat{\rho}_N}[\widehat{\mathcal{L}}]}_{\text{optimization}},
\]
where $\hat{\rho}_N$ is the empirical training distribution and $\widehat{\mathcal{L}}$ is the loss of the learned policy. We leave a full statistical learning theory for HyCOP to future work, but Theorem~\ref{thm:shift_generalization} shows that covariate shift enters the picture through a simple and interpretable Wasserstein term.
\end{remark}

\section{HyCOP Training Details}
\label{app:training}

\subsection{Policy inputs and outputs}
\label{app:training:policy}

The policy conditions on the PDE query $x=(\mu,u_0,b,\Omega)$, the query time $t$, and a compact feature
vector $f(x)\in\mathbb{R}^m$ capturing regime information and coarse state statistics:
\[
\tilde{x}=(x,t,f(x)),\qquad 
f(x)=\big[f_{\mathrm{phys}}(x),\, f_{\mathrm{stats}}(u_0)\big].
\]
These summaries combine dimensionless regime indicators (e.g., P\'eclet, Damk\"ohler) with scale-free state statistics (e.g., coefficient of variation of $u_0$ and its gradients), helping the policy allocate time across primitives. Specific feature definitions for each PDE system are given in Appendix~\ref{app:exp_configs}.

Given $\tilde{x}=(x,t,f(x))$, the policy predicts a $k$-step program: operator choices and positive durations $\tau_r > 0$ (parameterized via softplus to ensure positivity). Operator selection is categorical at test time and trained via a softmax relaxation.

\subsection{Evolution Strategies training}
\label{app:training:es}

We optimize policy parameters with Evolution Strategies (ES), treating program execution as a black box.
While ES is a zeroth-order method, it is not naive random search: the update rule provides an unbiased 
estimate of $\nabla_\theta J(\theta)$ in expectation.
Each generation samples $\epsilon_i\sim\mathcal{N}(0,I)$ and evaluates antithetic losses
$\mathcal{L}(\theta\pm\sigma\epsilon_i)$; this symmetric evaluation halves gradient variance compared 
to one-sided sampling.
We then apply rank-based fitness shaping, which replaces raw losses with their ranks before computing 
updates, making optimization robust to outliers and invariant to loss scaling.
The parameter update is
\[
g=\frac{1}{2M\sigma}\sum_{i=1}^M\big(w_i^+ - w_i^-\big)\epsilon_i,
\qquad
\theta\leftarrow \theta-\eta g.
\]
Together, these variance-reduction techniques enable stable training despite non-differentiable 
primitive evaluations.
ES is compatible with hybrid dictionaries (numerical and learned primitives) and supports transfer 
by freezing selected primitives while updating only the policy and/or added residual modules. (Fig.~\ref{fig:hycop_training_es})


\begin{figure}[t]
\centering
\resizebox{\columnwidth}{!}{%
\begin{tikzpicture}[
  font=\small,
  >=Latex,
  panel/.style={draw, rounded corners=3pt, thick, fill=white, inner sep=6pt},
  panelthin/.style={draw, rounded corners=3pt, semithick, fill=white, inner sep=5pt},
  title/.style={font=\bfseries},
  tag/.style={fill=white, draw=none, inner sep=1.2pt, rounded corners=2pt, font=\scriptsize},
  box/.style={draw, rounded corners=3pt, semithick, fill=white, inner sep=5pt, align=center},
  arrow/.style={->, thick},
  dashedpanel/.style={draw, dashed, rounded corners=6pt, semithick}
]

\node[panelthin, minimum width=3.2cm, minimum height=1.05cm, align=left] (query) at (0,0) {};
\node[title, anchor=north west] at ($(query.north west)+(0.18,0.45)$) {Query};
\node[anchor=north west] at ($(query.north west)+(0.18,-0.15)$)
{$x=(u_0,\mu,b,\Omega,\mathcal{T})$};

\node[panelthin, minimum width=3.2cm, minimum height=1.05cm, right=14mm of query] (featbox) {};
\node[title] at ($(featbox.north)+(0,-0.28)$) {Features $f(x)$};
\node[font=\scriptsize] at ($(featbox.center)+(0,-0.10)$)
{Pe/Da/Fr, variances, $T$, \ldots};

\node[panelthin, minimum width=4.0cm, minimum height=1.05cm, right=14mm of featbox] (policy) {};
\node[title] at ($(policy.north)+(0,-0.28)$) {Policy $\pi_\theta$};
\node[font=\scriptsize] at ($(policy.center)+(0,-0.10)$)
{program $(i_1,\tau_1),\dots,(i_k,\tau_k)$};

\node[panelthin, minimum width=3.6cm, minimum height=1.05cm, right=14mm of policy] (exec) {};
\node[title] at ($(exec.north)+(0,-0.28)$) {Compose \& execute};
\node[font=\scriptsize] at ($(exec.center)+(0,-0.10)$)
{$\widehat{\Phi}(x;\pi_\theta)$};

\node[panelthin, minimum width=3.1cm, minimum height=1.05cm, right=14mm of exec] (out) {};
\node[title] at ($(out.north)+(0,-0.28)$) {Prediction};
\node[font=\scriptsize] at ($(out.center)+(0,-0.10)$)
{$\{u(t)\}_{t\in\mathcal{T}}$};

\draw[arrow] (query.east) -- (featbox.west) node[pos=0.5, above, tag] {extract};
\draw[arrow] (featbox.east) -- (policy.west) node[pos=0.5, above, tag] {cond.};
\draw[arrow] (policy.east) -- (exec.west);
\draw[arrow] (exec.east) -- (out.west);

\node[box] (shape) at ([yshift=-22mm]$(query.south)!0.5!(out.south)$)
{fitness shaping\\[-2pt]\scriptsize rank transform};

\node[box, left=12mm of shape] (fitness)
{evaluate fitness\\[-2pt]\scriptsize run composite\\[-2pt]\scriptsize loss $\mathcal{L}$};

\node[box, left=12mm of fitness] (anti)
{antithetic\\ sampling\\[-2pt]\scriptsize $\theta\pm\sigma\epsilon$};

\node[box, right=12mm of shape] (grad)
{ES gradient\\[-2pt]\scriptsize estimate};

\node[box, right=12mm of grad] (upd)
{parameter\\ update};

\node[dashedpanel, fit=(anti)(fitness)(shape)(grad)(upd), inner sep=8pt] (loop) {};

\draw[arrow] (anti) -- (fitness);
\draw[arrow] (fitness) -- (shape);
\draw[arrow] (shape) -- (grad);
\draw[arrow] (grad) -- (upd);

\draw[arrow] (upd.north) .. controls +(0,12mm) and +(0,-12mm) .. (policy.south);
\node[tag] at ($(upd.north)+(0.6,9mm)$) {iterate};

\end{tikzpicture}%
}
\caption{\textbf{HyCOP training with Evolution Strategies (ES).}
HyCOP conditions the policy on low-dimensional physics-based features $f(x)$ (e.g., P\'eclet/Damk\"ohler/Froude numbers and state statistics) to predict a split program (primitive choices and durations).
We optimize policy parameters via ES using black-box evaluations of the composed operator loss.}
\label{fig:hycop_training_es}
\end{figure}


\begin{table}[p]
\caption{\textbf{Ablations and analysis.}
All methods train on the same data; HyCOP uses identical ES hyperparameters ($M{=}500$, $\sigma{=}0.02$, 200 generations) for all main benchmarks without per-problem tuning. Benchmark-specific details in Appendix text. Adaptation experiments (\S\ref{sec:exp:regime_b}), which use
only 120 ADR samples, train with a smaller ES budget; details in Appendix~\ref{app:ad_adr}. }
\label{tab:ablations}
\centering

\vspace{0.3em}
{\footnotesize\textbf{(a) Conditioning features: dimensionless vs.\ raw IC inputs} (1D SWE)}\\[3pt]
\footnotesize
\setlength{\tabcolsep}{4pt}
\begin{tabular}{lc ccc ccc}
\toprule
& & \multicolumn{3}{c}{ID} & \multicolumn{3}{c}{OOD} \\
\cmidrule(lr){3-5}\cmidrule(lr){6-8}
Features & Dims & Rel.\ $L^2$ & RMSE & Max Err & Rel.\ $L^2$ & RMSE & Max Err \\
\midrule
Raw IC         & 129 & \textbf{2.79\sn{-2}} & \textbf{4.27\sn{-2}} & 1.21\sn{0}  & 4.54\sn{-2} & 8.09\sn{-2} & 1.04\sn{0}  \\
\textbf{Dimensionless} & \textbf{4} & 2.97\sn{-2} & 5.27\sn{-2} & \textbf{4.66\sn{-1}} & \textbf{3.07\sn{-2}} & \textbf{5.27\sn{-2}} & \textbf{4.66\sn{-1}} \\
\midrule
\multicolumn{2}{l}{\textit{OOD improvement}} & & & & $32\%$ & $35\%$ & $55\%$ \\
\bottomrule
\end{tabular}

\vspace{1.0em}

{\footnotesize\textbf{(b) ES hyperparameter sensitivity} (2D SWE, Rel.\ $L^2$ $\downarrow$, 200 generations)}\\[3pt]
\footnotesize
\setlength{\tabcolsep}{5pt}
\begin{tabular}{lccccc}
\toprule
$M$ & $\sigma{=}0.005$ & $\sigma{=}0.01$ & $\sigma{=}0.02$ & $\sigma{=}0.05$ & $\sigma{=}0.1$ \\
\midrule
100 & 0.274 & 0.065 & 0.035 & 0.308 & 0.488 \\
250 & 0.125 & 0.102 & 0.031 & 0.155 & 0.206 \\
500 & 0.052 & 0.026 & \textbf{0.028}$^\star$ & 0.392 & 0.612 \\
\bottomrule
\multicolumn{6}{l}{\scriptsize $^\star$Paper default. Even $M{=}100$ outperforms all monolithic baselines OOD ($>3.5\sn{-1}$).}
\end{tabular}

\vspace{1.0em}

{\footnotesize\textbf{(c) Zero-shot resolution transfer} (2D SWE, $T{=}0.3$; policy trained at $32{\times}32$)}\\[3pt]
\footnotesize
\setlength{\tabcolsep}{5pt}
\begin{tabular}{lcccc}
\toprule
Resolution & Rel.\ $L^2$ & RMSE & Max Err & Time (s/sample) \\
\midrule
$32{\times}32$ (trained) & 2.40\sn{-2} & 1.03\sn{-2} & 1.90\sn{-1} & 0.126 \\
$128{\times}128$ (zero-shot) & 3.37\sn{-2} & 2.06\sn{-2} & 6.89\sn{-1} & 0.645 \\
\bottomrule
\end{tabular}

\vspace{1.0em}

{\footnotesize\textbf{(d) Dictionary robustness} (2D SWE / AD$\to$ADR)}\\[3pt]
\footnotesize
\setlength{\tabcolsep}{4pt}
\begin{tabular}{l l c c}
\toprule
Experiment & Dictionary & Rel.\ $L^2$ & Notes \\
\midrule
\multicolumn{4}{l}{\emph{Redundant primitive (2D SWE):}} \\
Correct dict.           & \{Adv, Grav\}                          & 2.40\sn{-2} & --- \\
+ dummy reaction        & \{Adv, Grav, React$_{\text{dummy}}$\}  & 2.30\sn{-2} & React$_{\text{dummy}}$: 3.5\% time \\
\midrule
\multicolumn{4}{l}{\emph{Missing primitive (AD$\to$ADR):}} \\
AD zero-shot on ADR     & \{Adv, Diff\}                          & 1.81\sn{-1} & errors localize to reaction \\
+ UNO residual          & \{Adv, Diff, Resid.\}                  & 4.42\sn{-2} & policy relearned only \\
\bottomrule
\end{tabular}

\vspace{1.0em}

{\footnotesize\textbf{(e) Chaotic validation: 1D Kuramoto--Sivashinsky}
(trained: $W{\in}[24,40]$, $T{\in}[5,8]$)}\\[3pt]
\footnotesize
\setlength{\tabcolsep}{4pt}
\begin{tabular}{l cc cc}
\toprule
& \multicolumn{2}{c}{ID} & \multicolumn{2}{c}{OOD ($W{\in}[40,50]$, $T{\in}[8,20]$)} \\
\cmidrule(lr){2-3}\cmidrule(lr){4-5}
Method & SE ($\downarrow$) & KL ($\downarrow$) & SE ($\downarrow$) & KL ($\downarrow$) \\
\midrule
HyCOP & 9.74\sn{-2} & 4.72\sn{-2} & 5.75\sn{-2} & 4.95\sn{-2} \\
\bottomrule
\end{tabular}

\vspace{1.0em}

{\footnotesize\textbf{(f) Computational cost} (2D SWE)}\\[3pt]
\footnotesize
\setlength{\tabcolsep}{3.5pt}
\begin{tabular}{l c cccc}
\toprule
& Training & \multicolumn{4}{c}{Inference time per sample (s)} \\
\cmidrule(lr){3-6}
Model & Fwd.\ passes & 1-step & 5-step & 10-step & 20-step \\
\midrule
U-Net                        & 1.33\sn{7} & 0.042 & 0.026 & 0.049 & 0.100 \\
AR-Loc.\ Int.\ Diff.\ FNO   & 1.33\sn{7} & 0.173 & 0.258 & 0.289 & 0.327 \\
Poseidon (fine-tune)         & 4.86\sn{5} & 0.134 & 0.714 & 1.331 & 2.667 \\
HyCOP                        & 5.00\sn{5} & 0.680 & 1.005 & 1.205 & 1.341 \\
\bottomrule
\end{tabular}

\end{table}

\section{Experimental configurations and results}
\label{app:exp_configs}

We summarize PDE definitions, data distributions, train/test splits (ID\ and OOD), conditioning, and hyperparameters.

\paragraph{Common setup.}
\label{app:common_setup}
Unless stated otherwise, we use periodic boundaries. HyCOP predicts (i) primitive choices, (ii) normalized time allocations across sub-steps, and (iii) an adaptive program length ($k\in[3,18]$).
We train HyCOP with Evolution Strategies (ES)~\citep{Salimans2017} (population 500, $\sigma=0.02$, lr $5\!\times\!10^{-3}$, wd $10^{-3}$, antithetic sampling, rank-based shaping) for 200 generations.
FNO/Loc. Int. Diff. FNO follow PDEBench~\citep{PDEBench2022} and are trained with Adam (lr $10^{-3}$) for 500 epochs~\citep{li2020fourier,herde2024poseidon}. DeepONet~\citep{lu2021deeponet} and PINO~\citep{li2021pino} share the same Adam optimizer (lr $10^{-3}$, weight decay $10^{-4}$) and 500-epoch budget as FNO. DeepONet uses an unstacked branch--trunk architecture with $p{=}128$ basis coefficients per output channel and hidden width 256: a CNN branch encodes the initial state, and an MLP trunk encodes query coordinates $(x,y)$. PINO uses the same FNO backbone as our FNO baseline; we follow the two-phase schedule of~\citet{li2021pino}, where the first half of training uses data loss only and the second half adds a PDE-residual loss evaluated on the training grid.
All 1D systems use $N=64$; our 2D SWE and ADR benchmarks use $32\times 32$ with 10,000 training trajectories. The PDEBench 2D compressible Navier--Stokes dataset is used at its native $128\times 128$ resolution~\citep{PDEBench2022}.
When HyCOP is trained with variable query time, fixed-time baselines are evaluated at the same target time.

\paragraph{Time conditioning.}
HyCOP is trained with variable query time when the task involves time querying; fixed-time baselines (FNO/Loc. Int. Diff. FNO) are trained/evaluated at the reported target time.

\paragraph{Metrics.}
Given prediction $\hat{u}$ and reference $u$ on a grid with $N$ points (or $N_xN_y$ in 2D), we report
$\mathrm{RMSE}(\hat{u},u)=\sqrt{\frac{1}{N}\sum_{p}\|\hat{u}(p)-u(p)\|_2^2}$
and relative $L^2$ error
$\mathrm{Rel}L^2(\hat{u},u)=\frac{\|\hat{u}-u\|_2}{\|u\|_2}$,
where $\|\cdot\|_2$ denotes the discrete $\ell_2$ norm over grid points and state channels.
We also report $\mathrm{MaxErr}=\max_{p}\|\hat{u}(p)-u(p)\|_2$.

\paragraph{Spectral error by bands (fRMSE).}
Let $\hat{u}(k)$ denote the 2D discrete Fourier transform of $u$, and let 
$|k| = \sqrt{k_x^2 + k_y^2}$ be the radial wavenumber (normalized so that Nyquist corresponds to $|k|=0.5$).
We partition wavenumbers into disjoint bands:
$\mathcal{K}_{\text{low}} = \{k : |k| < 0.1\}$,
$\mathcal{K}_{\text{mid}} = \{k : 0.1 \le |k| < 0.3\}$,
$\mathcal{K}_{\text{high}} = \{k : |k| \ge 0.3\}$.
Define
\[
\mathrm{fRMSE}_{\mathcal{K}}(\hat{u},u)
=\sqrt{\frac{1}{|\mathcal{K}|}\sum_{k\in\mathcal{K}}\big|\hat{u}_{\mathrm{pred}}(k)-\hat{u}_{\mathrm{ref}}(k)\big|^2},
\]
and report $\mathrm{fRMSE}_{\text{low/mid/high}}$ for each band.

\paragraph{Boundary RMSE (bRMSE).}
For problems with non-periodic boundaries, let $\mathcal{B}\subset\Omega_h$ denote the boundary band
(the outermost $5\%$ of grid cells on each side). Define
\[
\mathrm{bRMSE}(\hat{u},u)=\sqrt{\frac{1}{|\mathcal{B}|}\sum_{p\in\mathcal{B}}\|\hat{u}(p)-u(p)\|_2^2}.
\]
For periodic-boundary benchmarks, bRMSE is reported only when a boundary module is present (e.g., wall-transfer).

\paragraph{Constraint RMSE (cRMSE).}
Let $C(u)\in\mathbb{R}^r$ denote diagnostic constraints/invariants (e.g., total mass $\int h\,dxdy$ for SWE).
We report
\[
\mathrm{cRMSE}(\hat{u},u)=\sqrt{\frac{1}{r}\sum_{j=1}^r\big(C_j(\hat{u})-C_j(u)\big)^2}.
\]
When the PDE does not admit the corresponding invariant (e.g., Fisher--KPP reactions), cRMSE is not applicable.

We compute fRMSE on Fourier magnitudes (not phases) and average over wavenumbers in each band,
so values can be small when spectra agree even if pointwise phases differ.

\subsection{Benchmark coverage}
\label{app:benchmark_coverage}
HyCOP is evaluated on PDE families spanning linear/nonlinear dynamics, stiffness, multi-physics coupling, discontinuities, and long-horizon behavior.
We use ``stiff'' for regimes where diffusive/source terms impose shorter time scales than advection, and ``multiscale'' for solutions with energy across multiple spatial frequencies and/or sharp localized gradients over the evaluation horizon.

\noindent\textbf{Regime notes.}
ADR varies P\'eclet and Damk\"ohler numbers through $(c_x,c_y,D_x,D_y,r)$, producing sharp fronts and reaction-dominated transients in parts of OOD.
SWE OOD includes extreme-Froude/transcritical configurations that induce steep gradients, wave interactions, and shock-like features; dam-break transfer further stresses discontinuities and boundary reflections.

\begin{table}[t]
\caption{\textbf{Benchmark coverage and validation tests.}}
\label{tab:benchmark_coverage}
\centering
\footnotesize
\setlength{\tabcolsep}{5pt}
\begin{tabular}{lccccc}
\toprule
System & Nonlinear & Multiphysics & Multiscale & Chaotic & Long-time \\
\midrule
\emph{(Benchmarked)} \\
1D Advection--Diffusion (AD)    & --         & \checkmark & \checkmark & --         & \checkmark \\
1D Viscous Burgers              & \checkmark & --         & \checkmark & --         & \checkmark \\
1D Shallow Water (smooth ICs)   & \checkmark & \checkmark & \checkmark & --         & \checkmark \\
2D ADR (Fisher--KPP)            & \checkmark & \checkmark & \checkmark & --         & \checkmark \\
2D SWE (smooth / vortices)      & \checkmark & \checkmark & \checkmark & --         & \checkmark \\
2D Compressible Navier--Stokes  & \checkmark & \checkmark & \checkmark & --         & \checkmark \\
Dam-break SWE (transfer)        & \checkmark & \checkmark & \checkmark & --         & \checkmark \\
\midrule
\emph{(Validated)} \\
1D Kuramoto--Sivashinsky (KS)   & \checkmark & --         & \checkmark & \checkmark & \checkmark \\
\bottomrule
\end{tabular}
\end{table}

\begin{table}[t]
\caption{\textbf{Dictionary and policy size.} Primitives are single-process routines;
For AD$\to$ADR transfer, we report two complementary adaptation tracks (Path~(a): numerical AD primitives $+$ learned residual; Path~(b): pretrained AD $+$ numerical reaction), and a fully-learned ablation where every primitive is a per-process learned surrogate.}
\label{tab:dictionary}
\centering
\footnotesize
\setlength{\tabcolsep}{3pt}
\begin{tabular}{llcc}
\toprule
Benchmark & Dictionary $\mathbb{D}$ & $n$ & Policy params \\
\midrule
1D Advection--Diffusion           & \{Adv., Diff.\}                                & 2 & ${\sim}50$ \\
1D Viscous Burgers                & \{Nonlin.\ adv., Visc.\ diff.\}                & 2 & ${\sim}50$ \\
1D Shallow Water                  & \{Wave/adv., Gravity\}                         & 2 & ${\sim}50$ \\
1D Kuramoto--Sivashinsky          & \{Linear, Nonlinear\}                          & 2 & ${\sim}50$ \\
2D ADR (Fisher--KPP)              & \{Adv., Diff., Reaction\}                      & 3 & ${\sim}73$ \\
2D Shallow Water                  & \{Wave/adv., Gravity\}                         & 2 & ${\sim}50$ \\
2D Compressible Navier--Stokes    & \{Euler adv., Viscous diff.\}                  & 2 & ${\sim}50$ \\
\midrule
AD$\to$ADR transfer, Path~(a)     & \{Adv., Diff., Learned resid.\}                & 3 & ${\sim}73$ \\
AD$\to$ADR transfer, Path~(b)     & \{Learned AD, Reaction\}            & 2 & ${\sim}50$ \\
\midrule
\emph{Ablation: fully learned ADR} & \{Learned Adv, Learned Diff, Learned Reaction\}           & 3 & ${\sim}73$ \\
\bottomrule
\end{tabular}
\end{table}

\subsection{1D Systems}
\label{app:1d_systems}

\subsubsection{Advection--Diffusion}
\label{app:1d_ad}

We consider the 1D advection--diffusion equation on $x\in[0,10]$,
\begin{equation}
    \frac{\partial u}{\partial t} + c\,\frac{\partial u}{\partial x} = D\,\frac{\partial^2 u}{\partial x^2},
\end{equation}
with the canonical split into advection and diffusion primitives.

\paragraph{Data generation.}
Training trajectories are generated with advection speed $c\in[0.5,3.0]$ and diffusion coefficient $D\in[0.01,0.5]$.
Initial conditions are sampled from five families: Gaussian pulses, step functions, sinusoidal waves, multi-Gaussian superpositions, and random low-frequency Fourier series.
We generate 10{,}000 training samples and evaluate on held-out ID\ test data at a fixed target time $T=0.5$.

\paragraph{OOD evaluation.}
OOD test cases extrapolate both parameters
($c\in[0.1,0.5]\cup[3.0,5.0]$, $D\in[0.001,0.01]\cup[0.5,1.0]$)
and initial conditions (very narrow/wide Gaussians, higher-frequency oscillations, and multi-step profiles).

\paragraph{Model inputs and conditioning.}
HyCOP is trained with a variable target time $T\in[0.1,1.0]$ and evaluated by querying $T=0.5$.
FNO and Loc. Int. Diff. FNO are trained and evaluated at the same fixed target time $T=0.5$.
The HyCOP policy network uses four physics-inspired features: a local P\'eclet number, concentration variance, gradient variance, and $T$.
FNO and Loc. Int. Diff. FNO use input channels for $u_0$, the spatial grid, and normalized parameters $(c,D)$.

\begin{figure}[H]
    \centering
    \includegraphics[width=\linewidth]{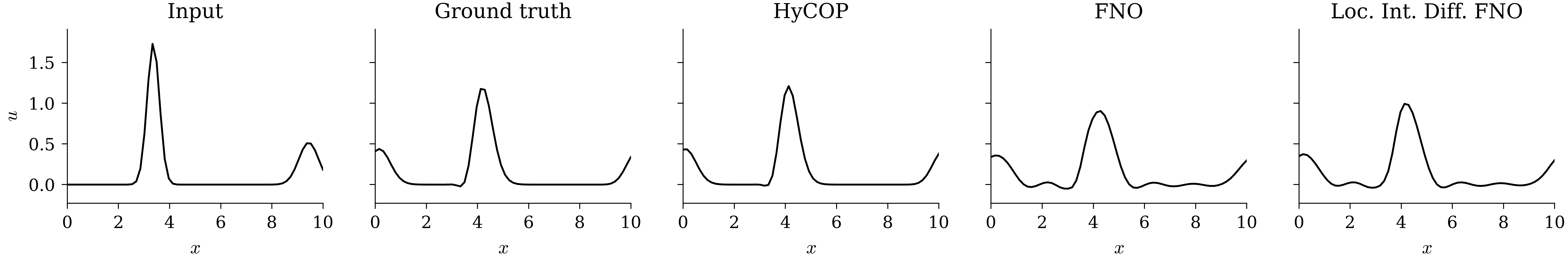}
    \caption{1D advection--diffusion (ID): qualitative comparison at $T=0.5$.}
    \label{fig:1d_ad_id}
\end{figure}

\begin{figure}[H]
    \centering
    \includegraphics[width=\linewidth]{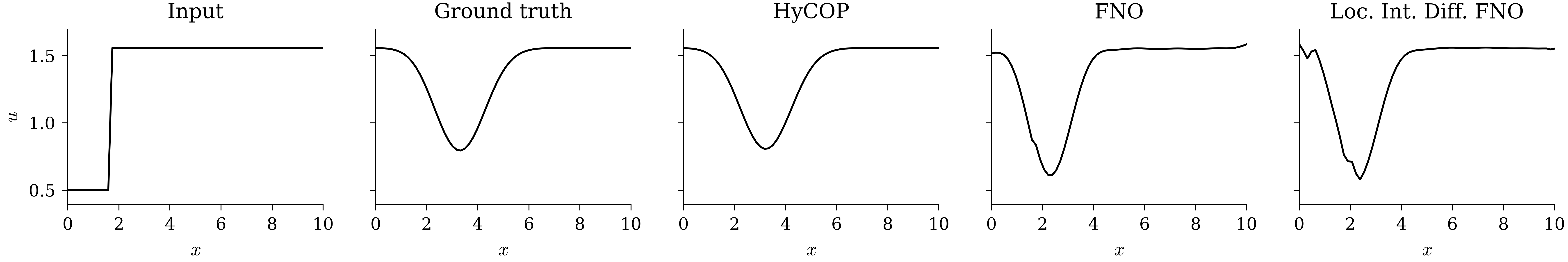}
    \caption{1D advection--diffusion (OOD): extrapolation in $(c,D)$ and IC family at $T=0.5$.}
    \label{fig:1d_ad_ood}
\end{figure}

\subsubsection{Shallow Water Equations}
\label{app:1d_swe}

We consider the 1D shallow water equations:
\begin{equation}
    \frac{\partial h}{\partial t} + \frac{\partial (hu)}{\partial x} = 0, \quad
    \frac{\partial (hu)}{\partial t} + \frac{\partial}{\partial x}\left(hu^2 + \frac{1}{2}gh^2\right) = 0,
\end{equation}
on $[0,10]$, where $h$ is the water height, $u$ is the velocity, and $g$ is gravitational acceleration.
We use an operator split into (i) advection and (ii) gravity-wave primitives.

Training samples use $g \in [9.0, 11.0]$ and five initial-condition classes: Gaussian wave perturbations, smoothed dam-break profiles (tanh transitions), smooth Fourier superpositions, smoothed step transitions, and rarefaction waves (10{,}000 samples total).
HyCOP is trained with variable target time $T \in [0.15, 0.4]$, while FNO and Loc. Int. Diff. FNO use fixed $T=0.3$.
All methods are evaluated on the same fixed-time test set at $T=0.3$.
OOD tests extrapolate in gravity ($g \in [7.0, 9.0] \cup [11.0, 13.0]$) and initial conditions (extreme Froude numbers, transcritical flows, hydraulic jumps, and high-frequency standing waves).

The HyCOP policy network uses four physics-based features: maximum Froude number, height variance, momentum variance, and $T$.
FNO and Loc. Int. Diff. FNO input channels encode $(h_0,hu_0)$, the spatial grid, and normalized gravity $g$.

\begin{figure}[H]
    \centering
    \includegraphics[width=0.95\textwidth]{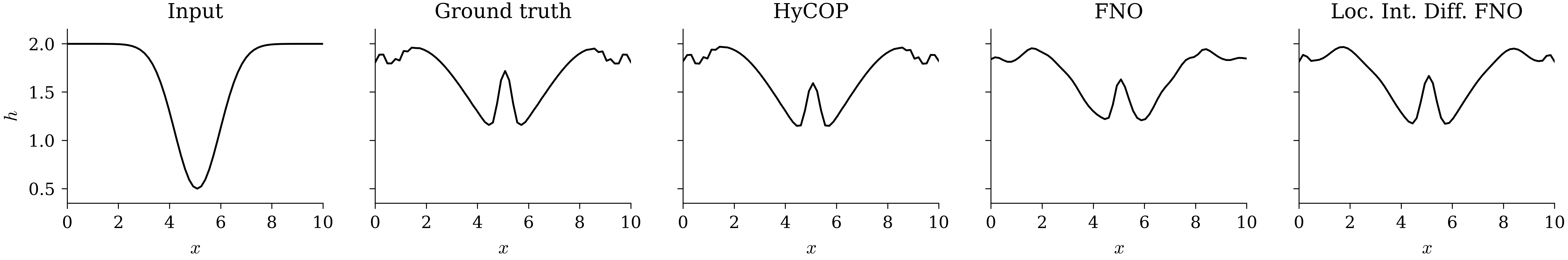}
    \caption{1D SWE qualitative example (ID).}
    \label{fig:1d_swe_id}
\end{figure}

\begin{figure}[H]
    \centering
    \includegraphics[width=0.95\textwidth]{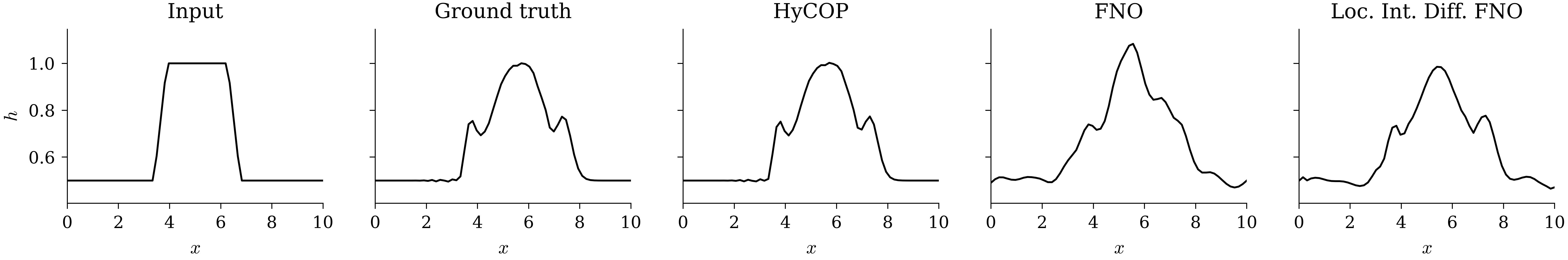}
    \caption{1D SWE qualitative example (OOD).}
    \label{fig:1d_swe_ood}
\end{figure}

\subsubsection{Viscous Burgers}
\label{app:1d_burgers}

We consider the 1D viscous Burgers equation on $x\in[0,2]$,
\begin{equation}
    \frac{\partial u}{\partial t} + u\,\frac{\partial u}{\partial x} = \nu\,\frac{\partial^2 u}{\partial x^2},
\end{equation}
with the canonical split into nonlinear advection and viscous diffusion primitives.

\paragraph{Data generation.}
Training trajectories are generated with viscosity $\nu\in[0.005,0.1]$.
Initial conditions are sampled from six families: step functions, sinusoidal waves, Gaussian pulses, sawtooth waves, smooth $\tanh$ transitions, and random low-frequency Fourier series.
We generate 10{,}000 training samples and evaluate on held-out ID\ test data at a fixed target time $T=0.5$.

\paragraph{OOD evaluation.}
OOD test cases extrapolate viscosity ($\nu\in[0.002,0.005]\cup[0.1,0.2]$) and initial conditions (sharper/smoother transitions, larger amplitudes, multiple steps, and higher-frequency oscillations).

\paragraph{Model inputs and conditioning.}
HyCOP is trained with variable target time $T$ and evaluated by querying $T=0.5$.
FNO and Loc. Int. Diff. FNO are trained and evaluated at the same fixed target time $T=0.5$.
The HyCOP policy network uses four physics-inspired features: a Reynolds-number proxy, gradient strength, velocity amplitude, and $T$.
FNO and Loc. Int. Diff. FNO use input channels for $u_0$, the spatial grid, and normalized viscosity $\nu$.

\begin{figure}[H]
    \centering
    \includegraphics[width=\linewidth]{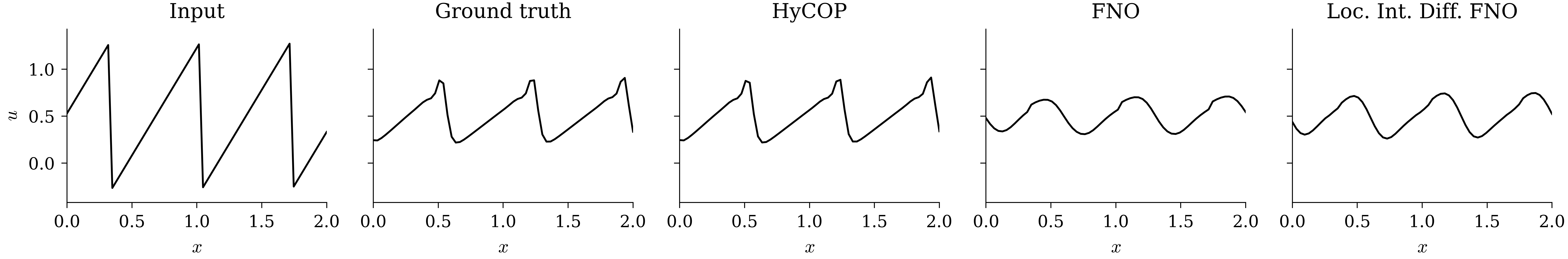}
    \caption{1D viscous Burgers (ID): qualitative comparison at $T=0.5$.}
    \label{fig:1d_burgers_id}
\end{figure}

\begin{figure}[H]
    \centering
    \includegraphics[width=\linewidth]{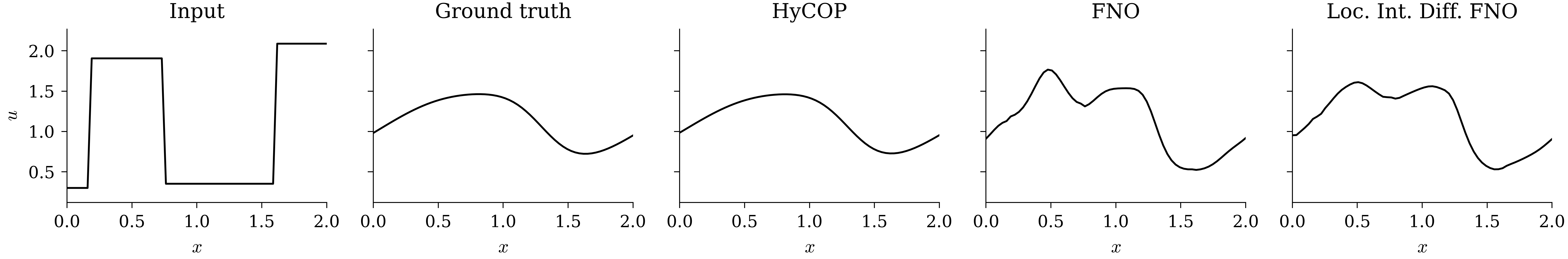}
    \caption{1D viscous Burgers (OOD): extrapolation in $\nu$ and IC family at $T=0.5$.}
    \label{fig:1d_burgers_ood}
\end{figure}


\begin{table}[t]
\caption{\textbf{1D benchmark results.}
All entries are reported on the respective 1D test sets.
``Loc.\ Int.\ Diff.\ FNO'' is abbreviated as ``LIDFNO'' for space.
HyCOP uses numerical primitives throughout.}
\label{tab:1d_benchmarks}
\centering

\vspace{0.3em}
{\footnotesize\textbf{(a) 1D Advection--Diffusion}}\\[3pt]
\scriptsize
\setlength{\tabcolsep}{2.5pt}
\begin{tabular}{l cccccccc}
\toprule
& Rel.\ $L^2$ & fRMSE$_{\text{low}}$ & fRMSE$_{\text{mid}}$ & fRMSE$_{\text{high}}$ & RMSE & Max Err & bRMSE & cRMSE \\
\midrule
\multicolumn{9}{l}{\emph{In-distribution}} \\
FNO       & 1.59\sn{-2} & 1.73\sn{-1} & 1.01\sn{-1} & 1.70\sn{-2} & 2.14\sn{-2} & 5.02\sn{-1} & 2.99\sn{-2} & 3.55\sn{-3} \\
LIDFNO    & \textbf{1.34\sn{-2}} & \textbf{1.57\sn{-1}} & 8.20\sn{-2} & 1.28\sn{-2} & \textbf{1.80\sn{-2}} & 5.30\sn{-1} & \textbf{2.42\sn{-2}} & 2.82\sn{-3} \\
\textbf{HyCOP} & 2.97\sn{-2} & 4.56\sn{-1} & \textbf{3.02\sn{-2}} & \textbf{1.58\sn{-3}} & 3.32\sn{-2} & \textbf{4.16\sn{-1}} & 3.03\sn{-2} & \textbf{4.56\sn{-8}} \\
\midrule
\multicolumn{9}{l}{\emph{Out-of-distribution}} \\
FNO       & 3.43\sn{-1} & 3.83\sn{0}  & 2.14\sn{0}  & 8.60\sn{-1} & 4.92\sn{-1} & 7.97\sn{0}  & 4.63\sn{-1} & 1.91\sn{-1} \\
LIDFNO    & 3.33\sn{-1} & 3.41\sn{0}  & 2.15\sn{0}  & 8.48\sn{-1} & 4.59\sn{-1} & 8.01\sn{0}  & 4.30\sn{-1} & 1.07\sn{-1} \\
\textbf{HyCOP} & \textbf{9.06\sn{-2}} & \textbf{4.73\sn{-1}} & \textbf{9.82\sn{-1}} & \textbf{2.84\sn{-1}} & \textbf{1.99\sn{-1}} & \textbf{1.79\sn{0}} & \textbf{1.87\sn{-1}} & \textbf{1.04\sn{-7}} \\
\bottomrule
\end{tabular}

\vspace{0.8em}

{\footnotesize\textbf{(b) 1D Viscous Burgers}}\\[3pt]
\scriptsize
\setlength{\tabcolsep}{2.5pt}
\begin{tabular}{l cccccccc}
\toprule
& Rel.\ $L^2$ & fRMSE$_{\text{low}}$ & fRMSE$_{\text{mid}}$ & fRMSE$_{\text{high}}$ & RMSE & Max Err & bRMSE & cRMSE \\
\midrule
\multicolumn{9}{l}{\emph{In-distribution}} \\
FNO       & 2.52\sn{-2} & 2.59\sn{-1} & 7.87\sn{-2} & 1.53\sn{-2} & 2.59\sn{-2} & 1.29\sn{0}  & 2.59\sn{-2} & 6.41\sn{-3} \\
LIDFNO    & 1.83\sn{-2} & 1.82\sn{-1} & 7.51\sn{-2} & 1.29\sn{-2} & 2.10\sn{-2} & 1.12\sn{0}  & 2.10\sn{-2} & 3.72\sn{-3} \\
\textbf{HyCOP} & \textbf{2.05\sn{-3}} & \textbf{2.44\sn{-2}} & \textbf{4.16\sn{-3}} & \textbf{1.10\sn{-3}} & \textbf{2.13\sn{-3}} & \textbf{1.19\sn{-1}} & \textbf{2.21\sn{-3}} & \textbf{2.90\sn{-8}} \\
\midrule
\multicolumn{9}{l}{\emph{Out-of-distribution}} \\
FNO       & 1.87\sn{-1} & 2.32\sn{0}  & 1.25\sn{0}  & 4.38\sn{-1} & 2.67\sn{-1} & 7.55\sn{0}  & 3.06\sn{-1} & 2.88\sn{-2} \\
LIDFNO    & 1.77\sn{-1} & 2.18\sn{0}  & 1.24\sn{0}  & 4.37\sn{-1} & 2.60\sn{-1} & 8.09\sn{0}  & 3.06\sn{-1} & 3.27\sn{-2} \\
\textbf{HyCOP} & \textbf{9.61\sn{-3}} & \textbf{7.30\sn{-2}} & \textbf{1.08\sn{-1}} & \textbf{7.66\sn{-2}} & \textbf{2.79\sn{-2}} & \textbf{1.18\sn{0}} & \textbf{3.11\sn{-2}} & \textbf{5.35\sn{-8}} \\
\bottomrule
\end{tabular}

\vspace{0.8em}

{\footnotesize\textbf{(c) 1D Shallow-Water Equations}}\\[3pt]
\scriptsize
\setlength{\tabcolsep}{2.5pt}
\begin{tabular}{l cccccccc}
\toprule
& Rel.\ $L^2$ & fRMSE$_{\text{low}}$ & fRMSE$_{\text{mid}}$ & fRMSE$_{\text{high}}$ & RMSE & Max Err & bRMSE & cRMSE \\
\midrule
\multicolumn{9}{l}{\emph{In-distribution}} \\
FNO       & 4.32\sn{-2} & 2.01\sn{-1} & 2.12\sn{-1} & 1.35\sn{-1} & 4.64\sn{-2} & \textbf{7.51\sn{-1}} & 7.29\sn{-2} & 5.56\sn{-3} \\
LIDFNO    & 3.05\sn{-2} & \textbf{1.16\sn{-1}} & 1.19\sn{-1} & 1.14\sn{-1} & \textbf{3.66\sn{-2}} & 7.72\sn{-1} & \textbf{3.09\sn{-2}} & 3.16\sn{-3} \\
\textbf{HyCOP} & \textbf{2.81\sn{-2}} & 2.33\sn{-1} & \textbf{9.02\sn{-2}} & \textbf{3.02\sn{-2}} & 4.49\sn{-2} & 1.04\sn{0} & 5.51\sn{-2} & \textbf{4.29\sn{-8}} \\
\midrule
\multicolumn{9}{l}{\emph{Out-of-distribution}} \\
FNO       & 1.86\sn{-1} & 1.26\sn{0}  & 5.96\sn{-1} & 1.53\sn{-1} & 3.58\sn{-1} & 5.29\sn{0}  & 3.16\sn{-1} & 1.06\sn{-1} \\
LIDFNO    & 1.54\sn{-1} & 7.05\sn{-1} & 4.73\sn{-1} & 1.38\sn{-1} & 3.09\sn{-1} & 3.88\sn{0}  & 2.39\sn{-1} & 7.83\sn{-2} \\
\textbf{HyCOP} & \textbf{3.07\sn{-2}} & \textbf{1.73\sn{-1}} & \textbf{1.18\sn{-1}} & \textbf{2.43\sn{-2}} & \textbf{5.27\sn{-2}} & \textbf{4.66\sn{-1}} & \textbf{4.68\sn{-2}} & \textbf{3.62\sn{-8}} \\
\bottomrule
\end{tabular}

\end{table}

\subsection{2D Systems (Fixed-Time Inference)}
\label{app:2d_pair}

\subsubsection{Advection--Diffusion--Reaction (Fisher--KPP)}
\label{app:2d_adr}

We consider the 2D advection--diffusion--reaction (ADR) equation with Fisher--KPP kinetics,
\begin{equation}
    \frac{\partial u}{\partial t}
    + c_x \frac{\partial u}{\partial x}
    + c_y \frac{\partial u}{\partial y}
    =
    D_x \frac{\partial^2 u}{\partial x^2}
    + D_y \frac{\partial^2 u}{\partial y^2}
    + r u(1-u),
\end{equation}
on $\Omega=[0,1]^2$. We use an operator split into advection, diffusion, and reaction primitives.

\paragraph{Data generation \& Training.}
Training samples use $c_x,c_y \in [0.2,1.5]$, $D_x,D_y \in [0.05,0.2]$, and $r \in [0.1,1.0]$.
Initial conditions are drawn from six classes: Gaussian pulses, smoothed step functions, ring patterns, diagonal stripes, multi-Gaussian superpositions, and smooth sigmoidal transitions.
HyCOP is trained with variable target time $T \in [0.1,0.35]$, baselines use fixed $T=0.2$; all methods are evaluated on the same fixed-time test set at $T=0.2$.
The HyCOP policy network uses seven physics-based features: Péclet numbers in $x$ and $y$, Damköhler number, concentration variance, gradient variance in $x$ and $y$, and $T$.
FNO, Loc.\ Int.\ Diff.\ FNO, and PINO share input channels: $u_0$, spatial grids, and normalized parameters $(c_x, c_y, D_x, D_y, r)$. PINO evaluates the ADR residual $\partial_t u + c_x \partial_x u + c_y \partial_y u - D_x \partial_{xx} u - D_y \partial_{yy} u - r\,u(1-u)$ via central differences on the training grid, with $\partial_t u$ approximated by $(u(T){-}u_0)/T$. DeepONet encodes $u_0$ through its CNN branch and the normalized parameters through the branch MLP head.

\paragraph{OOD evaluation.}
OOD tests extrapolate in parameters
($c_x,c_y \in [0.1,0.2] \cup [1.5,2.5]$,
$D_x,D_y \in [0.01,0.05] \cup [0.2,0.4]$,
$r \in [0.05,0.1] \cup [1.0,2.5]$)
and initial conditions (very narrow/wide Gaussians, near-saturation/extinction states, high-frequency patterns, and sharp fronts).

Across these ranges, ADR spans regimes where advection competes with diffusion (via Péclet variation) and reaction competes with transport (via Damköhler variation), yielding sharp traveling fronts and reaction-dominated transients, particularly in OOD.

\begin{figure}[H]
    \centering
    \includegraphics[width=0.95\textwidth]{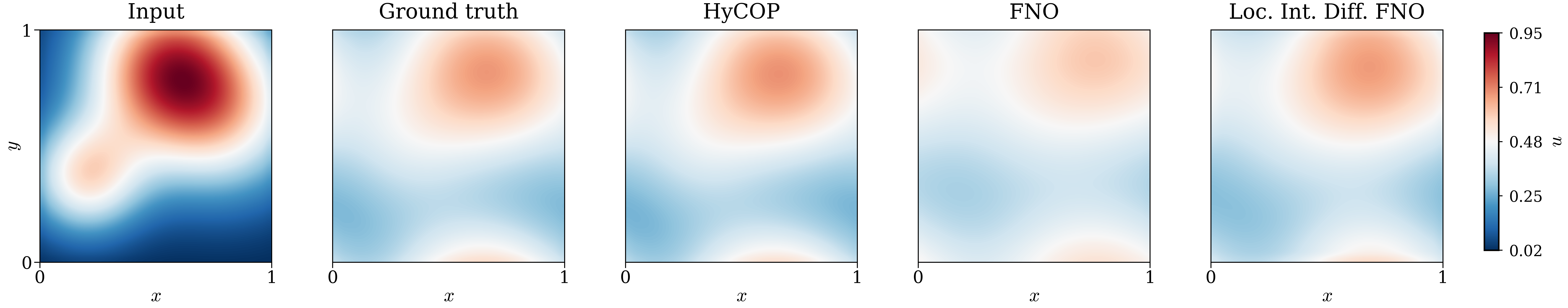}
    \caption{2D ADR qualitative example (ID).}
    \label{fig:2d_adr_id}
\end{figure}

\begin{figure}[H]
    \centering
    \includegraphics[width=0.95\textwidth]{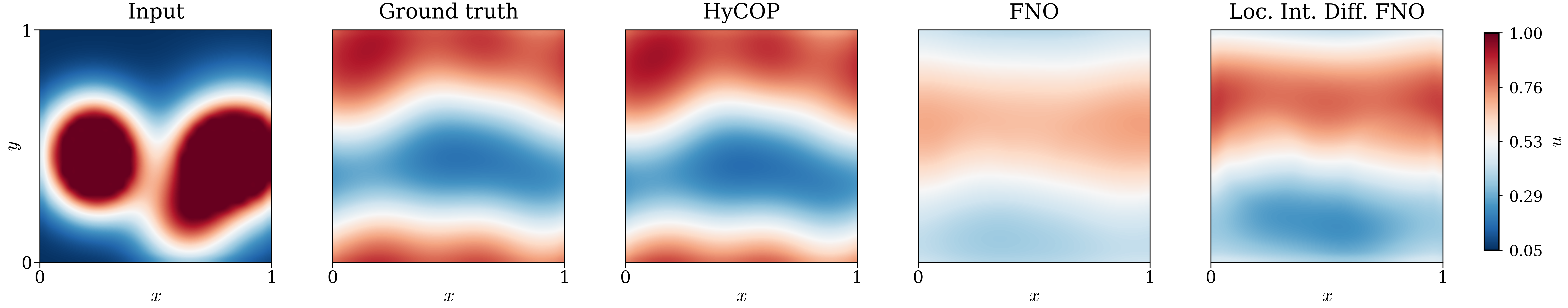}
    \caption{2D ADR qualitative example (OOD).}
    \label{fig:2d_adr_ood}
\end{figure}

\noindent\emph{Note:} cRMSE is not applicable for ADR (no conserved mass constraint under Fisher--KPP kinetics).

\begin{table}[H]
\caption{2D ADR test-set metrics (ID).}
\label{table:adr_id}
\vskip 0.15in
\begin{center}
\begin{small}
\begin{sc}
\resizebox{\textwidth}{!}{%
\begin{tabular}{lcccccccc}
\toprule
Model & Rel.\ $L^2$ & fRMSE low & fRMSE mid & fRMSE high & RMSE & Max error & bRMSE & cRMSE \\
\midrule
DeepONet  & $1.58 \cdot 10^{-1}$ & $8.49 \cdot 10^{0}$ & $3.35 \cdot 10^{-1}$ & $1.19 \cdot 10^{-1}$ & $5.95 \cdot 10^{-2}$ & $4.55 \cdot 10^{-1}$ & $6.35 \cdot 10^{-2}$ & --- \\
FNO       & $8.42 \cdot 10^{-2}$ & $4.46 \cdot 10^{0}$ & $6.41 \cdot 10^{-2}$ & $2.69 \cdot 10^{-2}$ & $3.28 \cdot 10^{-2}$ & $2.94 \cdot 10^{-1}$ & $3.21 \cdot 10^{-2}$ & --- \\
PINO      & $8.42 \cdot 10^{-2}$ & $4.47 \cdot 10^{0}$ & $5.33 \cdot 10^{-2}$ & $2.39 \cdot 10^{-2}$ & $3.29 \cdot 10^{-2}$ & $2.95 \cdot 10^{-1}$ & $3.20 \cdot 10^{-2}$ & --- \\
Loc. Int. Diff. FNO  & $3.15 \cdot 10^{-2}$ & $1.63 \cdot 10^{0}$ & $1.03 \cdot 10^{-1}$ & $2.03 \cdot 10^{-2}$ & $1.22 \cdot 10^{-2}$ & $1.37 \cdot 10^{-1}$ & $1.25 \cdot 10^{-2}$ & --- \\
\textbf{HyCOP (ours)}  & $\bm{2.10 \cdot 10^{-2}}$ & $\bm{1.12 \cdot 10^{0}}$ & $\bm{1.30 \cdot 10^{-4}}$ & $\bm{3.80 \cdot 10^{-7}}$ & $\bm{8.34 \cdot 10^{-3}}$ & $\bm{5.91 \cdot 10^{-2}}$ & $\bm{8.17 \cdot 10^{-3}}$ & --- \\
\bottomrule
\end{tabular}
}
\end{sc}
\end{small}
\end{center}
\vskip -0.1in
\end{table}

\begin{table}[H]
\caption{2D ADR test-set metrics (OOD).}
\label{table:adr_ood}
\vskip 0.15in
\begin{center}
\begin{small}
\begin{sc}
\resizebox{\textwidth}{!}{%
\begin{tabular}{lcccccccc}
\toprule
Model & Rel.\ $L^2$ & fRMSE low & fRMSE mid & fRMSE high & RMSE & Max error & bRMSE & cRMSE \\
\midrule
DeepONet  & $2.82 \cdot 10^{-1}$ & $1.34 \cdot 10^{1}$ & $3.90 \cdot 10^{-1}$ & $1.21 \cdot 10^{-1}$ & $1.12 \cdot 10^{-1}$ & $7.79 \cdot 10^{-1}$ & $1.02 \cdot 10^{-1}$ & --- \\
FNO       & $2.60 \cdot 10^{-1}$ & $1.20 \cdot 10^{1}$ & $1.46 \cdot 10^{-1}$ & $7.00 \cdot 10^{-2}$ & $9.58 \cdot 10^{-2}$ & $6.89 \cdot 10^{-1}$ & $9.05 \cdot 10^{-2}$ & --- \\
PINO      & $2.35 \cdot 10^{-1}$ & $1.18 \cdot 10^{1}$ & $1.31 \cdot 10^{-1}$ & $5.99 \cdot 10^{-2}$ & $9.59 \cdot 10^{-2}$ & $6.95 \cdot 10^{-1}$ & $9.08 \cdot 10^{-2}$ & --- \\
Loc. Int. Diff. FNO  & $1.89 \cdot 10^{-1}$ & $9.66 \cdot 10^{0}$ & $2.45 \cdot 10^{-1}$ & $7.23 \cdot 10^{-2}$ & $7.86 \cdot 10^{-2}$ & $6.56 \cdot 10^{-1}$ & $7.81 \cdot 10^{-2}$ & --- \\
\textbf{HyCOP (ours)}  & $\bm{2.87 \cdot 10^{-2}}$ & $\bm{1.56 \cdot 10^{0}}$ & $\bm{2.48 \cdot 10^{-2}}$ & $\bm{3.34 \cdot 10^{-4}}$ & $\bm{1.87 \cdot 10^{-2}}$ & $\bm{3.48 \cdot 10^{-1}}$ & $\bm{1.88 \cdot 10^{-2}}$ & --- \\
\bottomrule
\end{tabular}
}
\end{sc}
\end{small}
\end{center}
\vskip -0.1in
\end{table}

\subsubsection{Shallow Water Equations}
\label{app:2d_swe}

We consider the 2D shallow water equations on $\Omega=[0,10]^2$,
\begin{align}
    \frac{\partial h}{\partial t}
    + \frac{\partial (hu)}{\partial x}
    + \frac{\partial (hv)}{\partial y} &= 0, \\
    \frac{\partial (hu)}{\partial t}
    + \frac{\partial}{\partial x}\!\left(hu^2 + \frac{1}{2}gh^2\right)
    + \frac{\partial (huv)}{\partial y} &= 0, \\
    \frac{\partial (hv)}{\partial t}
    + \frac{\partial (huv)}{\partial x}
    + \frac{\partial}{\partial y}\!\left(hv^2 + \frac{1}{2}gh^2\right) &= 0,
\end{align}
where $h$ is water depth, $(u,v)$ is velocity, and $g$ is gravitational acceleration.
We use a canonical split into (i) advective transport and (ii) gravity-wave (pressure) forcing primitives.

\paragraph{Data generation \& training.}
Training samples use $g \in [9.0,11.0]$ with six initial-condition families:
Gaussian wave perturbations, smoothed dam-break profiles (vertical and horizontal), smooth Fourier superpositions,
smoothed oblique transitions, vortex patterns, and radial dam-break configurations.
HyCOP is trained with variable target time $T \in [0.15,0.4]$, baselines are trained at fixed $T=0.3$.
For fair comparison, all methods are evaluated on the same fixed-time test set at $T=0.3$.

The HyCOP policy network uses six physics-based features: maximum Froude numbers in $x$ and $y$,
height variance, $x$-momentum variance, $y$-momentum variance, and $T$.
FNO, Loc.\ Int.\ Diff.\ FNO, and PINO share input channels: $(h_0, hu_0, hv_0)$, spatial grids, and normalized gravity $g$. PINO evaluates the SWE residuals (mass and $x$-/$y$-momentum conservation) via central differences with periodic boundaries, with $\partial_t u$ approximated by $(u(T){-}u_0)/T$. DeepONet encodes the three initial-state channels through its CNN branch and normalized $g$ through the branch MLP head.

\paragraph{OOD evaluation.}
OOD tests extrapolate in gravity ($g \in [7.0,9.0] \cup [11.0,13.0]$) and initial conditions,
including extreme Froude regimes, transcritical flows, high-frequency standing waves, strong vortices,
and extreme radial dam-break configurations.

These OOD regimes include near-critical/transcritical configurations (large or spatially varying Froude) that amplify nonlinear wave interactions, steep gradients, and boundary reflections, stressing long-horizon stability.

\begin{figure}[H]
    \centering
    \includegraphics[width=0.95\textwidth]{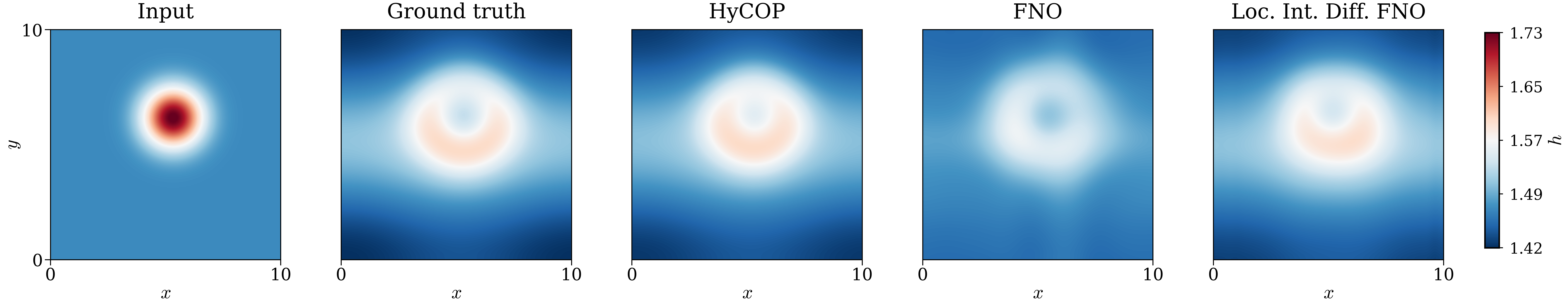}
    \caption{2D SWE qualitative example (ID).}
    \label{fig:2d_swe_id}
\end{figure}

\begin{figure}[H]
    \centering
    \includegraphics[width=0.95\textwidth]{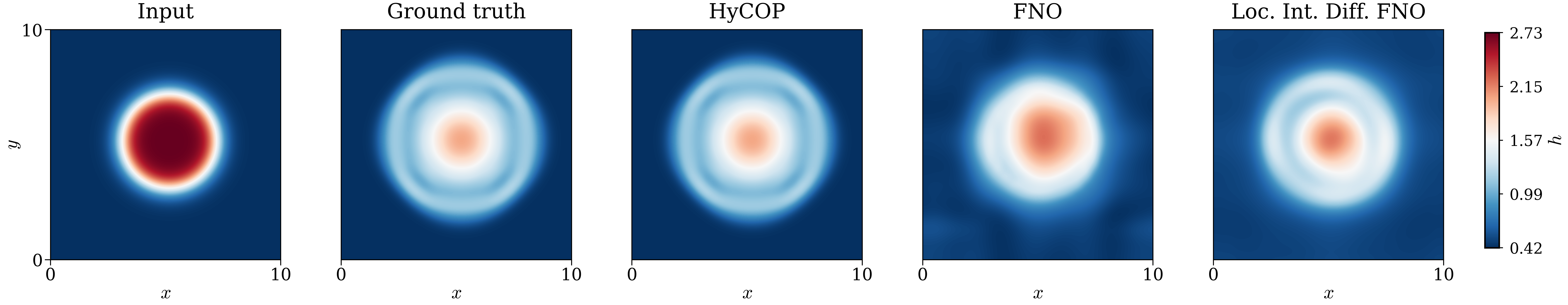}
    \caption{2D SWE qualitative example (OOD).}
    \label{fig:2d_swe_ood}
\end{figure}

\begin{table}[H]
\caption{2D SWE test-set metrics (ID).}
\label{table:app_sw_id}
\vskip 0.15in
\begin{center}
\begin{small}
\begin{sc}
\resizebox{\textwidth}{!}{%
\begin{tabular}{lcccccccc}
\toprule
Model & Rel.\ $L^2$ & fRMSE low & fRMSE mid & fRMSE high & RMSE & Max error & bRMSE & cRMSE \\
\midrule
DeepONet  & $3.89 \cdot 10^{-1}$ & $3.95 \cdot 10^{0}$ & $1.10 \cdot 10^{0}$ & $3.82 \cdot 10^{-1}$ & $6.47 \cdot 10^{-2}$ & $1.38 \cdot 10^{0}$ & $7.23 \cdot 10^{-2}$ & $3.62 \cdot 10^{-3}$ \\
FNO       & $1.19 \cdot 10^{-1}$ & $1.33 \cdot 10^{0}$ & $6.57 \cdot 10^{-1}$ & $3.69 \cdot 10^{-1}$ & $2.86 \cdot 10^{-2}$ & $5.80 \cdot 10^{-1}$ & $3.86 \cdot 10^{-2}$ & $3.50 \cdot 10^{-3}$ \\
PINO      & $1.17 \cdot 10^{-1}$ & $1.33 \cdot 10^{0}$ & $6.49 \cdot 10^{-1}$ & $3.70 \cdot 10^{-1}$ & $2.86 \cdot 10^{-2}$ & $5.58 \cdot 10^{-1}$ & $3.82 \cdot 10^{-2}$ & $3.68 \cdot 10^{-3}$ \\
Loc. Int. Diff. FNO  & $7.49 \cdot 10^{-2}$ & $\bm{6.57 \cdot 10^{-1}}$ & $1.77 \cdot 10^{-1}$ & $8.61 \cdot 10^{-2}$ & $1.25 \cdot 10^{-2}$ & $4.26 \cdot 10^{-1}$ & $\bm{1.48 \cdot 10^{-2}}$ & $2.51 \cdot 10^{-3}$ \\
\textbf{HyCOP (ours)}  & $\bm{2.40 \cdot 10^{-2}}$ & $7.23 \cdot 10^{-1}$ & $\bm{1.02 \cdot 10^{-1}}$ & $\bm{2.66 \cdot 10^{-2}}$ & $\bm{1.03 \cdot 10^{-2}}$ & $\bm{1.90 \cdot 10^{-1}}$ & $1.69 \cdot 10^{-2}$ & $\bm{2.16 \cdot 10^{-8}}$ \\
\bottomrule
\end{tabular}
}
\end{sc}
\end{small}
\end{center}
\vskip -0.1in
\end{table}

\begin{table}[H]
\caption{2D SWE test-set metrics (OOD).}
\label{table:app_sw_ood}
\vskip 0.15in
\begin{center}
\begin{small}
\begin{sc}
\resizebox{\textwidth}{!}{%
\begin{tabular}{lcccccccc}
\toprule
Model & Rel.\ $L^2$ & fRMSE low & fRMSE mid & fRMSE high & RMSE & Max error & bRMSE & cRMSE \\
\midrule
DeepONet  & $5.61 \cdot 10^{-1}$ & $3.10 \cdot 10^{1}$ & $3.35 \cdot 10^{0}$ & $5.67 \cdot 10^{-1}$ & $5.52 \cdot 10^{-1}$ & $4.32 \cdot 10^{0}$ & $4.55 \cdot 10^{-1}$ & $2.27 \cdot 10^{-1}$ \\
FNO       & $3.80 \cdot 10^{-1}$ & $1.69 \cdot 10^{1}$ & $3.03 \cdot 10^{0}$ & $5.38 \cdot 10^{-1}$ & $4.15 \cdot 10^{-1}$ & $3.99 \cdot 10^{0}$ & $3.60 \cdot 10^{-1}$ & $2.27 \cdot 10^{-1}$ \\
PINO      & $3.83 \cdot 10^{-1}$ & $1.77 \cdot 10^{1}$ & $3.01 \cdot 10^{0}$ & $5.48 \cdot 10^{-1}$ & $4.18 \cdot 10^{-1}$ & $3.97 \cdot 10^{0}$ & $3.65 \cdot 10^{-1}$ & $2.37 \cdot 10^{-1}$ \\
Loc. Int. Diff. FNO  & $3.54 \cdot 10^{-1}$ & $2.22 \cdot 10^{1}$ & $1.78 \cdot 10^{0}$ & $3.76 \cdot 10^{-1}$ & $3.53 \cdot 10^{-1}$ & $3.46 \cdot 10^{0}$ & $3.16 \cdot 10^{-1}$ & $1.79 \cdot 10^{-1}$ \\
\textbf{HyCOP (ours)}  & $\bm{5.00 \cdot 10^{-2}}$ & $\bm{1.25 \cdot 10^{0}}$ & $\bm{3.17 \cdot 10^{-1}}$ & $\bm{5.54 \cdot 10^{-2}}$ & $\bm{5.00 \cdot 10^{-2}}$ & $\bm{3.46 \cdot 10^{-1}}$ & $\bm{4.35 \cdot 10^{-2}}$ & $\bm{3.03 \cdot 10^{-8}}$ \\
\bottomrule
\end{tabular}
}
\end{sc}
\end{small}
\end{center}
\vskip -0.1in
\end{table}

\subsubsection{Compressible Navier--Stokes (PDEBench)}
\label{app:ns}

We evaluate single-step prediction on the 2D compressible Navier--Stokes benchmark from PDEBench~\citep{PDEBench2022} at Mach $M{=}0.1$ and dynamic/bulk viscosity $\eta{=}\zeta{=}0.1$. The task is to predict density, velocity $(V_x, V_y)$, and pressure at $T{=}0.05$ from the initial state. Data is used as released by PDEBench; we do not regenerate trajectories.

\paragraph{Data and splits.}
We use the PDEBench 2D compressible Navier--Stokes dataset at resolution $128\times 128$ with the train/test split provided by~\citet{PDEBench2022}. All methods train on the same data and are evaluated on the same held-out set. No OOD extrapolation is performed on NS: PDEBench does not provide controlled OOD splits (parameter extrapolation, long-horizon rollout, boundary shift), which is why we designed the SWE and ADR benchmarks.

\paragraph{HyCOP dictionaries.}
HyCOP uses a two-primitive dictionary corresponding to the advective and viscous sub-flows of the compressible NS system:
\[
\mathbb{D}_{\mathrm{NS}}=\{\mathcal{O}_{\mathrm{adv}},\mathcal{O}_{\mathrm{diff}}\},
\]
where $\mathcal{O}_{\mathrm{adv}}$ advances the inviscid Euler flux via a fourth-order Runge--Kutta (RK4) step and $\mathcal{O}_{\mathrm{diff}}$ advances the viscous diffusion term in Fourier space. Both primitives are textbook routines with zero learnable parameters. We report two configurations:
\begin{itemize}
    \item \textbf{HyCOP.} Both primitives are numerical (RK4 advection + spectral diffusion). The only learnable component is the $\sim$50-parameter policy.
    \item \textbf{HyCOP-Hyb.} The advective primitive is replaced by a time-queried FNO-FiLM surrogate (see~\ref{app:fnofilm_details}) pretrained on single-process advection data and frozen; the diffusion primitive remains numerical.
\end{itemize}
HyCOP's policy uses the same physics-based features as our other 2D benchmarks (regime indicators and state statistics).

\paragraph{Strang baseline.}
Strang splitting uses the same two primitives as HyCOP with a fixed Strang schedule: $\mathcal{O}_{\mathrm{diff}}^{\tau/2} \circ \mathcal{O}_{\mathrm{adv}}^{\tau} \circ \mathcal{O}_{\mathrm{diff}}^{\tau/2}$. This isolates the effect of the learned schedule: any performance gap between HyCOP and Strang is attributable entirely to the composition policy, since primitives and dictionary are identical.

\paragraph{Monolithic baselines.}
DeepONet~\citep{lu2021deeponet} and PINO~\citep{li2021pino} follow the common setup (Section~\ref{app:common_setup}): shared Adam optimizer, 500-epoch budget, and the DeepONet and PINO architectural choices described there. DeepONet's branch encodes the four-channel initial state $(\rho_0, V_{x,0}, V_{y,0}, p_0)$; the trunk encodes query coordinates $(x,y)$. PINO reuses the FNO backbone and augments the data loss with the compressible NS residual (continuity, momentum, and energy) via central differences with periodic boundaries, with $\partial_t$ approximated by $(u(T){-}u_0)/T$. FNO and U-Net numbers in Table~\ref{tab:regime_a}(a) are PDEBench-reported~\citep{PDEBench2022}; we do not retrain these models.

\paragraph{Metrics.}
Following PDEBench conventions, we report nRMSE and cRMSE. nRMSE is computed per-channel and averaged across channels; cRMSE measures conservation error via channel-wise diagnostic integrals ($\int \rho\, dx\, dy$ for mass, and the corresponding momentum integrals). Both HyCOP variants and Strang inherit cRMSE at machine precision from the numerical primitives; monolithic baselines have no such guarantee.

\subsection{2D Systems (Trajectory)}
\label{app:2d_trajectory}

Trajectory experiments evaluate long-horizon prediction on the same in-distribution (ID) and out-of-distribution (OOD) splits as the fixed-time setting; see Sections~\ref{app:2d_adr} and~\ref{app:2d_swe} for parameter ranges and OOD construction. We report errors at rollout horizons of 1, 5, 10, and 20 steps.

\subsubsection{Advection--Diffusion--Reaction}
\label{app:2d_adr_trajectory}

We use the same 2D ADR system as in Section~\ref{app:2d_adr}. Trajectories are generated by integrating the reference solver from $t=0$ to $T_{\mathrm{final}}=0.4$ with step size $\Delta t=0.02$, yielding 21 snapshots per trajectory. We generate 10{,}000 trajectories total.

\paragraph{HyCOP (non-autoregressive multi-time querying).}
HyCOP uses the same policy architecture and feature configuration as in the fixed-time experiments (Section~\ref{app:2d_adr}). Given $(u_0,\mu)$ and a query time (or set of times) $T$, the policy outputs a split program and time ratios, and the resulting composite operator is evaluated \emph{once} to produce $u(T)$. Thus HyCOP does not require step-by-step rollout.

\paragraph{Autoregressive baselines.}
Loc. Int. Diff. FNO~\citep{liuschiaffini2024localfno} is trained for single-step prediction at fixed $\Delta t=0.02$ and rolled out autoregressively at test time. Inputs include $u(t)$, spatial grids, and normalized parameters (8 channels total), and the network predicts $u(t+\Delta t)$; the architecture matches Section~\ref{app:2d_adr}.

For AR-Loc. Int. Diff. FNO we follow the autoregressive training protocol of~\citet{tran2023ffno}: teacher forcing with ground-truth inputs, small Gaussian noise injection ($\sigma=0.001$) for robustness, cosine learning-rate decay, batch size 32, and 100 epochs.

U-Net~\citep{ronneberger2015unet} serves as a convolutional sequence baseline. We use a standard encoder--decoder with skip connections (4 pooling levels, initial width 16), the same 8-channel inputs as Loc. Int. Diff. FNO, and the same autoregressive training protocol (teacher forcing, noise $\sigma=0.001$, cosine decay, 100 epochs).

\begin{figure}[t]
    \centering
    \includegraphics[width=0.95\textwidth]{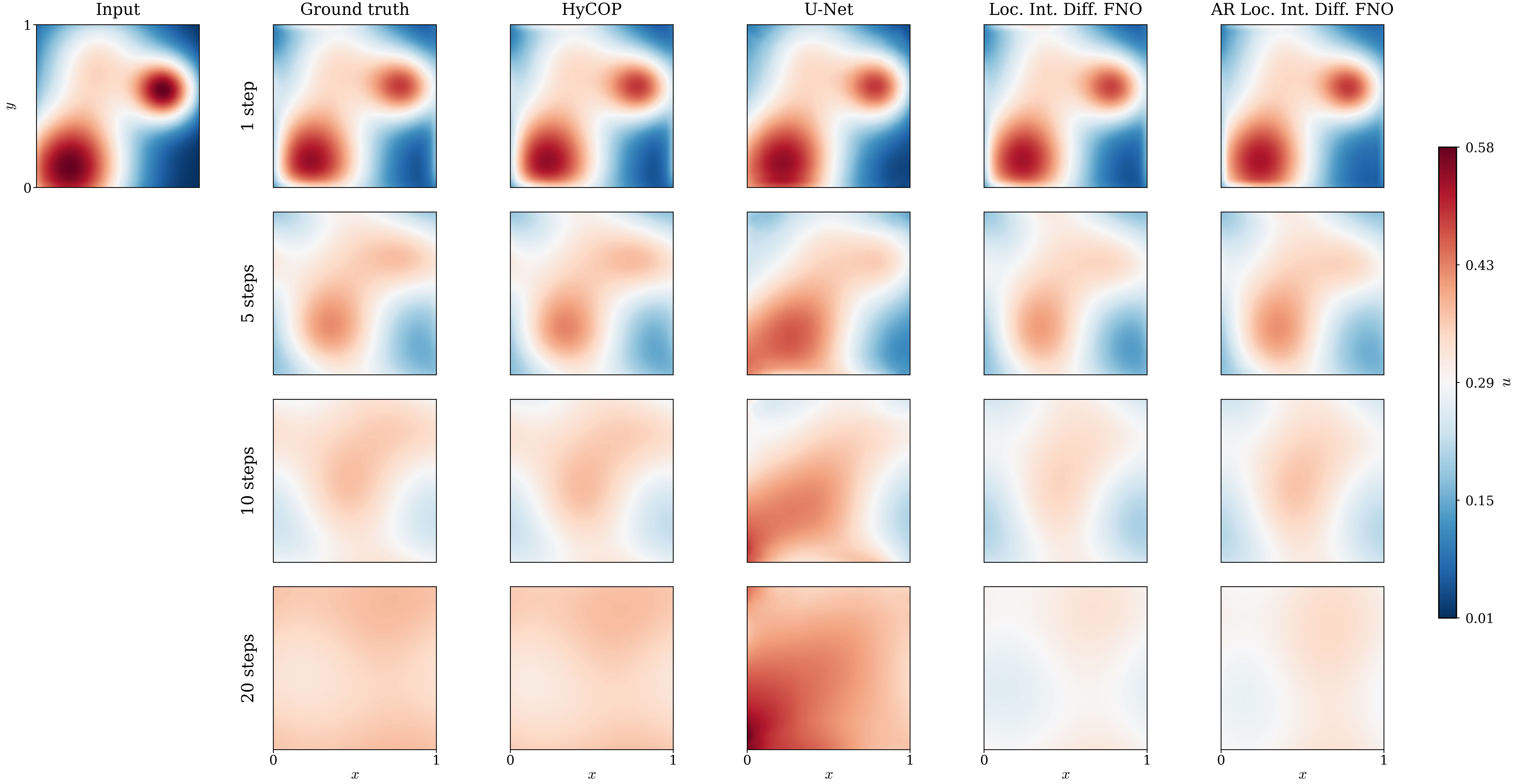}
    \caption{\textbf{2D ADR trajectory (ID).} Example long-horizon prediction at horizons of 1/5/10/20 steps for the 2D ADR system (Section~\ref{app:2d_adr_trajectory}).}
    \label{fig:2d-adr-traj-id}
\end{figure}

\begin{figure}[t]
    \centering
    \includegraphics[width=0.95\textwidth]{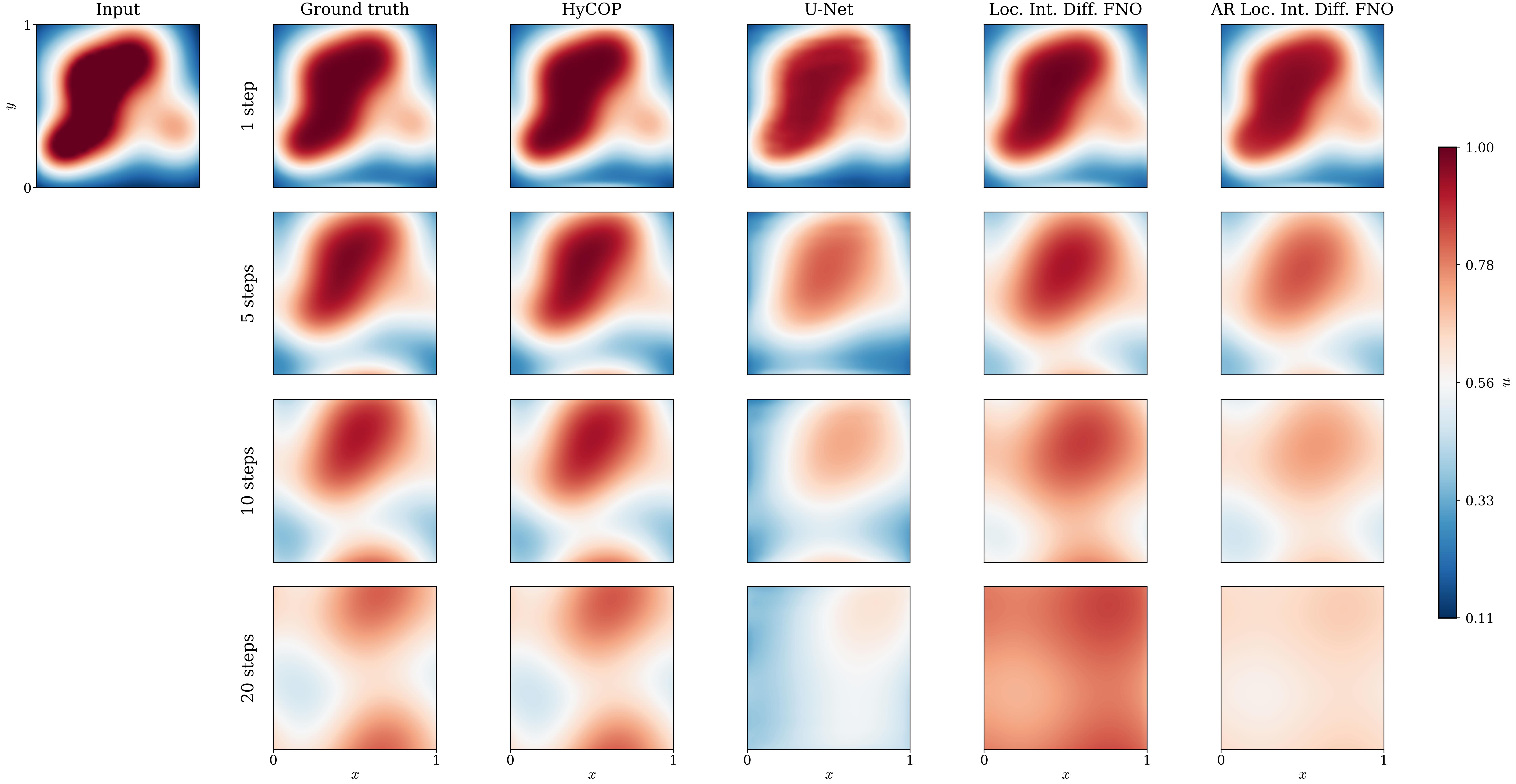}
    \caption{\textbf{2D ADR trajectory (OOD).} Same visualization as Figure~\ref{fig:2d-adr-traj-id} on OOD initial conditions/parameters (see Section~\ref{app:2d_adr}).}
    \label{fig:2d-adr-traj-ood}
\end{figure}

\begin{table}[H]
\caption{\textbf{2D ADR multi-step trajectory error (ID).} Relative $L^2$ and RMSE at horizons of 1/5/10/20 steps. Autoregressive models are rolled out; HyCOP answers each horizon via direct time querying (non-autoregressive).}
\label{tab:app_adr_multistep_id}
\vskip 0.15in
\begin{center}
\begin{small}
\begin{sc}
\resizebox{\textwidth}{!}{%
\begin{tabular}{lcccccccc}
\toprule
\multirow{2}{*}{Model} &
\multicolumn{2}{c}{1 step} & 
\multicolumn{2}{c}{5 steps} &
\multicolumn{2}{c}{10 steps} &
\multicolumn{2}{c}{20 steps} \\
\cmidrule(lr){2-3} \cmidrule(lr){4-5} \cmidrule(lr){6-7} \cmidrule(lr){8-9}
 & Rel.\ $L^2$ & RMSE & Rel.\ $L^2$ & RMSE & Rel.\ $L^2$ & RMSE & Rel.\ $L^2$ & RMSE \\
\midrule
U-Net        & $9.42 \cdot 10^{-2}$ & $3.59 \cdot 10^{-2}$ & $1.70 \cdot 10^{-1}$ & $5.85 \cdot 10^{-2}$ & $2.09 \cdot 10^{-1}$ & $6.82 \cdot 10^{-2}$ & $2.30 \cdot 10^{-1}$ & $7.38 \cdot 10^{-2}$ \\
Loc. Int. Diff. FNO     & $2.87 \cdot 10^{-2}$ & $1.03 \cdot 10^{-2}$ & $7.78 \cdot 10^{-2}$ & $2.48 \cdot 10^{-2}$ & $1.11 \cdot 10^{-1}$ & $3.44 \cdot 10^{-2}$ & $1.51 \cdot 10^{-1}$ & $4.85 \cdot 10^{-2}$ \\
AR-Loc. Int. Diff. FNO  & $3.86 \cdot 10^{-2}$ & $1.38 \cdot 10^{-2}$ & $8.24 \cdot 10^{-2}$ & $2.57 \cdot 10^{-2}$ & $1.02 \cdot 10^{-1}$ & $3.15 \cdot 10^{-2}$ & $1.02 \cdot 10^{-1}$ & $3.30 \cdot 10^{-2}$ \\
\textbf{HyCOP (ours)}
             & $\bm{6.02 \cdot 10^{-3}}$ & $\bm{2.15 \cdot 10^{-3}}$
             & $\bm{1.68 \cdot 10^{-2}}$ & $\bm{5.42 \cdot 10^{-3}}$
             & $\bm{2.10 \cdot 10^{-2}}$ & $\bm{6.65 \cdot 10^{-3}}$
             & $\bm{1.96 \cdot 10^{-2}}$ & $\bm{6.42 \cdot 10^{-3}}$ \\
\bottomrule
\end{tabular}
}
\end{sc}
\end{small}
\end{center}
\vskip -0.1in
\end{table}

\begin{table}[H]
\caption{\textbf{2D ADR multi-step trajectory error (OOD).} Same protocol as Table~\ref{tab:app_adr_multistep_id} evaluated on OOD parameter/IC shifts.}
\label{tab:app_adr_multistep_ood}
\vskip 0.15in
\begin{center}
\begin{small}
\begin{sc}
\resizebox{\textwidth}{!}{%
\begin{tabular}{lcccccccc}
\toprule
\multirow{2}{*}{Model} &
\multicolumn{2}{c}{1 step} & 
\multicolumn{2}{c}{5 steps} &
\multicolumn{2}{c}{10 steps} &
\multicolumn{2}{c}{20 steps} \\
\cmidrule(lr){2-3} \cmidrule(lr){4-5} \cmidrule(lr){6-7} \cmidrule(lr){8-9}
 & Rel.\ $L^2$ & RMSE & Rel.\ $L^2$ & RMSE & Rel.\ $L^2$ & RMSE & Rel.\ $L^2$ & RMSE \\
\midrule
U-Net        & $1.83 \cdot 10^{-1}$ & $5.05 \cdot 10^{-2}$ & $5.23 \cdot 10^{-1}$ & $9.38 \cdot 10^{-2}$ & $7.91 \cdot 10^{-1}$ & $1.20 \cdot 10^{-1}$ & $1.07 \cdot 10^{0}$ & $1.45 \cdot 10^{-1}$ \\
Loc. Int. Diff. FNO     & $1.56 \cdot 10^{-1}$ & $3.09 \cdot 10^{-2}$ & $5.65 \cdot 10^{-1}$ & $7.88 \cdot 10^{-2}$ & $9.17 \cdot 10^{-1}$ & $1.26 \cdot 10^{-1}$ & $1.30 \cdot 10^{0}$ & $2.08 \cdot 10^{-1}$ \\
AR-Loc. Int. Diff. FNO  & $1.49 \cdot 10^{-1}$ & $3.48 \cdot 10^{-2}$ & $3.97 \cdot 10^{-1}$ & $6.77 \cdot 10^{-2}$ & $5.23 \cdot 10^{-1}$ & $8.24 \cdot 10^{-2}$ & $5.72 \cdot 10^{-1}$ & $8.58 \cdot 10^{-2}$ \\
\textbf{HyCOP (ours)}
             & $\bm{8.29 \cdot 10^{-3}}$ & $\bm{2.79 \cdot 10^{-3}}$
             & $\bm{2.12 \cdot 10^{-2}}$ & $\bm{6.66 \cdot 10^{-3}}$
             & $\bm{2.87 \cdot 10^{-2}}$ & $\bm{9.29 \cdot 10^{-3}}$
             & $\bm{3.78 \cdot 10^{-2}}$ & $\bm{1.28 \cdot 10^{-2}}$ \\
\bottomrule
\end{tabular}
}
\end{sc}
\end{small}
\end{center}
\vskip -0.1in
\end{table}

\subsubsection{Shallow Water Equations}
\label{app:2d_swe_trajectory}

\paragraph{Data generation \& HyCOP.}
We use the same 2D SWE system as in Section~\ref{app:2d_swe} with identical parameter ranges and initial-condition classes.
Trajectories are generated by solving the reference PDE from $t=0$ to $T_{\text{final}}=1.0$ with time step $\Delta t=0.05$, yielding 21 snapshots per trajectory; we generate 10{,}000 trajectories total.
HyCOP uses the same policy-network configuration and feature set as in the fixed-time experiments (Section~\ref{app:2d_swe}).
Importantly, HyCOP does not rely on autoregressive rollout: given $(u_0,\mu)$ and a target time (or set of times), the learned policy outputs a composite program that maps directly to $u(T)$ in a single evaluation.

\paragraph{Baselines.}
Loc. Int. Diff. FNO~\citep{liuschiaffini2024localfno} is trained for single-step prediction with fixed $\Delta t=0.05$.
The model takes 6 input channels ($(h_0,(hu)_0,(hv)_0)$, two spatial-grid channels, and normalized $g$) and predicts the next-step state; architecture follows Section~\ref{app:2d_swe}.
Trajectory inference is obtained via autoregressive rollout.
For AR-Loc. Int. Diff. FNO, we follow~\citet{tran2023ffno}: teacher forcing under a Markov (single-step) assumption, Gaussian noise injection ($\sigma=0.001$), cosine learning-rate decay, batch size 32, and 100 training epochs.
U-Net~\citep{ronneberger2015unet} is trained with the same autoregressive protocol, using a standard 4-level encoder--decoder with skip connections and initial channel width 16.
Poseidon-B~\citep{herde2024poseidon} is included only for SWE trajectories: we fine-tune the pretrained Poseidon-B backbone (157M parameters) on 128 in-distribution trajectories using the protocol of~\citet{herde2024poseidon}
(backbone LR $5\times 10^{-5}$, embedding/time-embedding LR $5\times 10^{-4}$, weight decay $10^{-6}$, cosine scheduler, batch size 40, gradient clipping at 5.0, early stopping with patience 200 epochs).

\begin{figure}[t]
    \centering
    \includegraphics[width=0.95\textwidth]{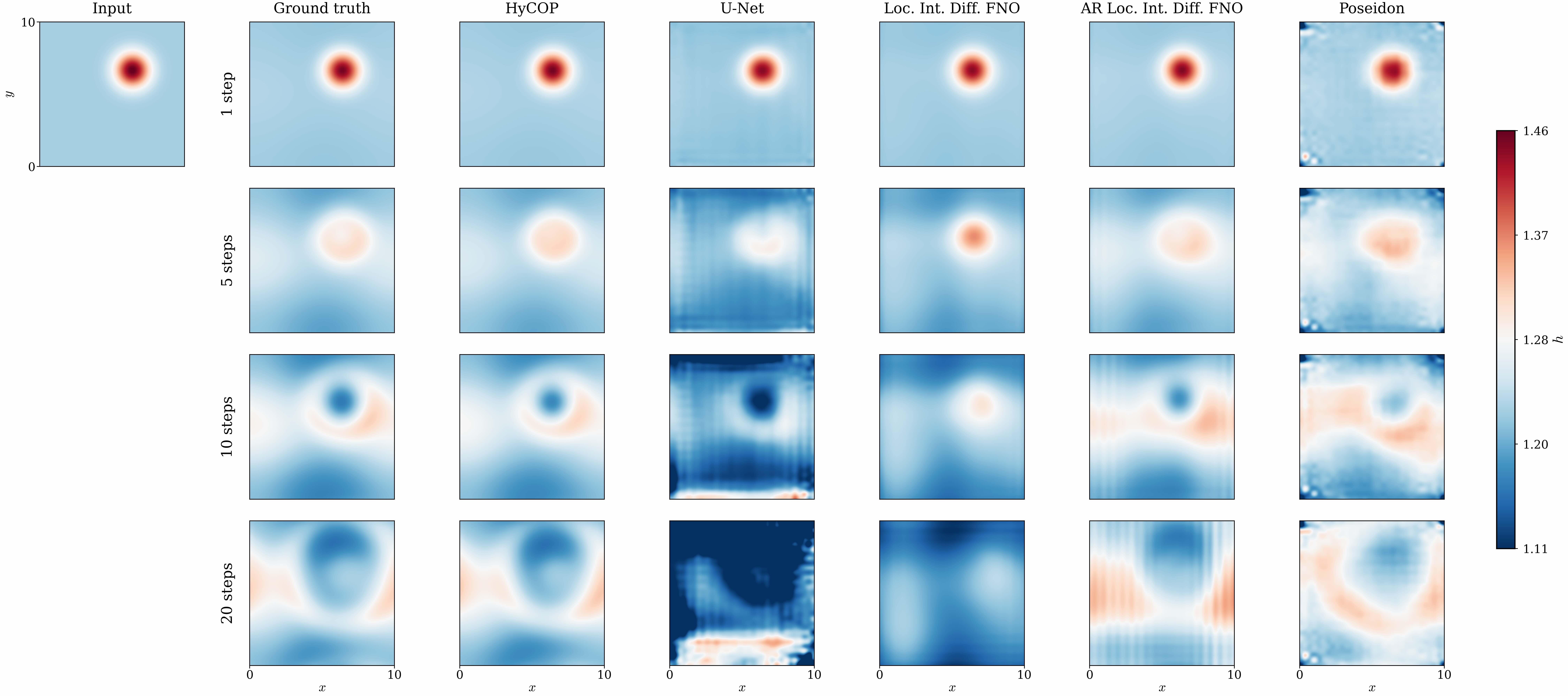}
    \caption{\textbf{2D SWE trajectory (ID).} Example long-horizon prediction at horizons of 1/5/10/20 steps for the 2D SWE system (Section~\ref{app:2d_swe_trajectory}).}
    \label{fig:2d-swe-traj-id}
\end{figure}

\begin{figure}[t]
    \centering
    \includegraphics[width=0.95\textwidth]{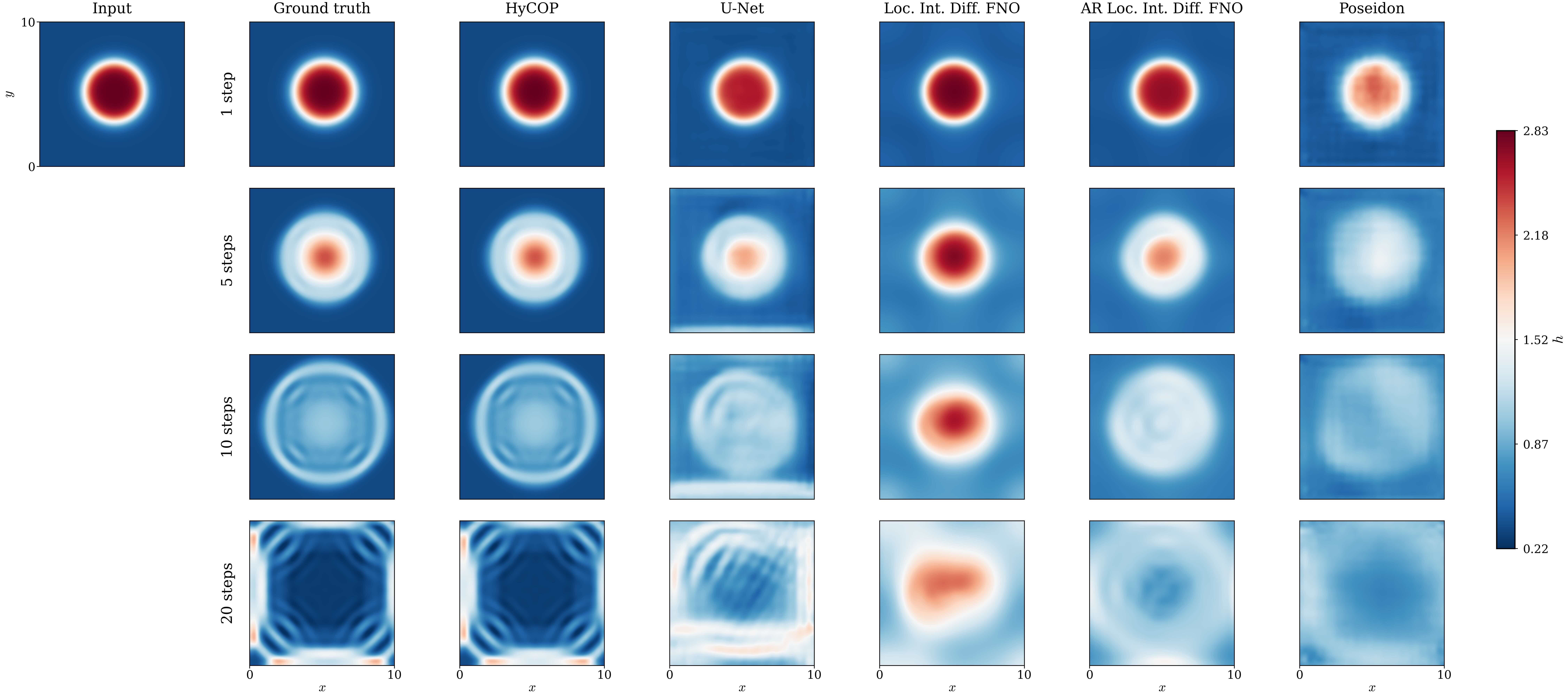}
    \caption{\textbf{2D SWE trajectory (OOD).} Same visualization as Figure~\ref{fig:2d-swe-traj-id} on OOD parameter/IC shifts (see Section~\ref{app:2d_swe}).}
    \label{fig:2d-swe-traj-ood}
\end{figure}

\begin{table}[H]
\caption{\textbf{2D SWE multi-step trajectory error (ID).} Relative $L^2$ and RMSE at horizons of 1/5/10/20 steps. Autoregressive models are rolled out; HyCOP answers each horizon via direct time querying (non-autoregressive).}
\label{tab:swe_multistep_id_app}
\vskip 0.15in
\begin{center}
\begin{small}
\begin{sc}
\resizebox{\textwidth}{!}{%
\begin{tabular}{lcccccccc}
\toprule
\multirow{2}{*}{Model} &
\multicolumn{2}{c}{1 step} & 
\multicolumn{2}{c}{5 steps} &
\multicolumn{2}{c}{10 steps} &
\multicolumn{2}{c}{20 steps} \\
\cmidrule(lr){2-3} \cmidrule(lr){4-5} \cmidrule(lr){6-7} \cmidrule(lr){8-9}
 & Rel.\ $L^2$ & RMSE & Rel.\ $L^2$ & RMSE & Rel.\ $L^2$ & RMSE & Rel.\ $L^2$ & RMSE \\
\midrule
U-Net        & $2.75 \cdot 10^{-1}$ & $2.25 \cdot 10^{-2}$ & $1.04 \cdot 10^{0}$ & $7.87 \cdot 10^{-2}$ & $2.28 \cdot 10^{0}$ & $1.46 \cdot 10^{-1}$ & $4.20 \cdot 10^{0}$ & $2.78 \cdot 10^{-1}$ \\
Loc. Int. Diff. FNO     & $9.01 \cdot 10^{-2}$ & $6.64 \cdot 10^{-3}$ & $4.95 \cdot 10^{-1}$ & $6.26 \cdot 10^{-2}$ & $1.37 \cdot 10^{0}$ & $1.22 \cdot 10^{-1}$ & $5.52 \cdot 10^{0}$ & $2.26 \cdot 10^{-1}$ \\
Poseidon     & $9.51 \cdot 10^{-2}$ & $1.13 \cdot 10^{-2}$ & $3.47 \cdot 10^{-1}$ & $2.92 \cdot 10^{-2}$ & $6.07 \cdot 10^{-1}$ & $4.91 \cdot 10^{-2}$ & $8.42 \cdot 10^{-1}$ & $7.37 \cdot 10^{-2}$ \\
AR-Loc. Int. Diff. FNO  & $6.66 \cdot 10^{-2}$ & $6.51 \cdot 10^{-3}$ & $1.72 \cdot 10^{-1}$ & $2.66 \cdot 10^{-2}$ & $2.84 \cdot 10^{-1}$ & $4.80 \cdot 10^{-2}$ & $4.86 \cdot 10^{-1}$ & $8.39 \cdot 10^{-2}$ \\
\textbf{HyCOP (ours)}
             & $\bm{7.12 \cdot 10^{-3}}$ & $\bm{8.65 \cdot 10^{-4}}$
             & $\bm{1.91 \cdot 10^{-2}}$ & $\bm{6.37 \cdot 10^{-3}}$
             & $\bm{4.06 \cdot 10^{-2}}$ & $\bm{1.03 \cdot 10^{-2}}$
             & $\bm{6.94 \cdot 10^{-2}}$ & $\bm{1.26 \cdot 10^{-2}}$ \\
\bottomrule
\end{tabular}
}
\end{sc}
\end{small}
\end{center}
\vskip -0.1in
\end{table}

\begin{table}[H]
\caption{\textbf{2D SWE multi-step trajectory error (OOD).} Same protocol as Table~\ref{tab:swe_multistep_id_app} evaluated on OOD parameter/IC shifts.}
\label{tab:swe_multistep_ood_app}
\vskip 0.15in
\begin{center}
\begin{small}
\begin{sc}
\resizebox{\textwidth}{!}{%
\begin{tabular}{lcccccccc}
\toprule
\multirow{2}{*}{Model} &
\multicolumn{2}{c}{1 step} & 
\multicolumn{2}{c}{5 steps} &
\multicolumn{2}{c}{10 steps} &
\multicolumn{2}{c}{20 steps} \\
\cmidrule(lr){2-3} \cmidrule(lr){4-5} \cmidrule(lr){6-7} \cmidrule(lr){8-9}
 & Rel.\ $L^2$ & RMSE & Rel.\ $L^2$ & RMSE & Rel.\ $L^2$ & RMSE & Rel.\ $L^2$ & RMSE \\
\midrule
U-Net        & $2.59 \cdot 10^{-1}$ & $2.16 \cdot 10^{-1}$ & $5.92 \cdot 10^{-1}$ & $5.35 \cdot 10^{-1}$ & $9.21 \cdot 10^{-1}$ & $6.73 \cdot 10^{-1}$ & $1.20 \cdot 10^{0}$ & $8.51 \cdot 10^{-1}$ \\
Loc. Int. Diff. FNO     & $3.10 \cdot 10^{-1}$ & $1.55 \cdot 10^{-1}$ & $6.93 \cdot 10^{-1}$ & $5.03 \cdot 10^{-1}$ & $1.32 \cdot 10^{0}$ & $6.93 \cdot 10^{-1}$ & $1.44 \cdot 10^{0}$ & $9.40 \cdot 10^{-1}$ \\
Poseidon     & $4.01 \cdot 10^{-1}$ & $3.37 \cdot 10^{-1}$ & $6.54 \cdot 10^{-1}$ & $6.14 \cdot 10^{-1}$ & $7.78 \cdot 10^{-1}$ & $6.96 \cdot 10^{-1}$ & $9.35 \cdot 10^{-1}$ & $8.01 \cdot 10^{-1}$ \\
AR-Loc. Int. Diff. FNO  & $2.72 \cdot 10^{-1}$ & $1.49 \cdot 10^{-1}$ & $5.33 \cdot 10^{-1}$ & $4.34 \cdot 10^{-1}$ & $7.89 \cdot 10^{-1}$ & $5.23 \cdot 10^{-1}$ & $7.55 \cdot 10^{-1}$ & $6.23 \cdot 10^{-1}$ \\
\textbf{HyCOP (ours)}
             & $\bm{1.01 \cdot 10^{-2}}$ & $\bm{4.84 \cdot 10^{-3}}$
             & $\bm{4.54 \cdot 10^{-2}}$ & $\bm{2.83 \cdot 10^{-2}}$
             & $\bm{8.30 \cdot 10^{-2}}$ & $\bm{4.20 \cdot 10^{-2}}$
             & $\bm{1.21 \cdot 10^{-1}}$ & $\bm{7.80 \cdot 10^{-2}}$ \\
\bottomrule
\end{tabular}
}
\end{sc}
\end{small}
\end{center}
\vskip -0.1in
\end{table}

\paragraph{Dam Break Transfer} \label{app:dam_swemnics}

For the SWE test cases, the trusted solver for data generation is SWEMniCS~\cite{dawson2024swemnics}. SWEMniCS solves the 2D SWE using finite element methods in space, and implicit finite differences in time. The solver is based on the FEniCS  framework~\cite{baratta2023dolfinx} and includes a test suite of predefined physically relevant cases.

\subsection{AD$\to$ADR adaptation: three configurations}
\label{app:ad_adr}

This experiment tests whether HyCOP can adapt when the target data contain physics absent during pretraining.
We consider transfer from advection--diffusion (AD) to advection--diffusion--reaction (ADR), where the reaction mechanism is either unknown (Path (a)) or known but lives alongside a pretrained AD surrogate (Path (b)).
We additionally include a fully-learned ablation that replaces every numerical primitive with a per-process FNO-FiLM surrogate.

\paragraph{Source and target systems.}
The source AD system is:
\begin{equation}
    \frac{\partial u}{\partial t} + c_x \frac{\partial u}{\partial x} + c_y \frac{\partial u}{\partial y}
    = D_x \frac{\partial^2 u}{\partial x^2} + D_y \frac{\partial^2 u}{\partial y^2},
\end{equation}
and the target ADR system $\mathcal{P}^{\text{target}}$ augments AD with an additional term:
\begin{equation}
    \frac{\partial u}{\partial t} + c_x \frac{\partial u}{\partial x} + c_y \frac{\partial u}{\partial y}
    = D_x \frac{\partial^2 u}{\partial x^2} + D_y \frac{\partial^2 u}{\partial y^2} + \mathcal{G}(u),
\end{equation}
where in our benchmark $\mathcal{G}(u)=ru(1-u)$. In Path (a), $\mathcal{G}$ is treated as unknown and learned only through black-box simulator queries; in Path (b), $\mathcal{G}$ is known and instantiated as a textbook numerical reaction solver. Parameter ranges and OOD splits follow Section~\ref{app:2d_adr}.

\subsubsection{Path (a): Compose--diagnose--enrich with a learned residual}
\label{app:ad_adr:path_a}

\paragraph{Pretraining on AD.}
We pretrain HyCOP, Loc.\ Int.\ Diff.\ FNO, and U-Net on 10{,}000 AD pair-data samples with
$c_x,c_y\in[0.2,1.5]$, $D_x,D_y\in[0.05,0.2]$, and target time $T=0.2$.
Training configurations follow Sections~\ref{app:2d_adr} and~\ref{app:2d_adr_trajectory}.
HyCOP learns a split program over two primitives corresponding to advection ($\mathcal{A}$) and diffusion ($\mathcal{D}$).

\begin{table}[H]
\caption{Zero-Shot Pretrained AD Inference Performance on ADR}
\label{table:adr-zeroshot}
\vskip 0.15in
\begin{center}
\begin{small}
\begin{sc}
\resizebox{0.5\textwidth}{!}{%
\begin{tabular}{lccc}
\toprule
Model & Rel. $L^2$ & RMSE & Max Error \\
\midrule
U-Net            & $1.81 \cdot 10^{-1}$      & $\bm{7.41 \cdot 10^{-2}}$ & $1.76 \cdot 10^{-1}$ \\
Loc. Int. Diff. FNO         & $1.86 \cdot 10^{-1}$      & $7.61 \cdot 10^{-2}$ & $1.60 \cdot 10^{-1}$ \\
HyCOP (pretrain) & $\bm{1.81 \cdot 10^{-1}}$ & $7.42 \cdot 10^{-2}$ & $\bm{1.05 \cdot 10^{-1}}$ \\
\bottomrule
\end{tabular}
}
\end{sc}
\end{small}
\end{center}
\vskip -0.1in
\end{table}

\paragraph{Dictionary enrichment via a black-box residual operator.}
To enable transfer, we augment the AD dictionary with a third primitive intended to capture the \emph{missing} physics.
We assume access to queries of both a simplified simulator (AD) and a target simulator (ADR) from the \emph{same} input state.
Using a small step size $\Delta t=0.01$, we estimate the instantaneous residual rate via a first-order finite difference:
\begin{equation}
    \mathcal{R}(u)\;\approx\;\frac{u_{\mathrm{ADR}}(\Delta t;u)-u_{\mathrm{AD}}(\Delta t;u)}{\Delta t}.
\end{equation}
For sufficiently small $\Delta t$, this approximates the additive drift contribution of the unknown term $\mathcal{G}(u)$.

We train a U-NO~\citep{rahman2023uno} to learn the map $u \mapsto \mathcal{R}(u)$ from state $u$ alone (single input channel),
using states sampled from 120 ADR trajectories.
The U-NO uses hidden width 32, 4 Fourier modes, and 4 layers.
After training, the U-NO parameters are frozen and treated as a residual primitive $\mathcal{R}_{\mathrm{UNO}}$.

\paragraph{HyCOP adaptation with a frozen residual primitive.}
We extend the HyCOP policy from two operators $(\mathcal{A},\mathcal{D})$ to three operators $(\mathcal{A},\mathcal{D},\mathcal{R}_{\mathrm{UNO}})$
by replacing sigmoid selection with a 3-way softmax.
We initialize the feature extractor, time-ratio head, and length head from AD pretraining, and adapt only the policy parameters.
Fine-tuning uses 120 ADR samples with ES (population size 50, noise std.\ 0.03, learning rate 0.005, weight decay 0.001) for 20 generations.
We evaluate (i) zero-shot AD-pretrained models and (ii) enriched-and-adapted HyCOP on 120 held-out ADR test samples.

\begin{figure}[t]
  \centering
\begin{tikzpicture}[
  scale=0.95,
  every node/.style={transform shape} 
]

\def\wTop{2.05cm}
\def\hTop{1.60cm}
\def\wRow{2.05cm}
\def\hRow{1.60cm}
\def\wBot{2.35cm}
\def\hBot{1.80cm}

\node[ptitle] (sec_a) at (0,0) {Input ($u$)};
\node[panel, below=1pt of sec_a] (input) {
  \includegraphics[width=\wTop,height=\hTop]{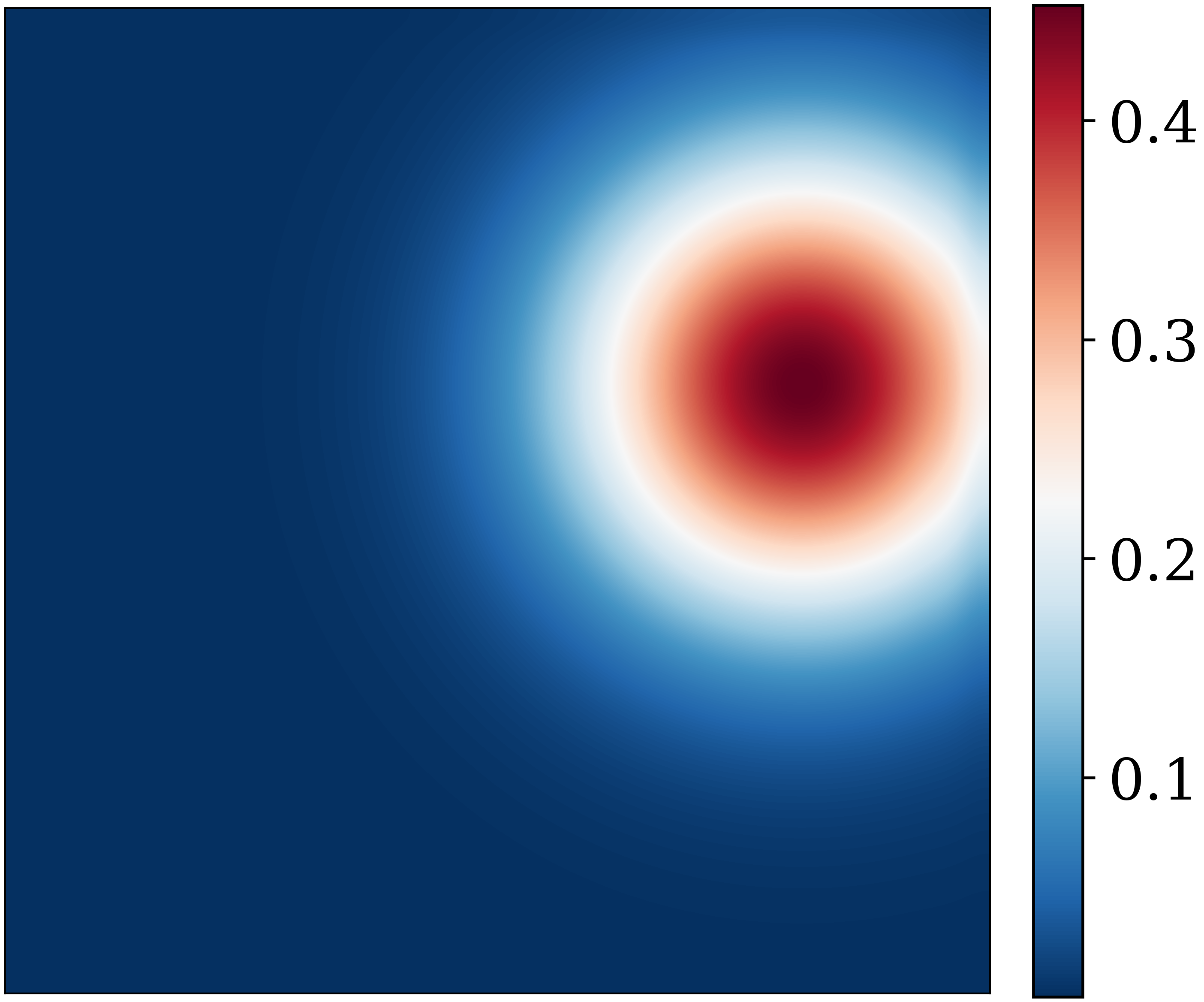}
};

\node[panel] (gt) at (-2.55,-3.65) {
  \includegraphics[width=\wRow,height=\hRow]{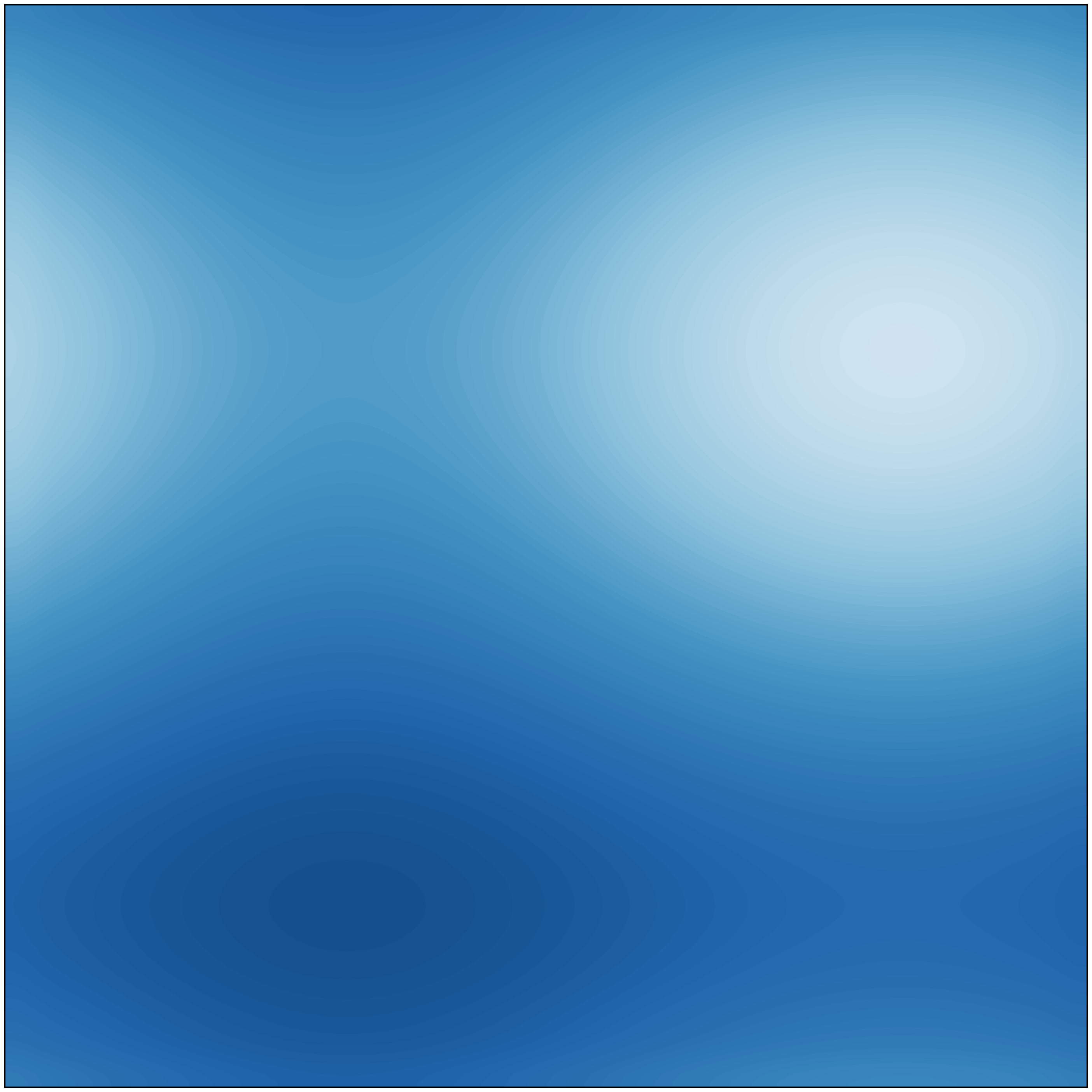}
};
\node[ptitle, above=2pt of gt] {Reference};

\node[panel] (pt) at (0,-3.65) {
  \includegraphics[width=\wRow,height=\hRow]{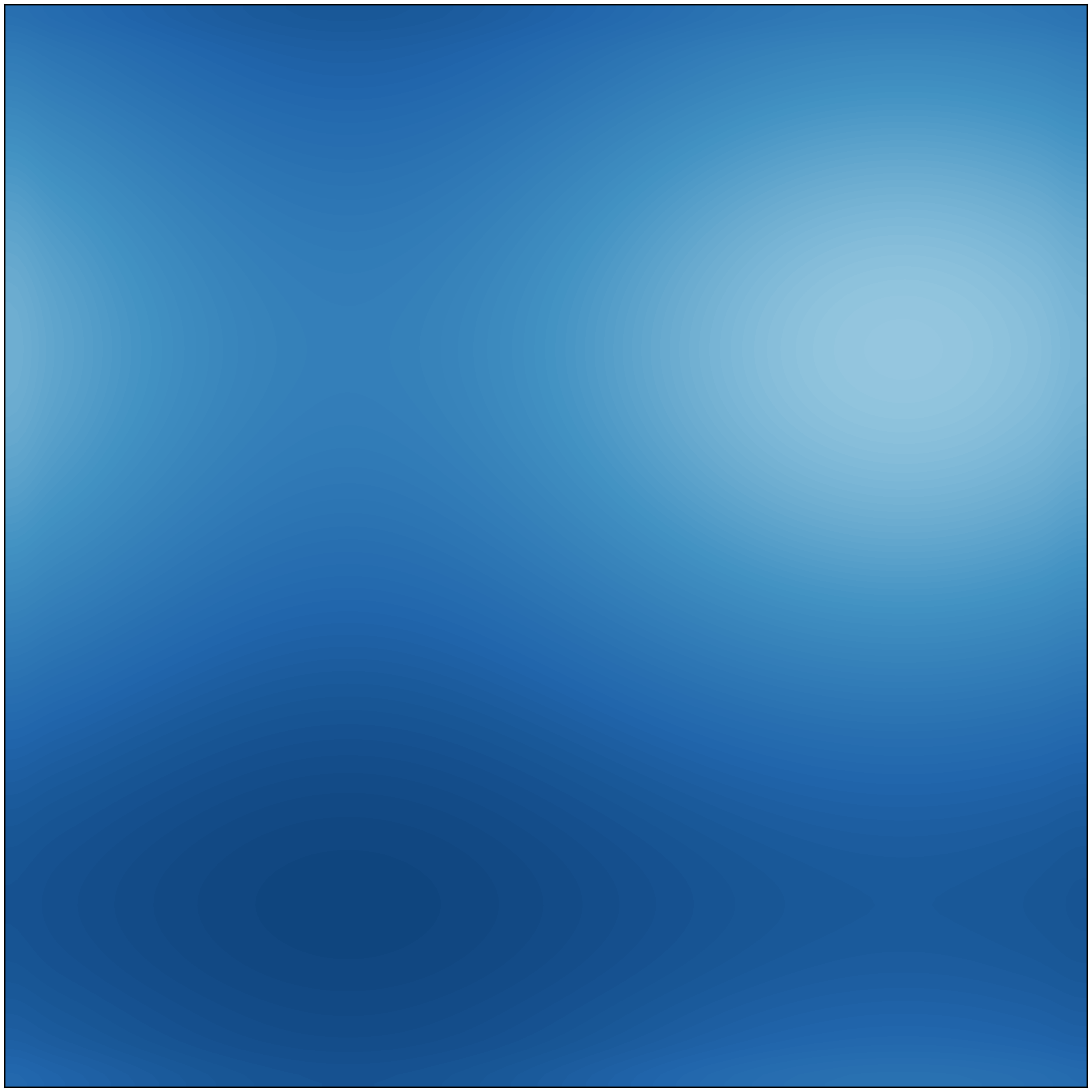}
};
\node[ptitle, above=2pt of pt] {HyCOP (pretrain AD)};
\node[psub, below=1pt of pt] {$L^2$ err $= 0.253$};

\node[panel] (ft) at (2.55,-3.65) {
  \includegraphics[width=\wRow,height=\hRow]{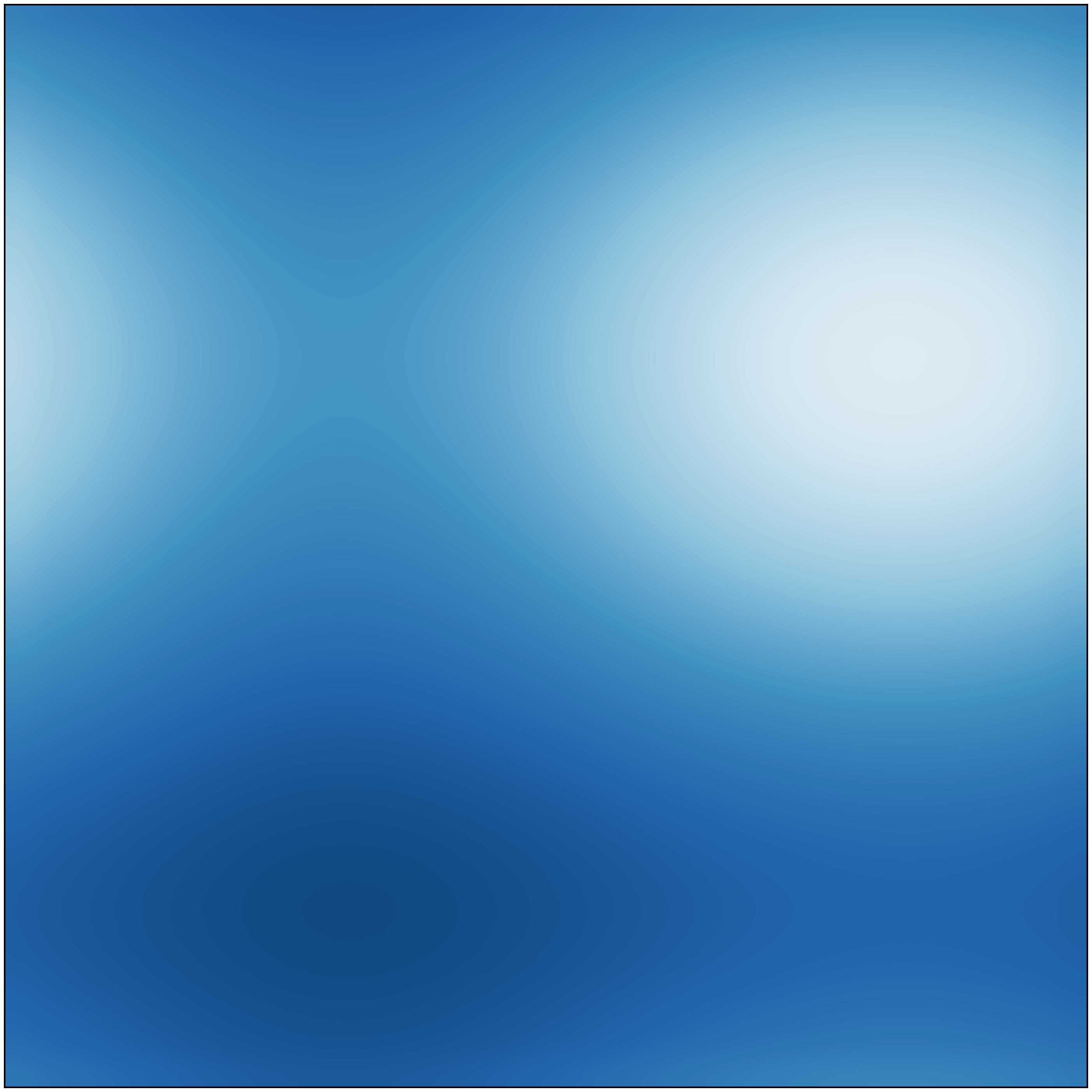}
};
\node[ptitle, above=2pt of ft] {HyCOP (+ residual)};
\node[psub, below=1pt of ft] {$L^2$ err $= 0.061$};

\node[panel] (reaction) at (-1.55,-6.65) {
  \includegraphics[width=\wBot,height=\hBot]{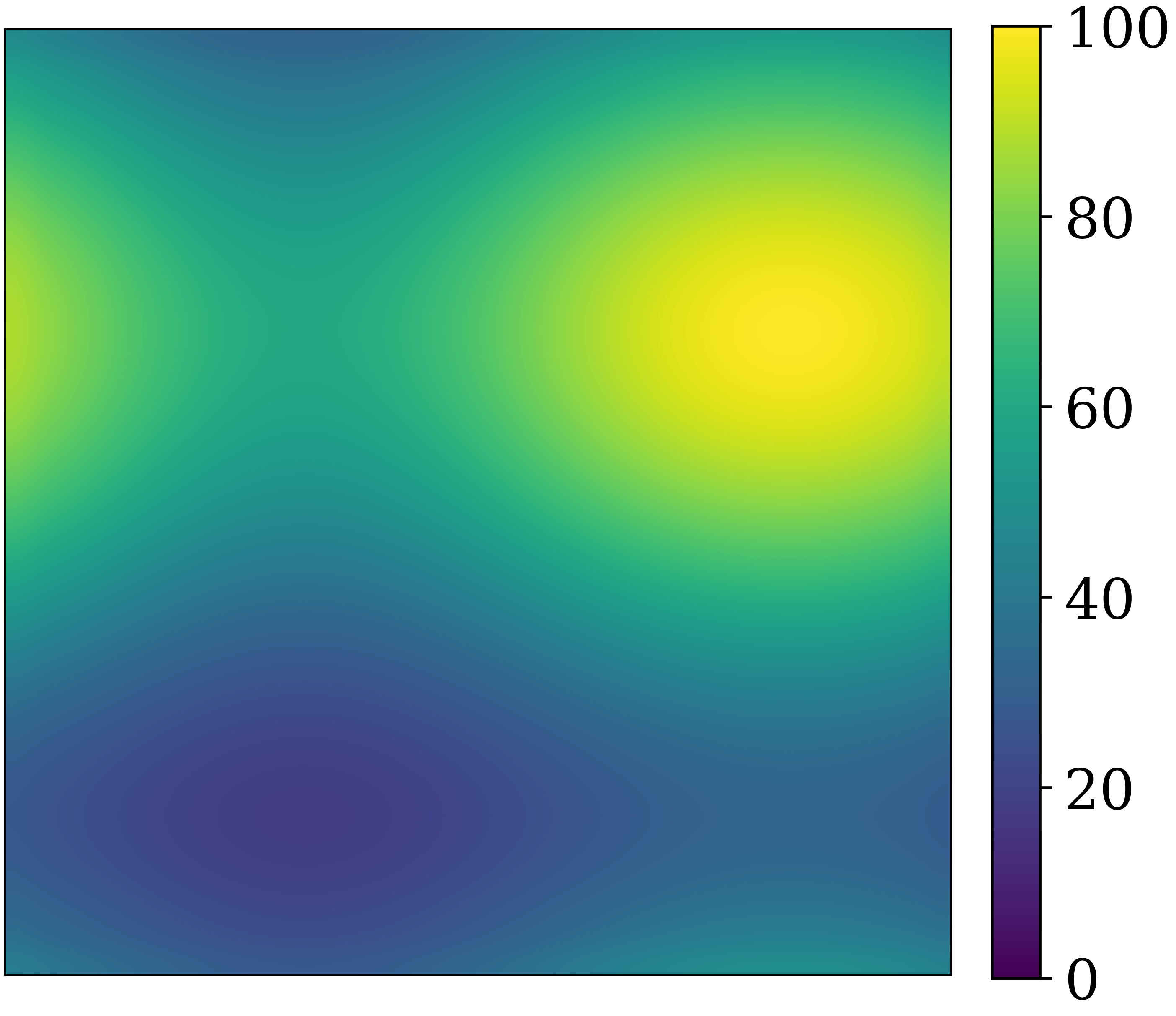}
};
\node[ptitle, above=2pt of reaction] (reaction_t) {Reaction intensity (\%)};

\node[panel] (delta) at (1.55,-6.65) {
  \includegraphics[width=\wBot,height=\hBot]{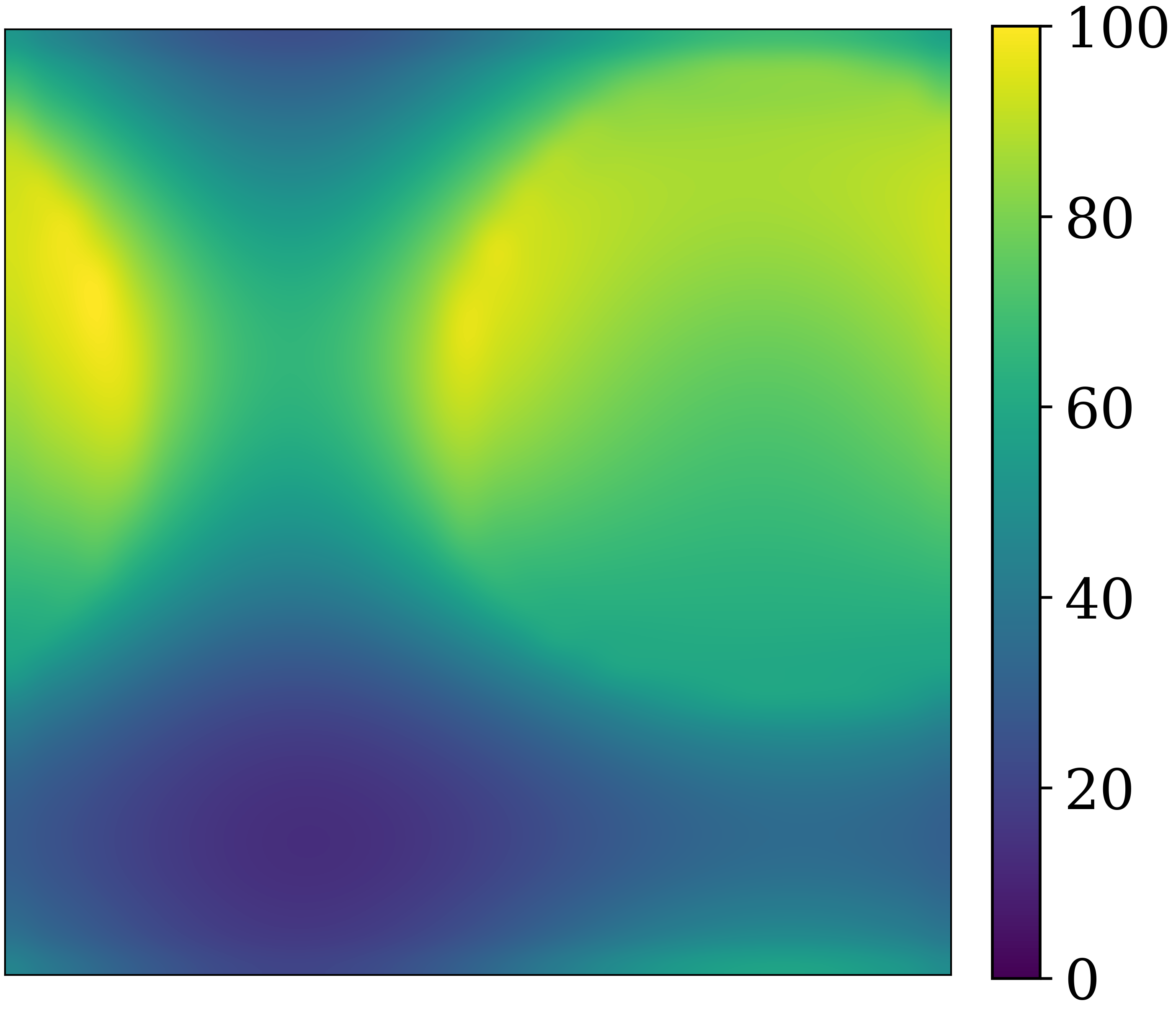}
};
\node[ptitle, above=2pt of delta] (delta_t) {Error reduction (\%)};

\begin{scope}[on background layer]
  \node[draw=black!35, line width=0.5pt, rounded corners=3pt,
        fit=(reaction)(delta)(reaction_t)(delta_t), inner sep=2pt] {};
\end{scope}

\draw[flowsmall] (input.south) -- ++(0,-0.28);

\draw[brace] (-2.75,-4.95) -- (-0.50,-4.95);
\draw[brace] (0.50,-4.95) -- (2.75,-4.95);

\draw[flowsmall] (-0.20,-6.48) -- (0.20,-6.48);

\end{tikzpicture}
\vspace{-0.4em} 
\caption{\textbf{AD$\rightarrow$ADR via dictionary enrichment (single illustrative sample).}
HyCOP pretrained on advection--diffusion (AD) misses reaction-driven changes in ADR. Adding a residual primitive and relearning the policy localizes corrections to regions with high reaction intensity, reducing error. Per-sample $L^2$ values shown; test-set averages are reported in Table~\ref{tab:regime_b}.}
\label{fig:adr_residual}
\end{figure}

\subsubsection{Path (b): Hybrid dictionary with a pretrained FNO-FiLM surrogate}
\label{app:ad_adr:path_b}

Path (b) models the scenario where a lab already possesses a pretrained surrogate for one sub-process and knows the missing sub-process analytically. HyCOP composes a learned primitive with a numerical primitive through a small policy---neither is retrained.

\paragraph{FNO-FiLM primitives.}
\label{app:fnofilm_details}
All learned primitives in this paper use the same FNO backbone as our FNO baseline (identical width, depth, and Fourier-mode count; see Section~\ref{app:common_setup} for the per-benchmark dimensions), augmented with FiLM conditioning~\citep{perez2018film} on the query time $\tau$. FiLM applies a feature-wise affine modulation $\gamma(\tau) \odot h + \beta(\tau)$ at each layer, where $\gamma, \beta$ are produced by a small 2-layer MLP (hidden size 64) from the scalar $\tau$. This lets a single FNO answer time-queried forward passes $\widehat{\Phi}^{(i)}_\tau(u)$ for a continuous range of durations $\tau$ rather than being tied to a fixed step, which is what HyCOP's variable-duration programs require.

\paragraph{Pretraining the FNO-FiLM AD surrogate.}
We pretrain a single FNO-FiLM on AD data with parameters $c_x,c_y\in[0.2,1.5]$, $D_x,D_y\in[0.05,0.2]$ and variable target time $T\in[0.02,0.2]$ (10{,}000 trajectories).
Training uses the common setup optimizer (Section~\ref{app:common_setup}). After pretraining, all FNO-FiLM parameters are frozen; only the policy is learned downstream.

\paragraph{Reaction primitive.}
The numerical reaction primitive implements $\mathcal{O}_{\mathrm{react}}^{\mathrm{num}}(u;\tau)$ by solving $\dot v = rv(1-v)$ pointwise for duration $\tau$ via an explicit RK4 step. It has zero learnable parameters.

\paragraph{HyCOP-Hyb adaptation.}
We train a 2-operator policy over the hybrid dictionary $\{\text{FNO-AD}, \mathcal{O}_{\mathrm{react}}^{\mathrm{num}}\}$ on 120 ADR samples from scratch (no AD policy warm-start, since the dictionary differs structurally from Path (a)). ES settings match Path (a): population size 50, noise std.\ 0.03, learning rate 0.005, weight decay 0.001, 20 generations. Neither the FNO-FiLM nor the reaction primitive is updated; only the ${\sim}$50-parameter policy is learned. Evaluation uses the same 120 held-out ADR test samples as Path (a).

\subsubsection{Ablation: fully learned primitive dictionary}
\label{app:ad_adr:fully_learned}

To stress-test the primitive-error term of the error decomposition (\S\ref{sec:theory}), we replace every numerical primitive with a per-process FNO-FiLM surrogate:
\[
\mathbb{D}_{\mathrm{Learned}}=\{\text{FNO-Adv}, \text{FNO-Diff}, \text{FNO-React}\}.
\]

\paragraph{Per-process pretraining.}
Each primitive is a separate FNO-FiLM (same backbone as in~\ref{app:fnofilm_details}) trained on single-process data for its own mechanism:
\begin{itemize}
  \item \textbf{FNO-Adv} on pure advection trajectories ($D_x{=}D_y{=}0$, $r{=}0$), $c_x,c_y\in[0.2,1.5]$.
  \item \textbf{FNO-Diff} on pure diffusion trajectories ($c_x{=}c_y{=}0$, $r{=}0$), $D_x,D_y\in[0.05,0.2]$.
  \item \textbf{FNO-React} on pure reaction trajectories ($c_x{=}c_y{=}D_x{=}D_y{=}0$), $r\in[0.1,1.0]$.
\end{itemize}
Each surrogate uses 10{,}000 single-process trajectories with variable target time $T\in[0.02,0.2]$ and the common-setup optimizer (Section~\ref{app:common_setup}). All three surrogates are frozen after pretraining.

\paragraph{HyCOP-Learned adaptation.}
We train a 3-operator policy over $\mathbb{D}_{\mathrm{Learned}}$ on the same 120 ADR samples used for Path (a) and Path (b), with identical ES settings. Only the ${\sim}$73-parameter policy is learned. This isolates the primitive-error term: the policy space is identical to HyCOP on ADR, so any gap traces to primitive quality, not to compositional structure.

\subsection{HyCOP on chaotic/multiscale PDEs}
\label{app:1d_chaos}
We use KS as a stress test to verify that the learned splitting policy remains stable and reproduces correct long-time statistics under simultaneous domain and horizon shift. KS exhibits spatiotemporal chaos whose intensity grows with domain width $W$ (more unstable modes); the stability constants in Definition~\ref{def:stable_split} depend on $W$ and uniform bounds may not hold across all domain sizes. Moreover, chaotic dynamics cause pointwise errors to grow exponentially (Lyapunov divergence), making the error bounds in Theorem~\ref{thm:fitting} uninformative at long times. We include KS as an empirical validation that HyCOP's learned splitting remains effective in this regime.

\subsubsection{Kuramoto--Sivashinsky (1D)}
\label{app:ks}
We consider KS on $x\in[0,2\pi W]$:
\[
u_t + uu_x + u_{xx} + u_{xxxx}=0,
\]
which exhibits spatiotemporal chaos for large $W$.

\paragraph{Data generation.}
We generate 10{,}000 trajectories with an ETDRK4 pseudospectral solver (Kassam--Trefethen contour integration), using $N=128$ and $\Delta t=0.02$.
Training samples $W\sim\mathrm{Unif}[24,40]$ and query times $T\sim\mathrm{Unif}[5,8]$.
Initial conditions are (i) two-mode sinusoids with random phases/amplitudes and (ii) low-frequency random Fourier series; all are mean-subtracted.

\paragraph{Dictionary.}
We use the canonical split
\[
\mathbb{D}_{\mathrm{KS}}=\{\mathcal{O}_{\mathrm{lin}},\mathcal{O}_{\mathrm{nl}}\},\quad
\mathcal{O}_{\mathrm{lin}}(u)=-u_{xx}-u_{xxxx},\quad
\mathcal{O}_{\mathrm{nl}}(u)=-\tfrac12(u^2)_x.
\]
$\mathcal{O}_{\mathrm{lin}}$ is exact in Fourier space via $\hat{u}(t+\Delta t)=e^{(k^2-k^4)\Delta t}\hat{u}(t)$; $\mathcal{O}_{\mathrm{nl}}$ uses SSPRK3 with $3/2$-rule dealiasing and CFL-based substepping.

\paragraph{Metrics and OOD.}
Pointwise errors diverge in chaotic systems, so we evaluate attractor-level statistics: spectrum error (SE) in the time-averaged log-energy spectrum and KL divergence between long-time state distributions (excluding the first 10\% as transient).
OOD shifts extrapolate domain and horizon: $W\in[40,50]$, $T\in[8,20]$, and the combined OOD-WT setting.
HyCOP achieves low SE and KL across shifts (Table~\ref{tab:ablations}, panel (e), and Figures~\ref{fig:1d-ks-id}--\ref{fig:1d_ks_ood_wt}), indicating stable learned splitting that preserves the chaotic attractor under individual and combined domain/horizon extrapolation.

\begin{figure}[t]
    \centering
    \includegraphics[width=0.6\textwidth]{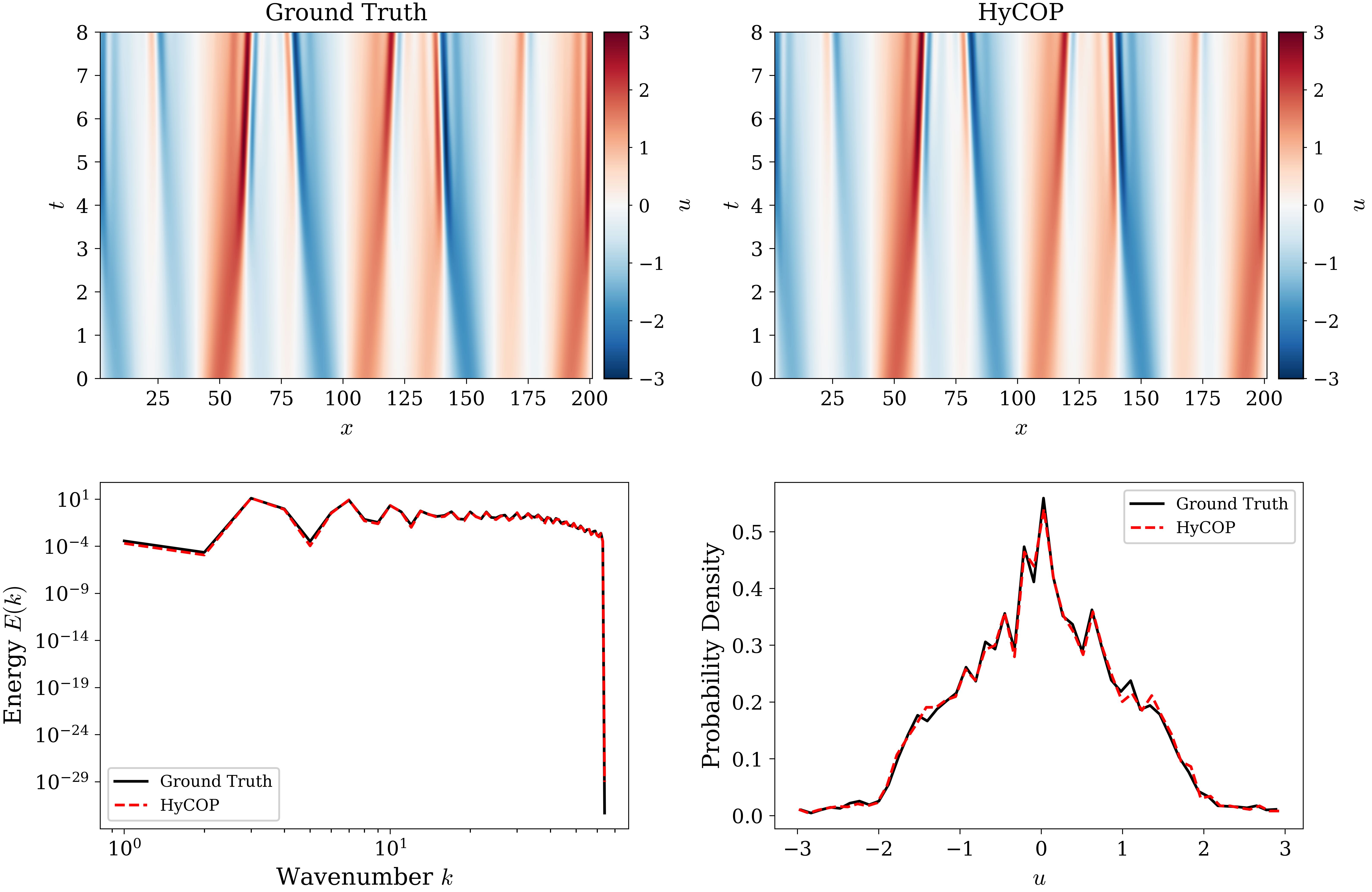}
    \caption{\textbf{1D KS (ID).} HyCOP reproduces the spatiotemporal structure (top), energy spectrum (bottom-left), and long-time state distribution (bottom-right) within the training regime ($W\in[24,40]$, $T\in[5,8]$).}
    \label{fig:1d-ks-id}
\end{figure}

\begin{figure}[t]
    \centering
    \includegraphics[width=0.6\textwidth]{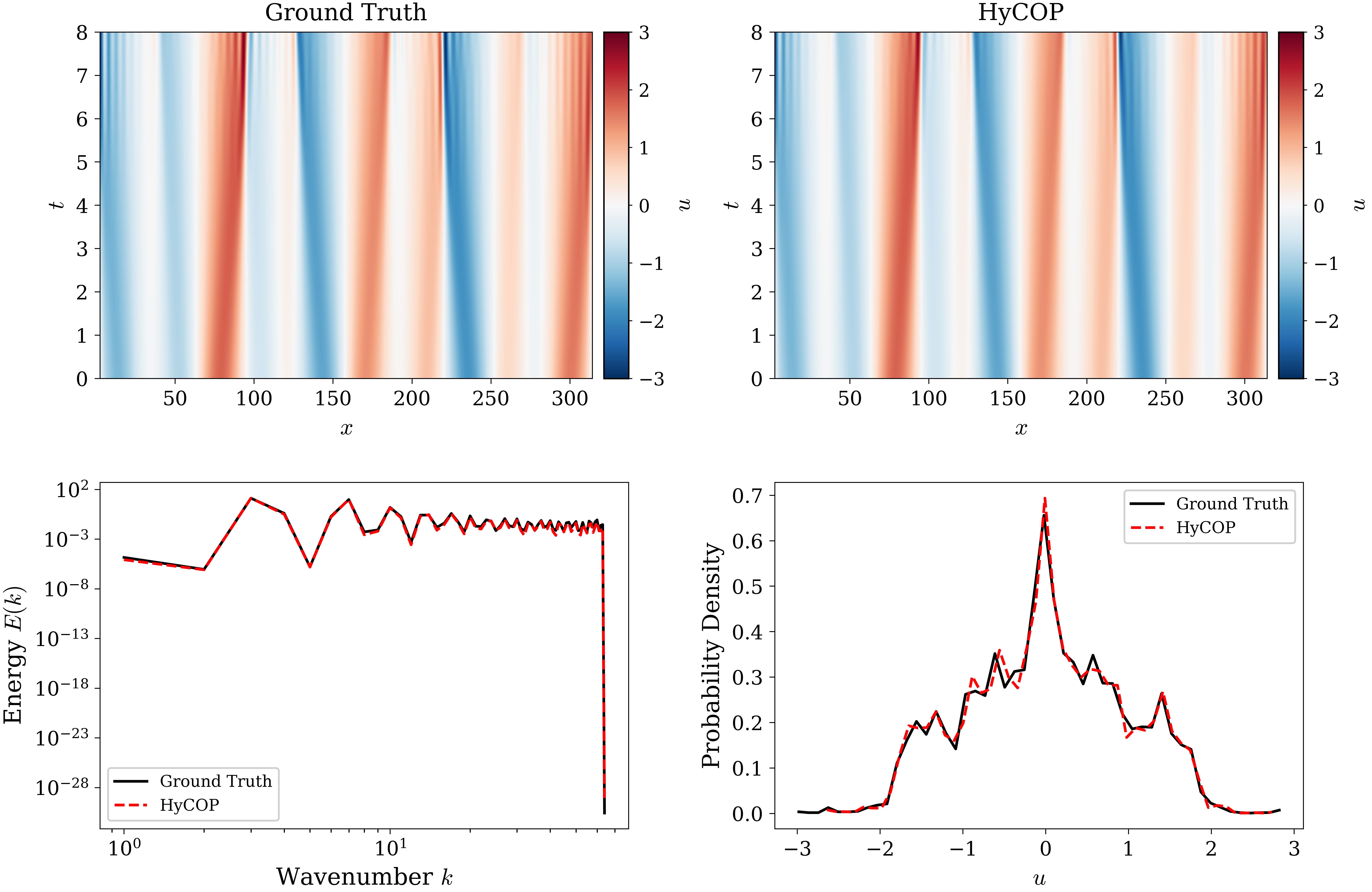}
    \caption{\textbf{1D KS (OOD: domain shift).} HyCOP generalizes to $W=50$ ($1.25\times$ training maximum) while preserving attractor statistics.}
    \label{fig:1d_ks_ood_w}
\end{figure}

\begin{figure}[t]
    \centering
    \includegraphics[width=0.6\textwidth]{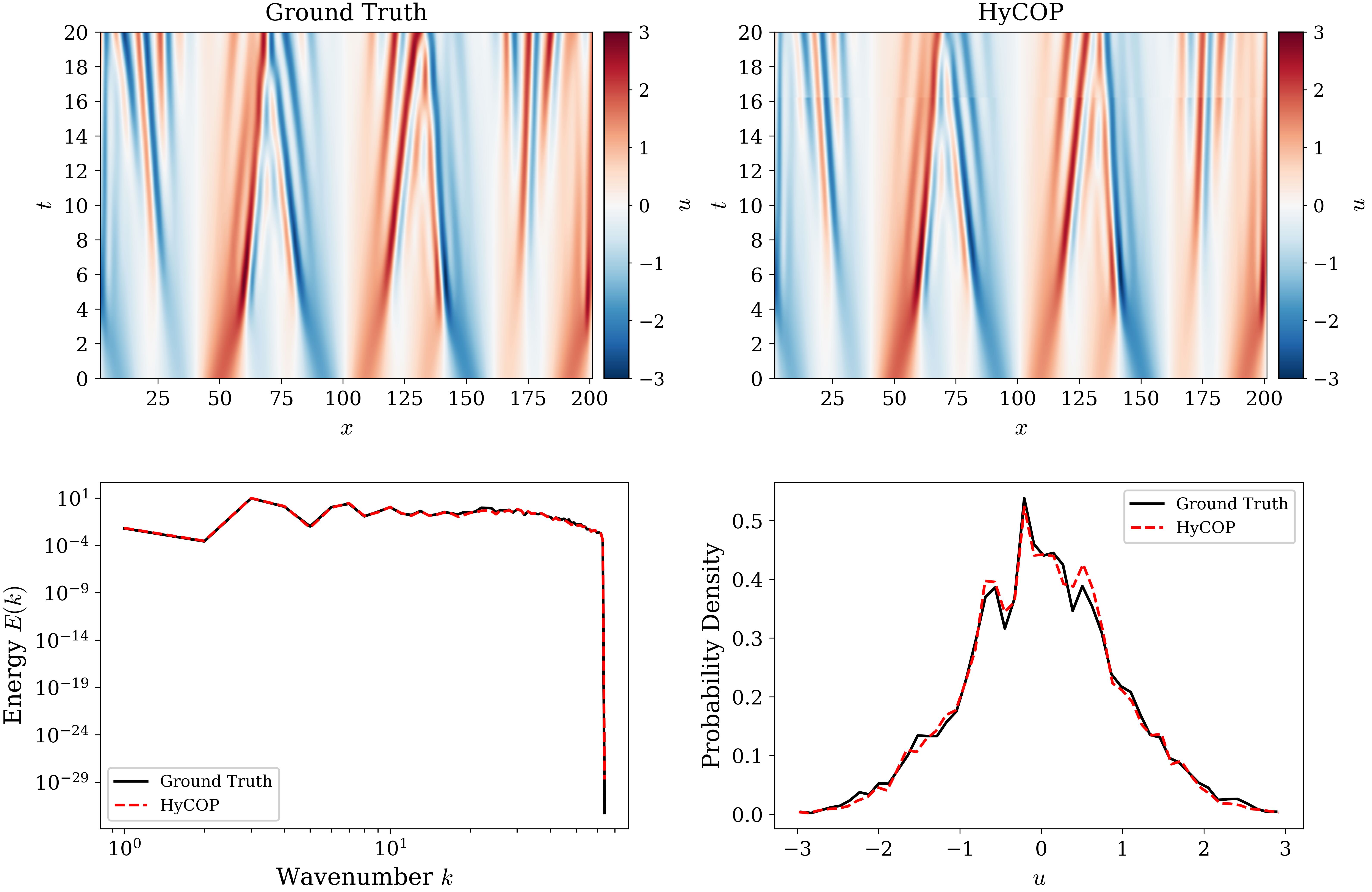}
    \caption{\textbf{1D KS (OOD: horizon shift).} HyCOP generalizes to $T=20$ ($2.5\times$ training maximum) while maintaining stable energy spectrum and state distribution.}
    \label{fig:1d_ks_ood_t}
\end{figure}

\begin{figure}[t]
    \centering
    \includegraphics[width=0.6\textwidth]{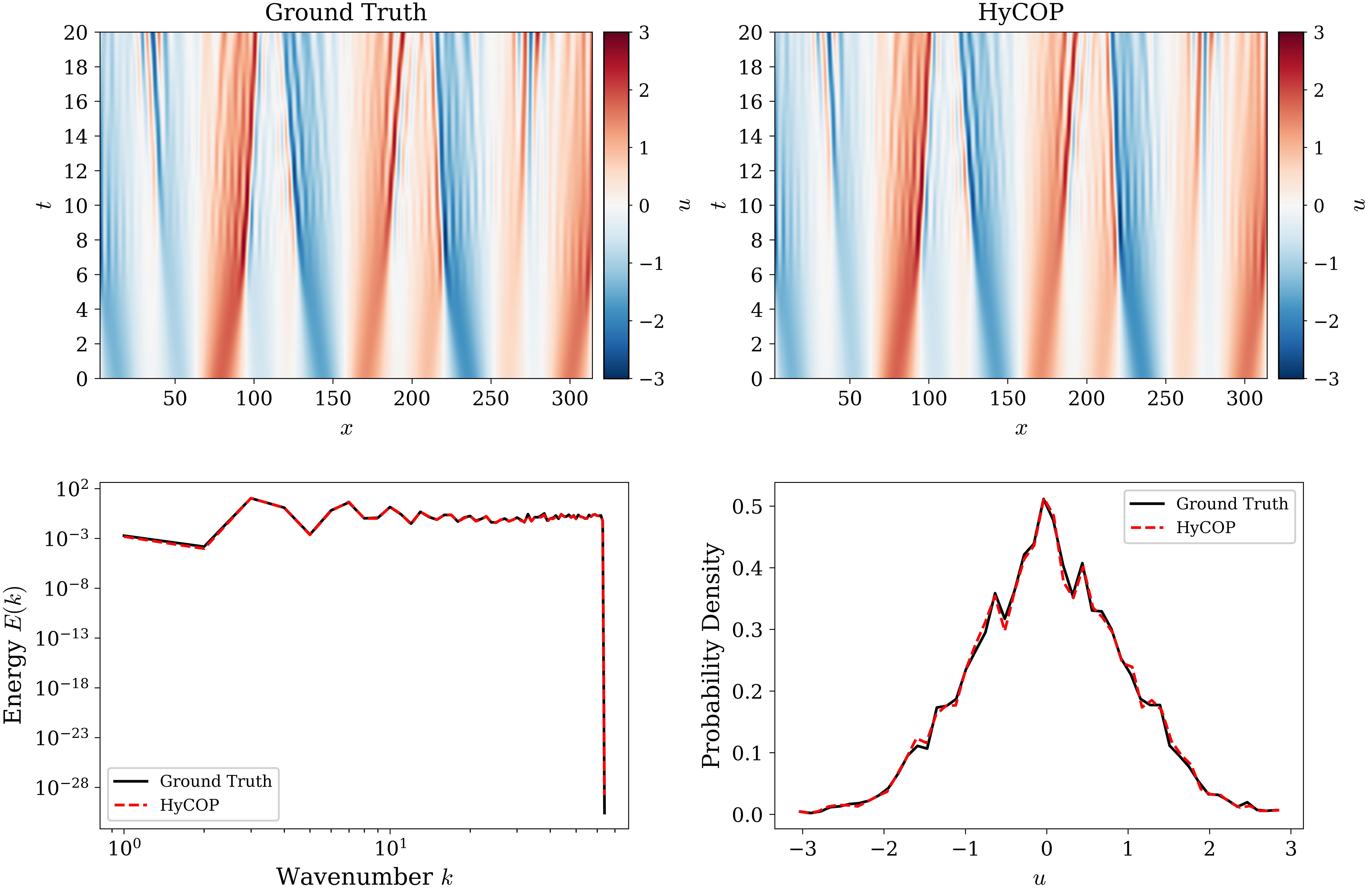}
    \caption{\textbf{1D KS (OOD: combined domain--horizon shift).} HyCOP generalizes to the joint extrapolation ($W=50$, $T=20$) while preserving attractor statistics: spatiotemporal structure (top), energy spectrum (bottom-left), and long-time state distribution (bottom-right). Pointwise errors are uninformative under Lyapunov divergence; HyCOP matches the reference solver at the level of invariant statistics.}
    \label{fig:1d_ks_ood_wt}
\end{figure}


\end{document}